\documentclass{CSML}

\def\dOi{11(4:14)2015}
\lmcsheading%
{\dOi}
{1--27}
{}
{}
{Dec.~23, 2014}
{Dec.~22, 2015}
{}

\ACMCCS{[{\bf Information systems}]: Data management systems---Query languages\,/\,
  Database management system engines---Database query processing}
\subjclass{[Database Management]: Systems -- Query processing; Languages -- Query languages.}

\usepackage{amsmath,amsfonts,amsthm,hyperref}
\usepackage{xspace}
\usepackage{tikz}
\usetikzlibrary{decorations.pathreplacing}

\newtheorem{theorem}{Theorem}

\newtheorem{claim}[theorem]{Claim}
\newtheorem{myexample}[theorem]{Example}

\def\ac0{u{\sc AC$^0$}}
\def\sac1{u{\sc SAC$^1$}}
\def\nc1{u{\sc NC$^1$}}
\def\tc0{u{\sc TC$^0$}}

\def\ptime{{\sc PTime}\xspace}

\def\np{{\sc NP}\xspace}
\def\conp{{\sc coNP}\xspace}

\def\expspace{{\sc ExpSpace}\xspace}
\def\twoexpspace{{\sc 2ExpSpace}\xspace}

\newcommand{\pv}{\boldsymbol ;}
\newcommand{\po}{\boldsymbol (}
\newcommand{\pf}{\boldsymbol )}

\newcommand\set[1]{\{#1\}}
\newcommand\size[1]{\ensuremath{|#1|}}

\newcommand{\V}{\textbf{V}\xspace}
\newcommand{\D}{\text{D}\xspace}
\newcommand{\R}{\text{R}\xspace}
\renewcommand{\S}{\text{E}\xspace}
\newcommand{\Q}{\text{Q}\xspace}
\newcommand{\Qlk}{Q_{l,k}}
\renewcommand\path{\text{Path}}
\newcommand{\RPQ}{RPQ\xspace}
\newcommand{\RPQs}{{\RPQ}s\xspace}
\newcommand{\CRPQ}{CRPQ\xspace}

\newcommand{\CQ}{CQ\xspace}
\newcommand{\Datalog}{Datalog\xspace}
\newcommand{\Dataloglk}{Datalog$_{l,k}$\xspace}

\newcommand{\cL}{\mathcal{L}}
\newcommand{\cLR}{\mathcal{L_R}}

\newcommand{\MSO}{{\rm MSO}}

\newcommand{\cert}{cert}
\newcommand{\tempQV}{T_{\Q, \V}}

\newcommand{\f}{\ensuremath{f}\xspace}
\newcommand{\CSP}{\text{CSP}\xspace}

\title[Datalog Rewritings of Regular Path Queries using Views]{Datalog Rewritings of Regular Path Queries \\using Views}

\author[N.~Francis]{Nadime Francis\rsuper a}
\address{{\lsuper a}ENS-Cachan, Inria}
\email{francis@lsv.ens-cachan.fr}

\author[L.~Segoufin]{Luc Segoufin\rsuper b}
\address{{\lsuper b}Inria, ENS-Cachan}
\email{luc.segoufin@inria.fr}

\author[C.~Sirangelo]{Cristina Sirangelo\rsuper c}
\address{{\lsuper c}LSV at ENS-Cachan, Inria, CNRS}
\email{cristina@liafa.univ-paris-diderot.fr}

\keywords{Regular Path Queries, Views, Rewriting, Datalog}

\begin{document}

\begin{abstract}
	We consider query answering using views on graph databases, i.e. databases
	structured as edge-labeled graphs.  We mainly consider views and queries
	specified by Regular Path Queries (\RPQ).  These are queries selecting pairs of nodes
	in a graph database that are connected via a path whose sequence of edge labels
	belongs to some regular language. We say that a view \V determines a query \Q
	if for all graph databases \D, the view image $\V(\D)$ always contains enough
	information to answer \Q on \D. In other words, there is a well defined
	function from $\V(\D)$ to $\Q(\D)$.

	Our main result shows that when this function is monotone, there exists a
	rewriting of \Q as a \Datalog query over the view instance $\V(\D)$. In
	particular the rewriting query can be evaluated in time polynomial in the size
	of $\V(\D)$.  Moreover this implies that it is decidable whether an \RPQ query
	can be rewritten in \Datalog using \RPQ views.
\end{abstract}

\maketitle

\pagestyle{headings}

\section{Introduction}\label{intro}

We consider the problem of answering queries using views on graph
databases. Graph databases are relational databases where all relation symbols
are binary. In other words a graph database can be viewed as an edge-labeled
directed graph.

Graph-structured data can be found in many important scenarios.  Typical examples
are the semantic Web via the format RDF and social
networks. Graph-structured data differs conceptually from relational databases
in that the topology of the underlying graph is as important as the
data it contains. Usual queries will thus test whether two nodes are
connected and \emph{how} they are connected \cite{Barcelo-pods13}.

In many contexts it is useful to know whether a given set of queries can be used
to answer another query. A typical example is the data integration setting where 
data sources are described by views
of a virtual global database. Queries over the global
database are then rewritten as queries over the views.  Another example is
caching: answers to some set of queries against a data source are cached, and
one wishes to know if a newly arrived query can be answered using the cached
information, without accessing the source. This problem also finds application in the
context of security and privacy. Suppose access to some of the information in a
database is provided by a set of public views, but answers to other queries are
to be kept secret.  This requires verifying that the disclosed views do not
provide enough information to answer the secret queries.

All these problems can be phrased in terms of views and query rewriting using
views, which is a typical database problem, not specific to graph databases, 
that has received considerable attention (see~\cite{LevyMSS95, NashSV10, Afrati11} among others). 
When graph databases are concerned, the
difference lies only in the kind of queries under consideration ~\cite{calvanese2000view, calvanese02jcss, calvanese2002lossless,calvanese07tcs}. 

Over graph databases, typical queries have at least the expressive power of
\emph{Regular Path Queries} (\RPQ), defined in~\cite{Cruz-sigmodRec87} (see
also the survey~\cite{Barcelo-pods13}). An \RPQ selects pairs of nodes
connected by a path whose sequence of edge labels satisfies a given regular
expression.  A view, denoted by \V, is then specified using a finite set of
\RPQ{}s. When evaluated over a graph database \D, the view \V yields a new
graph database $\V(\D)$ where each $V_i\in\V$ is a new edge relation symbol.

We are interested in knowing whether the view \V always provides enough
information to answer another \RPQ query \Q, i.e. whether $\Q(\D)$
can be computed from $\V(\D)$ for all databases \D.  When this is the case we say
that \V \emph{determines} \Q, and we look for a
\emph{rewriting} of \Q using \V, i.e. a new query, in some query
language, that expresses $\Q$ in terms of $\V$.
We are then interested in finding an algorithm for evaluating the rewriting, i.e. an algorithm computing $\Q(\D)$ from $\V(\D)$. 

These two related questions, determinacy and query rewriting, have been studied
for relational databases and graph databases. Over relational databases,
determinacy is undecidable already if the queries and views are defined by
union of conjunctive queries, and its decidability status is open for views and
queries specified by conjunctive queries (\CQ)~\cite{NashSV10}. Over graph
databases and \RPQ queries and views, the decidability status of determinacy is
also open~\cite{calvanese2002lossless}.  Determinacy has been shown to be
decidable in a scenario where views and queries can only test whether there is
a path of distance $k$ between the two nodes, for some given
$k$~\cite{Afrati11}.
This scenario lies at the intersection of \CQ and \RPQ and contains already non
trivial examples. For instance the view $\path_3$ and $\path_4$, giving
respectively the pairs of nodes connected by a path of length~3 and~4,
determines the query $\path_5$ asking for the pairs of nodes connected by a
path of length~5~\cite{Afrati11} (see also Example~\ref{example-determinacy} in
Section~\ref{section-prelim}).

Clearly when \Q can be rewritten in terms of \V, the rewriting witnesses that
\V determines \Q. On the other hand determinacy does not say that one can find
a rewriting definable in a particular language, nor with particular computational
properties.

It is then natural to ask which rewriting language $\cLR$ is sufficiently
powerful so that determinacy is equivalent to the existence of a rewriting
definable in $\cLR$. This clearly depends on the language used for defining the
query and the view.

Consider again the case of $\path_5$ that is determined by $\path_3$ and
$\path_4$. A rewriting  $R(x,y)$ of $\path_5$ in terms of $\path_3$ and $\path_4$ is
defined by:
\begin{equation*}
\exists u ~ ( \path_4(x,u) \land \forall v ~(\path_3(v,u) \rightarrow
\path_4(v,y)) )
\end{equation*}
and it can be shown that there is no rewriting definable in \CQ, nor in \RPQ
(cf. Example~\ref{example-determinacy}).  In the case of views and queries
defined by \CQ{}s it is still an open problem to know whether first-order logic
is a sufficiently powerful rewriting language.  Even worse, it is not even
known whether there always exists a rewriting that can be evaluated in time
polynomial in the size of the view instance~\cite{NashSV10}, ie. polynomial \emph{data
  complexity}. A similar situation arises over graph-databases and \RPQ{} views
and queries~\cite{calvanese2002lossless}.

\medskip

It can be checked that in the example above there exists no monotone rewriting
of $\path_5$ (see again Example~\ref{example-determinacy}). In particular, as \RPQ{}s define only monotone
queries, no rewriting is definable in \RPQ.  Monotone query languages such as
\CQ, \Datalog, \RPQ and their extensions are of crucial importance in many
database applications.  The possibility of expressing rewritings in these
languages is subject to a monotonicity restriction.

This is why in this paper we are considering a stronger notion of determinacy,
referred to as \emph{monotone determinacy}, by further requiring that the
mapping from view instances to query results is \emph{monotone}.

In the case when views and queries are defined by \CQ{}s, monotone determinacy
can be shown to be equivalent to the existence of a rewriting in
\CQ\cite{NashSV10}. As this latter problem is decidable~\cite{LevyMSS95},
monotone determinacy for \CQ{}s is decidable.

We consider here monotone determinacy for graph databases and views and queries
defined by \RPQ{}s.
We first observe that monotone determinacy corresponds to the notion called
\emph{losslessness under the sound view assumption}
in~\cite{calvanese2002lossless}, where it was shown to be decidable. We then
concentrate on the rewriting problem.

We know that there exist cases of monotone rewritings that are not expressible
in \RPQ~\cite{calvanese2002lossless} (see also Example~\ref{example-norpq} in
Section~\ref{section-monotone}). We thus need a more powerful language in order
to express all monotone rewritings.

It is not too hard to show that if \V determines \Q then there exists a
rewriting with \np data complexity, as well as a rewriting
with \conp data complexity.  Our main result shows that if moreover \V determines \Q
in a monotone way, there exists a rewriting definable in \Datalog, which therefore can
 be evaluated in polynomial time.

Our proofs are constructive, hence the \Datalog rewriting can be computed from
\V and \Q. 

As a corollary this implies that it is decidable whether a query \Q has a
rewriting definable in \Datalog using a view \V, where both \V and \Q are
defined using \RPQ{}s. This comes from the fact that our main result implies
that the existence of a rewriting in \Datalog is equivalent to monotone
determinacy, a decidable property as mentioned above.

\medskip

\paragraph{Related work}
The work which is most closely related to ours is that of the ``Four
Italians''.  In particular, the notion of losslessness under the exact view
assumption introduced in~\cite{calvanese2002lossless} corresponds to what we
call determinacy; similarly the notion of losslessness under the sound view
assumption corresponds to what we call monotone determinacy. Monotone determinacy is
also mentioned in the thesis~\cite{perez} under the name of ``strong
determinacy''. It is shown there that it corresponds to the existence of a
monotone rewriting.

A lot of attention has been devoted to the problem of computing the set of
certain answers to a query w.r.t a set of views, under the sound view
assumption (see the precise definition of certain answers in
Section~\ref{section-certain}). For \RPQ views and queries, the problem is
shown to be equivalent to testing whether the given instance homomorphically
embeds into a structure $\tempQV$ computed from the view \V and the query
\Q~\cite{calvanese2000view}. In general this shows that the data complexity of
computing the certain answers is \conp-complete. Building on results on
Constraint Satisfaction Problems~\cite{feder1998computational}, it was also
shown in~\cite{calvanese2000view} that for an \RPQ view \V, an \RPQ query \Q
and for each $l,k$, with $l \leq k$, there is a \Datalog program $\Qlk$ which
is contained in the certain answers to \Q given \V and is, in a sense,
maximally contained: i.e. $\Qlk$ contains all \Datalog programs which are
contained in the certain answers and use at most $l$ head variables and at most
$k$ variables in each rule.

If we assume that \V determines \Q in a monotone way, it is easy to see that
the query computing the certain answers under the sound view assumption is a
rewriting of \Q using \V (i.e the certain answers of a view instance
$\V(\D)$ are precisely the query result $\Q(\D)$).

However there are possibly other rewritings (they only need to agree on
instances of the form $\V(\D)$, but may possibly differ on instances not in the
image of \V.)  While computing the certain answers is \conp-hard, our
main result shows that there exists another rewriting which is expressible in
\Datalog, and has therefore polynomial time data complexity.

Nevertheless our proof makes use of the structure $\tempQV$ mentioned above, and
our \Datalog rewriting turns out to be the query $\Qlk$ associated with \Q and
\V for some suitable values of $l$ and $k$.

\section{Preliminaries}\label{section-prelim}
\paragraph{Graph databases and paths}
A binary schema is a finite set of relation symbols of arity 2. All the schemas
used in this paper are binary. A \emph{graph database} \D is a finite
relational structure over a (binary) schema $\sigma$. We will also say a
$\sigma$-structure. Alternatively \D can be viewed as a directed edge-labeled
graph with labels from the alphabet $\sigma$. The elements of the domain of \D
are referred to as \emph{nodes}. The number of elements in \D is denoted by
$\size{\D}$. If $A$ is a set of elements of \D, we denote by $\D[A]$ the
substructure of \D induced by $A$.

Given a graph database \D, a \emph{path} $\pi$ in \D from $x_0$ to $x_m$ is a
finite sequence $\pi = x_0a_0x_1\ldots x_{m-1}a_{m-1}x_m$, where each $x_i$ is
a node of \D, each $a_i$ is in $\sigma$, and $a_i(x_i,x_{i+1})$ holds in \D
for each $i$. 
A \emph{simple path} is a path
such that no node occurs twice in the sequence.
The \emph{label} of $\pi$, denoted by $\lambda(\pi)$, is the word $a_0a_1\ldots
a_{m-1} \in \sigma^*$. By abuse of notation, we sometimes view a path $\pi$ as a
graph database, which contains only the nodes and edges that occur in the
sequence.
\paragraph{Queries and query languages}
  A \emph{query} \Q over a schema $\sigma$ is a mapping
associating to each graph database \D over $\sigma$ a finite relation
$\Q(\D)$ over the domain of \D. We will only consider binary queries, that is queries
that return binary relations, and work with the following query languages.

A \emph{Regular Path Query} (often abbreviated as \RPQ) over $\sigma$ is given by a regular
expression over the alphabet $\sigma$. If \Q is an \RPQ, we denote by $L(\Q)$ the language
corresponding to its regular expression. On a graph database, such a query
selects all the pairs $(x, y)$ of nodes such that there exists a path $\pi$
from $x$ to $y$ with $\lambda(\pi) \in L(\Q)$.
For instance the query $\path_3$ of the introduction is an \RPQ corresponding
to the regular expression $\sigma\sigma\sigma$ (also denoted
$\sigma^3$). Another example is the \RPQ $(\sigma\sigma)^*$ that select pairs
of nodes connected via a path of even length.

A \emph{Context-Free Path Query}
over $\sigma$
is defined similarly but using a context-free grammar instead of a regular
expression.

A \emph{Conjunctive Regular Path Query} (sometimes abbreviated \CRPQ) over $\sigma$ is a conjunctive
query whose atoms are specified using \RPQ{}s over $\sigma$.
For instance the query
\begin{equation*}
  \exists z~ Q_1(x,z) \land Q_2(z,y) \land Q_3(z,y)
\end{equation*}
where $Q_1 = a^+$, $Q_2 = b$ and $Q_3 = c$ selects pairs of nodes $(x,y)$ which
are connected via a path labeled $a^+b$ and another path labeled $a^+c$ sharing
their $a^+$ part. This cannot be expressed by an \RPQ.

A \emph{\Datalog query} over schema $\sigma$ is defined by a finite set of rules of the form
\begin{equation*}
  I(\bar x) :\!\!-~ I_1(\bar x_1) \land \cdots \land I_m(\bar x_m)
\end{equation*}
where each $I_i$ is a relational symbol, either a symbol from $\sigma$, or an
\emph{internal symbol}. $I(\bar x)$ is called the \emph{head} of the rule and $I$ must
be an internal symbol. The variables $\bar x$ are among $\bar x_1 \dots \bar x_m$ and
the variables of $\bar x_i$ not occurring in $\bar x$
should be understood as existentially quantified. One of the internal symbols,
referred to as the \emph{goal},
is binary and is designated as being the output of the query. The evaluation of
a \Datalog query computes the internal relations incrementally
starting from the empty ones by applying greedily the rules (see~\cite{AHV95}).

It is easy to see that any Regular or Context-Free Path Query, and
therefore any Conjunctive Regular Path Query, can be expressed in \Datalog.  Hence
\Datalog is the most expressive of the query languages presented above. It is
also well known that each \Datalog query can be evaluated in polynomial time,
data complexity, using the procedure briefly sketched above.

We will consider restrictions of \Datalog limiting the maximal arity of
the internal symbols and the number of variables in each rule. This is
classical in the context of Constraint Satisfaction Problems
(\CSP)~\cite{feder1998computational} that we will use in
Section~\ref{section-datalog}. In the context of \CSP,  \Datalog
programs are boolean (i.e. the goal has arity 0) and 
\Dataloglk denotes the fragment allowing at most $k$ variables in each 
rule and internal symbols of arity at most $l$.
Here we are dealing with \emph{binary}
\Datalog programs. In order to stay close to the
notations and results coming from \CSP, we generalize this definition and let 
\Dataloglk denote  the \Datalog programs having at most $k+r$ variables in each 
rule and internal symbols of arity at most $l+r$, where $r$ is the arity 
of the goal, in our case $r=2$.

\paragraph{Views}
If $\sigma$ and $\tau$ are (binary) schemas, a \emph{view} \V from $\sigma$ to
$\tau$ is a set consisting of one binary query over $\sigma$ for each symbol in
$\tau$.  If $\V$ consists of the queries $\{V_1,\ldots,V_n\}$, with a little
abuse of notation, we let each $V_i$ also denote the corresponding symbol in
$\tau$.  For a graph database \D over $\sigma$, we denote by $\V(\D)$ the graph
database over $\tau$ where each binary symbol $V_i$ is instantiated as
$V_i(\D)$.  We say that a view consisting of the queries $\{V_1,\ldots,V_n\}$
is an \RPQ view if each $V_i$ is an \RPQ. We define similarly Context-Free Path
Query views and Conjunctive Regular Path Query views.

In what follows whenever we refer to a view \V and a query \Q, unless
otherwise specified, we always assume that \Q is over the schema $\sigma$ and \V
is a view from~$\sigma$ to~$\tau$. A \emph{view instance} $\S$ is a
$\tau$-structure such that $\S=\V(\D)$ for some database \D.

\paragraph{Determinacy and rewriting}
The notion of determinacy specifies when a query can be answered completely
from the available view. The following definitions are taken from~\cite{NashSV10}.

\begin{defi}[Determinacy]
	We say that a view \V determines a query \Q
 if :
\begin{equation*}
\forall \D,\D', \quad \V(\D) = \V(\D') \ \Rightarrow \ \Q(\D) = \Q(\D')
\end{equation*}
\end{defi}\smallskip

\noindent In other words, $\Q(\D)$ only depends on the view instance $\V(\D)$ and not on
the particular database \D yielding the view.  Observe that determinacy says
that there exists a function \f defined on view images such that
$\Q(\D)=\f(\V(\D))$ for each database \D. We call \f the \emph{function induced
  by $\Q$ using $\V$}.

A \emph{rewriting} of \Q using \V is a query \R over the schema $\tau$ such that
$\R(\V(\D))=\Q(\D)$ for all~$\D$. 

Notice that there can be possibly many rewritings, while the
function induced by $\Q$ using $\V$ is unique.
In fact the domain of \f is defined to be
the set of view images, that is, all the $\tau$-structures $\S$ such that there
exists a database $\D$ with $\V(\D) = \S$. Thus, \f is fully defined by the
identity $\Q(\D)=\f(\V(\D))$, and is therefore unique. On the contrary, 
rewritings are defined as queries over $\tau$, which means that they are mappings defined
over all $\tau$-structures $\S$, even those which are not of the form 
$\S = \V(\D)$. In particular, this means that the condition
$\Q(\D) = \R(\V(\D))$ is not sufficient to fully define $\R$, as it
can take arbitrary values on $\tau$-structures that are not of the form 
$\V(\D)$. Of course any rewriting coincides with the function \f 
\emph{when restricted to view images}.

\begin{myexample}\label{example-determinacy}
 Consider again the view \V defined by the two \RPQ{}s $V_1=\sigma^3$ and
 $V_2=\sigma^4$ testing for the existence of a path of length 3 and 4, respectively.
 Let $\Q=\sigma^5$ be the \RPQ testing for the existence of a path of
 length 5.

 It turns out that \V determines \Q~\cite{Afrati11}. This is not immediate to
 see but, as mentioned in Section~\ref{intro}, one can verify that a rewriting of 
 \Q using \V can be expressed in first-order by the following query:
 \begin{equation*}
\exists u ~ ( V_2(x,u) \land \forall v ~(V_1(v,u)
\Rightarrow V_2(v,y)))
\end{equation*}
As shown in Figure~\ref{determinacy-figure}, the function induced by \Q using
\V is not monotone. This implies that no monotone query can be a
rewriting, in particular there exists no \CQ nor \RPQ rewriting. 

 Consider now the \RPQ $\Q'=\sigma^2$. One can verify that \V does not
 determine $\Q'$. Indeed the database consisting of a single node with no
 edge, and the database consisting of a single path of length 2, have
 the same empty view but disagree on $\Q'$.
\end{myexample}

\begin{figure}[h!]
	\centering
	\begin{tikzpicture}[shorten >=1pt,->]
		\tikzstyle{vertex}=[circle,fill=black,minimum size=3pt,inner sep=0pt]
		\foreach \name/\x in {0/0, 1/1, 2/2, 3/3, 4/4, 5/5} \node[vertex] (\name) at (\x,5) {};
		\foreach \from/\to in {0/1, 1/2, 2/3, 3/4, 4/5} \draw (\from) -- (\to);
		\foreach \name/\x in {0/0.25, 1/1.25, 2/2.25, 3/3.25, 4/4.25, 5/5.25} \node (lab\name) at (\x, 4.75) {$x_\name$};
		\node (D) at (-0.5,5) {$\D :$};

		\foreach \name/\x in {0/0, 1/1, 2/2, 3/3, 4/4} \node[vertex] (\name) at (\x,0) {};
		\foreach \from/\to in {0/1, 1/2, 2/3, 3/4} \draw (\from) -- (\to);
		\foreach \name/\y in {01/3, 11/2, 21/1} \node[vertex] (\name) at (4,\y) {};
		\foreach \from/\to in {01/11, 11/21, 21/4} \draw (\from) -- (\to);
		\foreach \name/\x in {02/3, 12/2, 13/1} \node[vertex] (\name) at (\x,2) {};
		\foreach \from/\to in {11/02, 02/12, 12/13} \draw (\from) -- (\to);
		\foreach \name/\x/\y in {0/0.25/-0.25, 1/4.25/2.75, 2/4.25/1.75, 3/3.25/-0.25, 4/4.25/-0.25, 5/1.25/1.75} 
		\node (lab\name) at (\x, \y) {$x_\name$};
		\node (D') at (-0.5,2.5) {$\D' :$};
	\end{tikzpicture}
	\caption{Illustration for Example~\ref{example-determinacy}: $\D$ and
          $\D'$ are such that $\V(\D) \subseteq \V(\D')$, but $(x_0,x_5) \in
          \Q(\D)$, whereas $(x_0,x_5) \notin \Q(\D')$. Hence the function
          induced by \Q using \V is not monotone.}
	\label{determinacy-figure}
\end{figure}
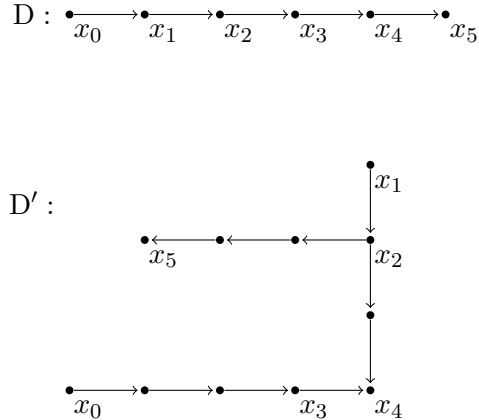

It is important at this point to understand the difference between determinacy
and rewriting. If \V determines \Q then there exists a rewriting of \Q using \V.
However there are possibly many rewritings of \Q using \V. Each of them agrees on
the function induced by \Q using \V when restricted to view images, but
can take arbitrary values on structures that are not in the image of the view.
Consider for instance the view \V and the query \Q of
Example~\ref{example-determinacy}. The query:
 \begin{equation*}
\exists u,u' ~ V_2(x,u) \land V_1(x,u')\land \forall v ~(V_1(v,u)
\Rightarrow V_2(v,y))
\end{equation*}
is also a rewriting of \Q using \V. It is equivalent to the rewriting of
Example~\ref{example-determinacy} on $\tau$-instances $\S$ such that $\S=\V(\D)$
for some \D. Indeed whenever $V_2(x,u)$ holds in $\V(\D)$, the database $\D$ contains a path 
of length 4 from $x$ to $u$, hence if $u'$ is the node at distance 3
from $x$ in this path,  $V_1(x,u')$ also holds in $\V(\D)$. However the two rewritings may
 differ on  instances which are not in the view image, such as an instance consisting of a single 
 $V_2$-labeled edge.

The \emph{determinacy problem} for a query language $\cL$ is the problem of
deciding, given an input view \V defined in $\cL$ and a query \Q of $\cL$, whether \V
determines \Q. 

Determinacy does not say whether there exists a rewriting definable in a
particular query language, or computable with a particular data complexity. This clearly
depends on the language used for specifying the views and queries.

The \emph{rewriting problem} for a query language $\cL$ is the problem of
finding a rewriting for a query \Q of  $\cL$ using a view \V defined in $\cL$
whenever \V determines \Q.

These two problems have been thoroughly investigated in the case that $\cL$ is
\RPQ~\cite{calvanese2002lossless, calvanese2000view,
  calvanese02jcss,calvanese07tcs}. However the determinacy problem for \RPQ
remains wide open and it is not clear what would be a good (low data
complexity) rewriting language for \RPQ. Note that a similar situation arises
in the case that $\cL$ is \CQ~\cite{NashSV10, Afrati11}.

\section{Determinacy problem}

We have already mentioned above that the determinacy problem for \RPQs is open.
For Context-Free Path Queries and for Conjunctive Regular Path Queries,
determinacy is undecidable. Actually the problem is already undecidable when
the query \Q is an \RPQ.  These undecidability results are formalized in the two
following propositions.

\begin{prop}\label{prop-cfpq-determinacy}
  Given a Context-Free Path Query view $\V$ and a Regular Path Query $\Q$,
  it is undecidable whether $\V$ determines $\Q$.
\end{prop}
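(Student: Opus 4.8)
The plan is to reduce from an undecidable problem about context-free grammars, the most natural candidate being the universality problem for context-free languages: given a context-free grammar $G$ over an alphabet $\Gamma$, it is undecidable whether $L(G) = \Gamma^*$. (Equivalently one can reduce from the problem of deciding whether two context-free languages are equal, or whether a context-free language contains $\Sigma^*$, all of which are undecidable.) The idea is to encode "the grammar is universal" as "the view determines the query," so that the trivial view that sees everything determines an \RPQ exactly when the complement language is empty, i.e.\ when $G$ is universal.

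Concretely, I would work over a schema $\sigma$ containing the letters of $\Gamma$ plus possibly one or two auxiliary symbols used as delimiters. First I would arrange a query \Q that is easy to answer precisely when some structural property holds; the cleanest choice is to make \Q an \RPQ such as $\sigma^*$ or a single fixed letter, and to let the view \V consist of Context-Free Path Queries that, on every database, reveal enough information to reconstruct \Q exactly when $L(G)=\Gamma^*$. The key gadget: include in \V a \CFPQ\ view $V_G$ defined by the grammar $G$ (so $V_G(\D)$ gives all pairs connected by a path whose label is in $L(G)$) together with one or two further views that let the rewriting "read off" $V_G$. If $L(G)=\Gamma^*$, then $V_G$ selects the same pairs as the \RPQ\ $\Gamma^*$ would, so \V determines \Q; if $L(G)\neq\Gamma^*$, I exhibit two databases $\D,\D'$ that a path with a forbidden label distinguishes for \Q but that are indistinguishable for \V — typically $\D$ a single path spelling a word $w\notin L(G)$ and $\D'$ the same path with an edge removed or a node split, chosen so that $V_G(\D)=V_G(\D')$ (the bad word is invisible to $G$) yet $\Q(\D)\neq\Q(\D')$.

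The step I expect to be the main obstacle is the converse direction — making sure that when $G$ is \emph{not} universal, determinacy genuinely fails. This requires engineering the query and the auxiliary views so that (i) they never accidentally leak, through some other path, the information that the \CFPQ\ view $V_G$ misses, and (ii) the two witnessing databases really do agree on all of \V. Getting (i) right usually forces the alphabet to be structured (e.g.\ only "well-formed" encodings over $\sigma$ carry meaningful view tuples, enforced by delimiter letters and by the regular/context-free languages defining the other views), so that the only route by which \V could determine \Q is through $V_G$ itself. Once the encoding is locked down, checking that $\D$ and $\D'$ agree on every $V_i\in\V$ and disagree on \Q is a finite, routine verification. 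I would also double-check that the reduction is effective: \V and \Q are computable from $G$, which is immediate since $V_G$ is just $G$ read as a \CFPQ.
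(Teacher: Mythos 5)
Your proposal is correct and follows essentially the same route as the paper: a reduction from universality of context-free languages, with the view defined by $\$\cdot L\cdot\$$ and the query by the \RPQ{} $\$\cdot\sigma^*\cdot\$$, the delimiters playing exactly the structuring role you anticipate. The only simplification the paper makes is in the non-universal case, where the second witnessing database is simply the empty database (both views are then empty, since $w\notin L$), avoiding the need to modify the path.
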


\begin{proof}
  We prove this by reduction from the universality problem for context-free
  languages. Let $L$ be a context-free language over some alphabet
  $\sigma$. Let $\$$ be a fresh symbol that does not appear in $\sigma$. Let
  $\V = \set{V}$ where $V$ is defined by $L(V) = \$\cdot L \cdot \$$. Let $\Q$
  be defined by $L(\Q) = \$\cdot \sigma^* \cdot \$$. Then $\V$ determines $\Q$
  if and only if $L$ is universal over $\sigma$.

	\begin{itemize}
        \item Assume that $L$ is universal. Then $\Q = V$ and it is easy to
          check that $R = V$ is a rewriting of $\Q$ using $\V$.
        \item Conversely, assume that $L$ is not universal. Then there exists
          $w \in \sigma^*$ such that $w \notin L$. Consider the database $\D$
          consisting of a simple path labeled by $\$\cdot w \cdot\$$, and the
					empty database $\D'$. Then
          $\V(\D) = \emptyset = \V(\D')$, but $\Q(\D)$ contains the first and 
					last node of the path, whereas $\Q(\D')$ is empty. 
					Hence, $\V$ does not determine $\Q$.
	\end{itemize}
\end{proof}

\begin{prop}
  Given a Conjunctive Regular Path Query view $\V$ and a Regular Path Query
  $\Q$, it is undecidable whether $\V$ determines $\Q$.
\end{prop}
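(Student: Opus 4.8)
The plan is to reduce from an undecidable problem that admits \emph{finite} certificates --- e.g.\ the Post Correspondence Problem (PCP), or equivalently a Turing machine halting problem --- the finiteness being essential since graph databases are finite. From an instance I would build, following the template of Proposition~\ref{prop-cfpq-determinacy}, a \CRPQ view $\V$ and an \RPQ query $\Q$ over a schema $\tau$ extending the instance's alphabet with an anchor symbol $\$$ and a few auxiliary bookkeeping symbols, so that $\V$ determines $\Q$ if and only if the instance is negative. A negative instance should force $\V$ to determine $\Q$; a positive one should yield two databases $\D,\D'$ with $\V(\D)=\V(\D')$ but $\Q(\D)\neq\Q(\D')$ --- exactly as in the \CFPQ proof, where one of the two is the empty database (or a slight modification of the other). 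Since ``$\V$ determines $\Q$'' is a $\Pi^0_1$ statement and ``the instance has a solution'' is $\Sigma^0_1$-complete, the reduction will match positive instances with \emph{non}-determinacy.

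The reason \CRPQ views (rather than plain \RPQ views, for which determinacy is only open) are needed is that the conjunctive structure of the view relations can encode the \emph{non-local} consistency checks of the undecidable instance --- checks that no \RPQ can perform. For PCP this is the requirement that one and the same index sequence works on both the $u$-side and the $v$-side: being a concatenation of the $u_i$'s and being a concatenation of the $v_i$'s are each regular conditions on the common word, hence \RPQ-expressible, but their block-by-block alignment is not regular, whereas a conjunctive query with atoms jumping to the corresponding (far apart) block boundaries can enforce it. Concretely I would encode a candidate solution as a $\$$-anchored graph carrying, along a path, the common word together with local markers for the $u$-blocking and for the $v$-blocking; arrange the view relations (mostly \RPQ relations exposing the word and one side of the markers essentially verbatim, plus a genuinely conjunctive relation that ties the two blockings together) so that $\V(\D)$ reconstructs $\D$ only up to a controlled ambiguity in the other markers; and let $\Q$ be a simple \RPQ that, at the anchor endpoints, separates the two databases that are view-equivalent precisely because of that ambiguity.

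The main obstacle is the converse direction. The positive direction --- a solution yields a bad pair $(\D,\D')$ --- should be routine once the encoding is pinned down. But showing that a negative instance forces \emph{genuine} determinacy, i.e.\ $\V(\D)=\V(\D')\Rightarrow\Q(\D)=\Q(\D')$ for \emph{all} $\D,\D'$, cannot be done the way Proposition~\ref{prop-cfpq-determinacy} does it: there the ``no witness'' case was immediate because the view query was literally equivalent to $\Q$, but equivalence of a \CRPQ and an \RPQ is decidable, so it can never witness the undecidable property we are reducing to. One must therefore argue directly about all $\tau$-structures that are view images --- including those produced by malformed databases that are not clean encodings, and including the fact that a rewriting/the induced function is only constrained on actual view images --- and show that, absent a solution, the view always exposes enough of the database to fix the \RPQ answer, while no amount of ``folding'' or overlapping inside a single finite database can simulate a genuine solution. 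Designing the conjunctive view relation so that it certifies the alignment on well-formed encodings yet cannot be tricked by tangled or partial copies of the gadget is the delicate step, and is where I expect most of the work to go.
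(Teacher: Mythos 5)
Your proposal is a plan rather than a proof: the view relations, the query, and both directions of the correctness argument are left unspecified, and you yourself flag the converse direction (no solution implies determinacy over \emph{all} pairs of databases, including malformed ones) as the place ``where most of the work will go.'' That is precisely the part that cannot be waved at. Worse, the source problem you chose creates an obstacle that I do not think your sketch can overcome: a \CRPQ view is a \emph{fixed, finite} conjunction of \RPQ atoms, so a single view relation can enforce only a bounded number of ``jumps to corresponding block boundaries,'' whereas a PCP solution requires verifying unboundedly many block-by-block alignments. This is exactly why the paper resorts to \emph{context-free} path views when it does encode PCP (Lemma~\ref{lemma-cf-view-test}); with \CRPQ views the alignment check you describe is not available, and no concrete mechanism replacing it is proposed. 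The claim that ``the positive direction should be routine once the encoding is pinned down'' therefore begs the question, since it is the encoding itself that is in doubt.

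The paper takes a genuinely different route that sidesteps this entirely. It reduces from the word problem for finite semigroups, via an intermediate ``word problem for graph databases'' (given pairs $(u_i,v_i)$ and $(u,v)$, does $u_i(\D)=v_i(\D)$ for all $i$ imply $u(\D)=v(\D)$ for every finite $\D$?). The universal quantification over finite databases in the determinacy definition lines up naturally with the universal quantification over finite semigroups (via the transition-monoid construction in one direction and a Cayley-graph-like database in the other), so no unbounded alignment ever needs to be checked. The view uses two disjoint copies $\sigma,\sigma'$ of the alphabet: \RPQ relations $V_\alpha$ with language $\{\alpha,\alpha'\}$ and $V_i$ with language $\{u_i,v_i'\}$ do the semantic work, and the only genuinely conjunctive atoms, $T_{\alpha,\beta}=\alpha(x,y)\wedge\exists z,t\,\beta'(z,t)$, serve the purely structural purpose of fully revealing any database that mixes the two alphabets, so that the only nontrivial view-equivalent pairs are a $\sigma$-database and its primed copy. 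You correctly identified the logical shape of the reduction ($\Pi^0_1$ on both sides) and correctly diagnosed why the trick of Proposition~\ref{prop-cfpq-determinacy} cannot be reused, but the construction you would need is missing and the one you gesture at appears infeasible for \CRPQ views.
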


\begin{proof}
  We prove this by reduction from the word problem for graph databases.

\begin{center}
\fbox{
$
	\begin{array}{ll}
          \textsc{Problem} : & \textsc{Word problem for graph databases} \\
          \textsc{Input} : & \text{A list of pairs } (u_i,v_i)_{0 < i \leq
            n}, \text{a pair } (u,v), 
             \text{ where } u, v, u_i, v_i, \text{ for every } i \\
            & \text{are words over } \sigma, \text{ viewed as \RPQ{}s } \\
          \textsc{Question} : & \text{Is the following statement true?} \\
          & \text{For every graph database \D  over } \sigma, \text{if } \forall i, u_i(\D) = v_i(\D), \text{ then } u(\D) = v(\D) 
	\end{array}
$}
\end{center}

A straightforward reduction from the word problem for finite semigroups shows:

\begin{lem}
  The word problem for graph databases is undecidable.
\end{lem}

\begin{proof}
  We prove this by reduction from the word problem for finite semigroups. This
  problem has the same input as the word problem for graph databases but asks
  whether for all semigroup $S$ and all homomorphism $h$ from $\sigma^*$ to $S$
  such that $h(u_i)=h(v_i)$ for all $i$, it is the case that $h(u)=h(v)$.
 
  We now prove that any input is accepting for the word problem for finite
  semigroups if and only if it is accepting for the word problem for graph databases.

  \begin{enumerate}
  \item Assume that the input is accepting for the word problem for finite
    semigroups. Let \D be a graph database such that for all $i$, $u_i(\D) =
    v_i(\D)$. From $\D$, we compute the semigroup $S_\D$ and the homomorphism $h :
    \sigma^* \rightarrow S_\D$ as follows:
    \begin{itemize}
    \item The elements of $S_\D$ are the set of pairs $w(\D)$ for all $w\in
      \sigma^*$. As \D is finite $S_\D$ is finite.
    \item Let $x$ and $y$ be two elements of $S_\D$. Let $u,v \in \sigma^*$ such
      that $x = u(\D)$ and $y = v(\D)$. Then $x\cdot y$ is defined as $u\cdot
      v(\D)$. It is easy to check that this operation is associative and well
      defined (i.e. does not depend on the specific choice of $u$ and $v$).
    \item For all $\alpha \in \sigma$ we set $h(\alpha) = \alpha(\D)$. Hence
      for all $u\in\sigma^*$ we have $h(u)=u(D)$.
    \end{itemize}
    By construction we therefore have for all $i$, $h(u_i) =
    h(v_i)$. Hence, $h(u) = h(v)$, which implies that $u(\D) = v(\D)$.

  \item Assume that the input is accepting for the word problem for
    graph databases. Let $S$ be a finite semigroup, and $h$ an homomorphism from
    $\sigma^*$ to $S$, such that, for all $i$, $h(u_i) = h(v_i)$. From $S$ and
    $h$, we define the graph database $\D_h$ as follows:
    \begin{itemize}
    \item The sets of nodes of $\D_h$ is $h(\sigma^+) \cup
      \set{\varepsilon}$. This set is finite since $h(\sigma^+)$ is a subset of
      $S$.
    \item Let $x$ and $y$ be two nodes of $\D_h$. Then there is an edge $\alpha$
      from $x$ to $y$ if either $x = \varepsilon$ and $y = h(\alpha)$ or $x
      \neq \varepsilon$ and $x \cdot h(\alpha) = y$.
    \end{itemize}
    Assume that $(x,y) \in u_i(\D_h)$. Then either $x = \varepsilon$, hence $y
    = h(u_i) = h(v_i)$ and $(x,y)\in v_i(\D_h)$, or $x \cdot h(u_i) = y$, which
    implies that $x \cdot h(v_i) = y$ and $(x,y)\in v_i(\D_h)$. Hence,
    $u_i(\D_h)=v_i(\D_h)$ for all $i$ and therefore $u(\D_h) = v(\D_h)$. Hence,
    $(\varepsilon,h(u)) \in v(\D_h)$, which implies that there is a path $v$ from
    $\varepsilon$ to $h(u)$ and thus that $h(u) = h(v)$.\qedhere
  \end{enumerate}
\end{proof}

 \noindent Let
  $(u_i,v_i)_{0 < i \leq n}$ and $(u,v)$ be an input
  for the word problem. Let $\sigma'$ be a copy of $\sigma$ using only fresh
  symbols.  For each $\alpha\in\sigma$, we use $\alpha'$ to denote the
  corresponding symbol in $\sigma'$. We define the following query and view:
  \begin{itemize}
  \item $\Q$ is the \RPQ defined by $L(\Q) = \set{u,v'}$ where $v'$ is a copy
    of $v$ using symbols of $\sigma'$.
  \item For all $\alpha \in \sigma$, $V_\alpha$ is a query of the view defined
    by the \RPQ $L_\alpha = \set{\alpha,\alpha'}$.
  \item For all $i$, $V_i$ is also a query of the view defined by the \RPQ $L_i = \set{u_i,v'_i}$, where
    $v'_i$ is a copy of $v_i$ using symbols of $\sigma'$.
  \item For all $\alpha,\beta \in \sigma$, $T_{\alpha,\beta}$ is a query of the view defined
    by the \CRPQ: $\alpha(x,y) \wedge \exists z,t ~\beta'(z,t)$.
  \item For all $\alpha,\beta \in \sigma$, $T'_{\alpha,\beta}$ is a query of the view defined
    by the \CRPQ: $\alpha'(x,y) \wedge \exists z,t ~ \beta(z,t)$.
  \end{itemize}

  We now prove that $\V = \set{V_\alpha, V_i, T_{\alpha,\beta}, T'_{\alpha,\beta} \ | \
    \alpha,\beta\in\sigma, 0< i \leq n}$ determines $\Q$ if and only if the input is
  accepting for the word problem for graph databases.

  \begin{enumerate}
  \item Assume that the input is accepting for the word problem for graph
    databases. Let $\D$ and $\D'$ be two graph databases such that $\V(\D) =
    \V(\D')$. Consider first the case where $\D$ uses symbols from both
    $\sigma$ and $\sigma'$, then $T_{\alpha,\beta}$ and $T'_{\alpha,\beta}$
    reveal $\D$ entirely, which implies that $\D = \D'$, and thus $\Q(\D) =
    \Q(\D')$. Similarly, if both $\D$ and $\D'$ use only symbols from $\sigma$,
    then $V_\alpha$ reveals $\D$ entirely ensuring that $\D = \D'$. It remains
    to consider the case where $\D$ only uses symbols from $\sigma$ and $\D'$
    only uses symbols from $\sigma'$. Notice that, since $V_\alpha(\D) = V_\alpha(\D')$, then $\D$ and $\D'$
    are isomorphic (by renaming each $\alpha$ to $\alpha'$).

    Let $(x,y) \in u_i(\D)$. Hence, $(x,y) \in V_i(\D)$, which implies that
    $(x,y)\in V_i(\D')$, and finally that $(x,y)\in v'_i(\D')$. By isomorphism
    $(x,y)\in v_i(\D)$. Similarly, we can show that $(x,y)\in v_i(\D)$ implies
    $(x,y) \in u_i(\D)$. Hence, $u(\D) = v(\D)$. Let $(x,y) \in \Q(\D)$. Then,
    $(x,y) \in u(\D)$, which implies that $(x,y) \in v'(\D')$, and thus that
    $(x,y) \in \Q(\D')$. A similar reasoning also gives the converse, and we
    can conclude that $\V$ determines $\Q$.

  \item Assume that $\V$ determines $\Q$. Let $\D$ be a graph database over
    $\sigma$ that satisfies the condition for the word problem.  Let $\D'$ be
    the copy of $\D$ given by renaming the symbols in $\sigma$ by the
    corresponding symbols in $\sigma'$. Remark now that $\V(\D) =
    \V(\D')$. Indeed, $V_\alpha(\D) = V_\alpha(\D')$ is given by the fact that
    $\D'$ is a copy of $\D$ over $\sigma'$.  $V_i(\D) = V_i(\D')$ is given by
    the fact that $\D$ satisfies the condition for the word problem. Finally,
    $T_{\alpha,\beta}(\D) = T_{\alpha,\beta}(\D') = T'_{\alpha,\beta}(\D) = T'_{\alpha,\beta}(\D') = \emptyset$
    comes from the fact that $\D$ (resp. $\D'$) uses only symbols from $\sigma$
    (resp. $\sigma'$).

    Since $\V$ determines $\Q$, this implies that $\Q(\D) = \Q(\D')$. Let
    $(x,y) \in u(\D)$. Then $(x,y) \in \Q(\D)$, which implies that $(x,y) \in
    \Q(\D')$. Hence, $(x,y) \in v'(\D')$, and since $\D'$ is a copy of $\D$,
    this yields $(x,y) \in v(\D)$. A similar reasoning also gives the converse,
    and we can conclude that the input is accepting for the word problem for
    graph databases.
  \end{enumerate}
\end{proof}

\section{Views and Rewriting}
We have seen in the previous section that knowing whether a given view \V
determines a given query \Q is often computationally a difficult task. In this
section we assume that \V determines \Q and we investigate how \Q can be computed
from the given view instance. 

A possibility is to use the following generic algorithm :
\medskip

\noindent\emph{Given a $\tau$-structure \S, compute a $\sigma$-structure \D such that
$\V(\D)=\S$ (reject if no such \D exists) and return $\Q(\D)$.}
\medskip

As we know that
\V determines \Q this procedure always returns the correct answers on view images. Therefore the query over $\tau$ defined by this algorithm is a rewriting of \Q using \V.

For all the query languages considered in this paper, computing $\V(\D)$ and
$\Q(\D)$ can be done in time polynomial in \size{\D}. Hence it remains to be
able to test whether there exists a \D such that $\V(\D)=\S$ and, if yes,
compute such a \D.

The first issue, testing whether a $\tau$-instance is in the image of the view, is already
a challenging task and will be investigated in the next section. We start with
the second problem, i.e. computing a \D such that $\V(\D)=\S$, if it exists.

\subsection{Looking for a view preimage}

We assume in this section that \V is a view from $\sigma$ to $\tau$ and that
we are given a $\tau$-structure \S that is in the image of \V. We are now
looking for a \D such that $\V(\D)=\S$, knowing that one such \D exists.
Our first result below shows that for \RPQ views, if such a \D exists then
there is one whose size is polynomial in \size{\S}. It is essentially a pumping
argument.

\begin{lem}\label{lemma-inverse-view}
  Let \V be an \RPQ view from $\sigma$ to $\tau$. Let \S be a
  $\tau$-structure.  If $\S=\V(\D)$ for some \D 
  then $\S=\V(\D')$, for some $\D'$ of size quadratic in \size{\S}.
\end{lem}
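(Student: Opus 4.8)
The plan is to take an arbitrary preimage $\D$ with $\V(\D)=\S$ and to carve out of it a small one. I would fix, for each $V_i\in\V$, a finite automaton $A_i$ recognising $L(V_i)$, and set $c:=\prod_i|A_i|$ (the number of states of the automaton running all the $A_i$'s in parallel), a constant depending only on $\V$. Write $N$ for the set of nodes occurring in $\S$, i.e. the nodes of $\D$ incident to some $\tau$-edge; then $|N|\le\size{\S}$.

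The easy half is a reduction to witnessing paths combined with monotonicity. For each $\tau$-edge $V_i(x,y)\in\S$ I would choose a path $\rho_{i,x,y}$ of $\D$ from $x$ to $y$ labelled in $L(V_i)$. Because every \RPQ{} is monotone, any database built out of (copies of) the $\rho_{i,x,y}$ that (a) keeps the nodes of $N$, (b) still contains a witness for every $\tau$-edge of $\S$, and (c) is such that every word read along one of its paths between two nodes of $N$ is already the label of some $\D$-path between those nodes, has view image exactly $\S$: the inclusion $\V(\cdot)\subseteq\S$ follows from (c) (decompose a path at its $N$-nodes and concatenate the corresponding $\D$-paths), and $\V(\cdot)\supseteq\S$ follows from (b). In particular, the plain sub-database of $\D$ spanned by the $\rho_{i,x,y}$ is already a preimage; the issue is only its size.

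The real work is bounding the size by $O(\size{\S}^2)$, and this is where a pumping argument is needed. A node $w\notin N$ is tightly constrained: in $\D$ no $L(V_k)$-labelled path starts or ends at $w$, for any $k$, so along a witnessing path through $w$ the parallel run of all the $A_i$'s never reaches an accepting configuration of any $A_k$ on the prefix up to $w$ nor on the suffix from $w$. Consequently a long stretch of such nodes forces a repeated configuration of the combined automaton, which one wants to pump out; doing so cleanly — i.e. replacing each such stretch of a witnessing path by a short fresh detour of length $\le c$, using brand-new private nodes (so distinct rebuilt paths never meet outside $N$, which is what makes condition (c) above survive), while (i) preserving membership of the overall label in $L(V_i)$, (ii) preventing any prefix or suffix of a detour from falling into some $L(V_k)$ (which would create a spurious $\tau$-edge or a spurious active node, violating $\V(\D')=\S$), and (iii) leaving the nodes of $N$ in place and keeping the whole reconstruction globally coherent — is the delicate part, and is exactly what forces working with the product of all the $A_i$'s rather than a single automaton. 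Once the number of non-$N$ nodes contributed by each witnessing path is bounded in terms of $c$ only, multiplying by the number of witnessing paths, at most $|\tau|\cdot|N|^2$, gives $\size{\D'}=O(\size{\S}^2)$ with a constant depending only on $\V$. I expect steps (ii)--(iii) — making the pumped-down detours simultaneously sound for \emph{all} of the automata and mutually consistent — to be the main obstacle.
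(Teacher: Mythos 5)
Your overall architecture matches the paper's: normalize $\D$ into a collection of witnessing paths whose internal nodes are private and lie outside the domain of \S, then shorten each such path to constant length by pumping on the product automaton. But the proposal stops exactly where the work is: you write that you ``expect steps (ii)--(iii) to be the main obstacle'', and the pumping device you sketch --- cut at ``a repeated configuration of the combined automaton'' --- does not suffice as stated. A repeated state of the product automaton, run from the path's own starting point, guarantees that the shortened path still carries an $L(V_i)$-label and that no \emph{prefix} becomes spuriously accepting, but it controls neither (a) the full transition function $\delta_\V(\cdot,\lambda(\pi))$ of the shortened segment --- which matters because a witnessing path for another view edge may enter the segment at one of its endpoints in the domain of \S and traverse it as a \emph{middle} piece, so the relevant run over it starts from an arbitrary state (this is also what breaks soundness for suffixes) --- nor (b) paths of the shortened database that \emph{end at an internal node} of the surviving part of the segment, which would add that node to the domain of the view. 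The paper's proof handles (a) by colouring each pair of positions $(i,j)$ of a $\V$-minimal path with the transition function $f_{ij}=\delta_\V(\cdot,\lambda(\pi_{i\rightarrow j}))$ rather than with a reached state, and (b) by adding to the colour the \emph{set} $\Delta_{ij}$ of transition functions of all proper prefixes of $\pi_{i\rightarrow j}$, then invoking Ramsey's theorem to obtain a monochromatic triangle $i<j<k$: cutting out $\pi_{i\rightarrow j}$ preserves the segment's transition function (since $f_{ik}=f_{jk}$), and any path of the new database ending strictly inside the surviving copy of $\pi_{j\rightarrow k}$ can be rerouted, using $\Delta_{ij}=\Delta_{jk}$, to a path of the old database ending at an internal node --- a contradiction. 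None of this is in your sketch, and a plain pigeonhole on states will not produce it.

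A second, more minor, gap is in the counting. You keep one witnessing path per $\tau$-edge and bound the number of such paths by $|\tau|\cdot|N|^2$, but a single witnessing path may revisit nodes of the domain of \S arbitrarily many times and hence decompose into unboundedly many $\V$-minimal stretches; shortening each stretch to length at most $c$ still leaves a path of unbounded total length. The paper sidesteps this by not keeping witnessing paths at all: it keeps, for each ordered pair of nodes of \S and each transition function $f$ of the product automaton, \emph{one} $\V$-minimal path realizing $f$ (if any exists), for a total of at most $\size{\S}^2\cdot N(\V)^{N(\V)}$ paths, and then argues that every witnessing path of $\D$ is simulated segment by segment inside this collection. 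You would need either this deduplication or an additional pumping step at the level of the sequence of visits to the domain of \S in order to actually reach the quadratic bound.
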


\begin{proof}
  Let \V and \S be as in the statement of the lemma.  We show that if there
  exists $\D$ such that $\S=\V(\D)$ then there exists a new database $\D'$ of
  size $O(\size{\S}^2)$ such that $\V(\D') = \V(\D)$.  $\D'$ is obtained from
  $\D$ in several steps.  First $\D$ is ``normalized", without altering its
  view, so that nodes not occurring in \S appear in only one path linking two
  nodes of \S.  The normalized $\D$ turns out to consist of a constant number
  of disjoint paths between each pair of nodes of $\S$ (where the constant only
  depends on the size of the view automaton).  Then a Ramsey argument is used
  to show that these paths can be ``cut" without changing the view.  The
  resulting database $\D'$ thus consists of a constant number of paths of
  constant length between each pair of nodes of $\S$. The size of $\D'$ is
  therefore $O(\size{\S}^2)$. We now formalize this argument.

  Assume that there exists a database \D such that $\S=\V(\D)$. We prove the
  lemma by constructing a new database $\D'$ such that $\V(\D') = \V(\D)$, with
  $\size{\D'} = O(\size{\S}^2)$.

  Let $A=\langle S_\V,\delta_\V,q^0_\V,F_\V \rangle$ be the product automaton
  of all the deterministic minimal automata of all the regular expressions of
  the \RPQ{}s in \V. Let $N(\V)$ be the number of states of $A$, i.e $|S_\V|$.

  In what follows, for $w \in \sigma^*$, $\delta_\V(\cdot,w)$ denotes the
  function from $S_\V$ to $S_\V$ sending $q$ to $p$ such that there is a run of
  $A$ on $w$ starting in state $q$ and arriving in state~$p$.

  We say that a path $\pi$ from $u$ to $v$ in a database $\D'$ is $\V$-minimal
  if $u,v$ are elements of $\V(\D')$ and no other nodes of $\pi$ are in the domain of $\V(\D')$.
        
  We first build a database $\D_1$ such that :

  \begin{itemize}
  \item $\V(\D_1)=\V(\D)$;
  \item each node of $\D_1$ is in a $\V$-minimal path and no two $\V$-minimal
    paths in $\D_1$ intersect;
  \item the number of $\V$-minimal paths in $\D_1$ is bounded by
    $\size{\V(\D)}^2 \cdot N(\V)^{N(\V)}$.
  \end{itemize}

  $\D_1$ is constructed as follows: All elements of $\V(\D)$ are elements of
  $\D_1$. Moreover, for each function $f : S_\V \rightarrow S_\V$ and each pair
  $(x,y)$ of elements of $\V(\D)$, if there exists a $\V$-minimal path $\pi$
  from $x$ to $y$ in \D and such that $f = \delta_\V(\cdot,\lambda(\pi))$, then
  we add to $\D_1$ a copy of $\pi$ that uses only fresh, non-repeating nodes,
  except for $x$ and $y$.  Figure \ref{figure-inverse-view} illustrates the
  main idea of this construction.

  It is now easy to check that $\D_1$ has the desired properties. The second
  bullet holds by construction. Clearly the number of $f: S_\V \rightarrow
  S_\V$ is bounded by $N(\V)^{N(\V)}$ hence the third bullet holds. It remains
  to check that $\V(\D_1) = \V(\D)$. There is an obvious canonical homomorphism
  sending $\D_1$ to $\D$. Hence $\V(\D_1) \subseteq \V(\D)$. For the converse
  direction, consider a path $\pi$ witnessing the fact that
  $(u,v)\in\V(\D)$. Decompose $\pi$ into $\V$-minimal paths. By construction,
  each of these $\V$-minimal paths can be simulated in $\D_1$. Hence $(u,v) \in
  \V(\D_1)$.

  \medskip

  From $\D_1$ we construct the desired $\D'$ by replacing each 
  $\V$-minimal path of $\D_1$ by another one whose length is
  bounded by a constant $r$ and without affecting the view image. Altogether
  $\D'$ will have a size bounded by $r \cdot \size{\V(\D)}^2 \cdot N(\V)^{N(\V)}$,
  hence polynomial in $|\V(\D)|$ as desired.

  Let $r$ be the Ramsey's number that guarantees the existence of a
  monochromatic 3-clique in an $r$-clique using $N(\V)^{N(\V)}\cdot
  2^{N(V)^{N(V)}}$ colors.

  Consider a $\V$-minimal path $\pi = xa_0x_1a_1\ldots x_{m}a_{m}y$ in
  $\D_1$ such that $m > r$. For $1\leq s<t\leq m$ we denote by $\pi_{s \rightarrow t}$
  the subpath of $\pi$ that starts at position $s$ and ends at position $t$, that is
  $\pi_{s \rightarrow t} = x_sa_sx_{s+1}a_{s+1}\ldots a_{t-1}x_t$.

  To each pair of nodes $(x_i,x_j)$ in $\pi$ with $i<j$, we attribute the color
  $(f_{ij},\Delta_{ij})$ where:
\begin{align*}
  f_{ij} &= \delta_\V(\cdot,\lambda(\pi_{i \rightarrow j}))\\
  \Delta_{ij} &= \set{f : S_\V \rightarrow S_\V \ | \ \exists \alpha, &&i<\alpha<j
  \textrm{ and }\\
&&& f = \delta_\V(\cdot,\lambda(\pi_{i \rightarrow
    \alpha}))}.
\end{align*}
Then, by our choice of $r$, we know that there exist $i<j<k$ such that
$f_{ij}=f_{jk}=f_{ik}$ and $\Delta_{ij}=\Delta_{jk}=\Delta_{ik}$. Let $\pi'$ be
the path constructed from $\pi$ by replacing the
subpath $\pi_{i \rightarrow k}$ by $\pi_{j \rightarrow k}$.

Let $\D_2$ be the database constructed from $\D_1$ by replacing $\pi$ by
$\pi'$. We now prove that $\V(\D_2) = \V(\D_1)$. As $\D_2$ still has all the
properties of $\D_1$ listed above, by repeating this operation until all
$\V$-minimal paths have length less than $r$ we eventually get the desired
database $\D'$.
	
Let $(u,v) \in \V(\D_1)$ as witnessed by a path $\mu$ in $\D_1$. Then $\mu$
neither starts nor ends in an internal node of $\pi$ as internal nodes do not appear in
$\V(\D_1)$. Hence either $\mu$ does not use $\pi$ or it uses all of it.  In the
former case, $\mu$ witnesses the fact that $(u,v)\in\V(\D_2)$. In the latter,
notice that $f_{ik}=f_{jk}$ implies that $\lambda_\V(\cdot,\lambda(\pi)) =
\lambda_\V(\cdot,\lambda(\pi'))$, hence replacing $\pi$ by
$\pi'$ in $\mu$ witnesses the fact that $(u,v)\in\V(\D_2)$. Altogether we have
shown that $\V(\D_1) \subseteq \V(\D_2)$.

Suppose now that $(u,v)\in\V(\D_2)$ as witnessed by a path $\mu$ in $\D_2$.  If
$\mu$ does not go through $x_j$ (i.e. $x_j$ is not an internal node of $\mu$),
 it is also a path in $\D_1$ and
$(u,v)\in\V(\D_1)$.  If $\mu$ goes through $x_j$ but does not end between $x_j$
and $x_k$ we can also conclude that $(u,v)\in\V(\D_1)$ using the fact that
$f_{ik}=f_{jk}$.  It remains to consider the case when $\mu$ ends with $x_j a_j
\ldots a_{\beta-1} x_\beta$ for some $\beta$ with $j < \beta < k$ (in
particular $v=x_\beta$).  As $\Delta_{ij}=\Delta_{jk}$ there exists $\alpha$
with $i < \alpha < j$ such that $\delta_\V(\cdot,\lambda(\pi_{i\rightarrow
  \alpha})) = \delta_\V(\cdot,\lambda(\pi_{j\rightarrow \beta}))$. From this we
can construct a path $\mu'$ in $\D_1$ replacing in $\mu$ the segment $x_j a_j
\ldots a_{\beta-1} x_\beta$ by $x_i a_i \ldots a_{\alpha-1} x_\alpha$,
witnessing the fact that $(u,x_\alpha) \in \V(\D_1)$, a contradiction as
$x_\alpha$ is not an element of $\V(\D_1)$. Altogether we have proved that
$\V(\D_2) \subseteq \V(\D_1)$. Hence, $\V(\D_2) = \V(\D_1)= \V(\D)$.\qedhere
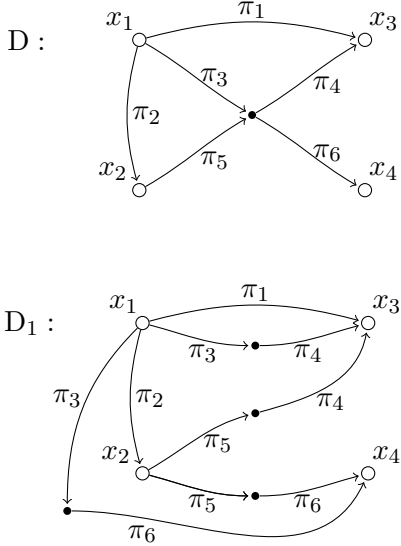
\begin{figure}[h!]
	\centering
	\begin{tikzpicture}[shorten >=1pt,->]
	 	\tikzstyle{vertex}=[circle,fill=black,minimum size=3pt,inner sep=0pt]
	 	\tikzstyle{viewnode}=[draw, circle, minimum size=5pt, inner sep = 0pt]
	 	\foreach \name/\x/\y in {1/0/2, 2/3/2, 3/0/0, 4/3/0} \node[viewnode] (\name) at (\x,\y) {};
		\node[vertex] (5) at (1.5,1) {};
		\draw (1) edge[out=15,in=165] (2) ;
		\draw (1) edge[out=255,in=105] (3) ;
		\draw (1) edge[out=-30,in=150] (5) ;
		\draw (3) edge[out=30,in=210] (5) ;
		\draw (5) edge[out=30,in=210] (2) ;
		\draw (5) edge[out=-30,in=150] (4) ;
		\foreach \name/\x/\y in {1/1.5/2.40, 2/0.1/1, 3/1/1.5, 4/2.5/1.4, 5/1/0.4, 6/2.5/0.5} \node () at (\x,\y) {$\pi_\name$};
		\foreach \name/\x/\y in {1/-0.25/2.25, 2/-0.35/0.25, 3/3.25/2.25, 4/3.25/0.25} \node () at (\x,\y) {$x_\name$};
		\node () at (-1.5,2) {$\D :$};
	\end{tikzpicture}

	\vspace{1cm}

	\begin{tikzpicture}[shorten >=1pt,->]
	 	\tikzstyle{vertex}=[circle,fill=black,minimum size=3pt,inner sep=0pt]
	 	\tikzstyle{viewnode}=[draw, circle, minimum size=5pt, inner sep = 0pt]
	 	\foreach \name/\x/\y in {1/0/2, 2/3/2, 3/0/0, 4/3/0} \node[viewnode] (\name) at (\x,\y) {};
		\node[vertex] (5) at (1.5,0.8) {};
		\node[vertex] (5b) at (1.5,-0.3) {};
		\node[vertex] (5c) at (1.5,1.7) {};
		\node[vertex] (5d) at (-1,-0.5) {};
		\draw (1) edge[out=15,in=165] (2) ;
		\draw (1) edge[out=255,in=105] (3) ;
		\draw (3) edge[out=30,in=195] (5) ;
		\draw (5) edge[out=15,in=260] (2) ;
		\draw (3) edge[out=-15,in=180] (5b) ;
		\draw (3) edge[out=-15,in=180] (5b) ;
		\draw (5b) edge[out=0,in=195] (4) ;
		\draw (1) edge[out=-15,in=180] (5c) ;
		\draw (5c) edge[out=0,in=195] (2) ;
		\draw (1) edge[out=225,in=90] (5d) ;
		\draw (5d) edge[out=0,in=-115] (4) ;
		\foreach \name/\x/\y in {1/1.5/2.40, 2/0.1/1, 3/0.8/1.6, 3/-1/1, 4/2.5/0.95, 4/2.2/1.6, 5/1/0.4, 5/0.8/-0.4, 6/2.2/-0.4, 6/0/-0.8} 
			\node () at (\x,\y) {$\pi_\name$};
		\foreach \name/\x/\y in {1/-0.25/2.25, 2/-0.35/0.25, 3/3.25/2.25, 4/3.25/0.25} \node () at (\x,\y) {$x_\name$};
		\node () at (-1.5,2) {$\D_1 :$};
	\end{tikzpicture}

	\caption{Illustration of the transformation from $\D$ to $\D_1$ in Lemma~\ref{lemma-inverse-view}.
	Nodes are colored white or black depending on whether they appear in $\V(\D)$ or not.}
	\label{figure-inverse-view}
\end{figure}
\end{proof}

In view of Lemma~\ref{lemma-inverse-view}, we know that if \V determines \Q
then there exists a rewriting \R with \np data complexity. Indeed \R is the
query computed by the following non-deterministic polynomial time algorithm: on
an input $\tau$-structure \S, guess from \S a database \D of polynomial size,
check that $\V(\D)=\S$ and then evaluate \Q on \D. There also exists a rewriting 
with \conp data complexity, by considering all databases \D of 
polynomial size such that $\V(\D)=\S$. Altogether we get:

\begin{cor}\label{npconp-result}
  Let \V and \Q be \RPQs such that \V determines \Q. Then there exists a
  rewriting of \Q using \V with \np data complexity, and another one with  
  \conp data complexity.
\end{cor}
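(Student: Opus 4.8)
The plan is to extract both rewritings directly from the generic preimage-based algorithm described at the beginning of this section, using Lemma~\ref{lemma-inverse-view} to bound the search space. First I would set up the \np rewriting. Fix \RPQ{}s \V and \Q with \V determining \Q, and define a query \R over $\tau$ as follows: on input $\tau$-structure \S, nondeterministically guess a $\sigma$-structure \D whose size is at most the quadratic bound $c\cdot\size{\S}^2$ provided by Lemma~\ref{lemma-inverse-view} (where $c$ depends only on \V), verify that $\V(\D)=\S$, and if so output $\Q(\D)$; if the guess fails the check, or if no successful computation branch exists, output $\emptyset$. Since \V and \Q are \RPQ{}s, both $\V(\D)$ and $\Q(\D)$ are computable in time polynomial in $\size{\D}$, hence polynomial in $\size{\S}$, so \R has \np data complexity. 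The correctness on view images is exactly the argument already given for the generic algorithm: if $\S=\V(\D_0)$ for some \D_0, then by Lemma~\ref{lemma-inverse-view} there is a $\D'$ of quadratic size with $\V(\D')=\S=\V(\D_0)$, so some guess succeeds; and for \emph{every} successful guess \D we have $\V(\D)=\S=\V(\D_0)$, so determinacy forces $\Q(\D)=\Q(\D_0)$. Thus on every view image \R returns $\Q(\D_0)$, i.e. \R is a rewriting.

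For the \conp rewriting I would run the complementary computation: on input \S, define $\R'(\S)$ to contain a pair $(x,y)$ iff $(x,y)\in\Q(\D)$ for \emph{every} $\sigma$-structure \D of size at most $c\cdot\size{\S}^2$ with $\V(\D)=\S$ (and $\R'(\S)=\emptyset$ if there is no such \D, though this last convention is irrelevant on view images). Deciding, for a fixed pair $(x,y)$, whether this universally quantified statement \emph{fails} is an \np question — guess a small \D with $\V(\D)=\S$ and $(x,y)\notin\Q(\D)$ — so membership of $(x,y)$ in $\R'(\S)$ is a \conp question, and since there are only $\size{\S}^2$ candidate pairs, $\R'$ has \conp data complexity. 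Correctness again uses Lemma~\ref{lemma-inverse-view} together with determinacy: when $\S=\V(\D_0)$, the set of small preimages is nonempty, and every preimage \D of \S satisfies $\Q(\D)=\Q(\D_0)$, so the conjunction over all small preimages equals $\Q(\D_0)$; hence $\R'(\S)=\Q(\D_0)=\Q(\D_0)$, i.e. $\R'$ is also a rewriting.

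There is essentially no deep obstacle here; the only point requiring a little care is the quantifier structure of the \conp rewriting. One must phrase it \emph{per output pair} rather than trying to decide the whole relation at once, so that the universal quantification over preimages matches the \conp definition; the existential ``guess a preimage'' step has to sit inside the scope of that universal, which is why the complement (``is there a small preimage excluding $(x,y)$?'') is the natural thing to put in \np. Everything else — polynomial evaluation of \RPQ{}s, the size bound, and the appeal to determinacy on view images — is routine given the lemma, so the corollary follows immediately.
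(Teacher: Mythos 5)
Your proposal is correct and follows essentially the same route as the paper: the \np rewriting guesses a preimage of the quadratic size guaranteed by Lemma~\ref{lemma-inverse-view}, checks $\V(\D)=\S$ and evaluates \Q, while the \conp rewriting intersects $\Q(\D)$ over all such small preimages, with determinacy ensuring all preimages agree on view images. The only cosmetic remark is that your extra convention of returning $\emptyset$ when no small preimage exists is best dropped for the \conp version (as you note, it is irrelevant on view images, and keeping it would push the per-pair test outside \conp).
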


It is not known whether, for \RPQ views and queries, determinacy implies the existence of
a rewriting with polynomial time data complexity. The complexity bounds of
Corollary~\ref{npconp-result} are the current best known bounds. We will see in
the next sections that if we further assume that the function induced by \Q
using \V is \emph{monotone} then there exists a rewriting of \Q using \V
definable in \Datalog and therefore computable in polynomial time.

Using a more intricate pumping argument it is possible to show that for any
Conjunctive Regular Path Query view \V, the fact that a view instance is
in the image of \V can also be witnessed by a database of polynomial size. Hence
Corollary~\ref{npconp-result} extends to Conjunctive Regular Path Queries.

However we will see that for Context-Free Path Query views there is no
recursive bound on the size of a database yielding a given view instance. This
will follow from Lemma~\ref{lemma-cf-view-test} showing that,
for Context-Free Path Query views, checking whether a view
instance is in the image of the view is undecidable.

\subsection{Testing for view images}

We now consider the following problem. We are given a view $\V$ from $\sigma$ to $\tau$
and a $\tau$-structure $\S$ and we are asking whether there exists a
$\sigma$-structure \D such that $\V(\D)=\S$. 

Note that this problem is related to the previous one.  In view of
Lemma~\ref{lemma-inverse-view} we immediately get an \np algorithm for testing
membership in the image of an \RPQ view \V: on input $\S$ guess a database $\D$
of size polynomial in $\S$ and check $\V(\D)=\S$.  We will see that testing for
view images is \np-hard for Regular Path Query views and undecidable for
Context-Free Path Query views.

Moreover one can show that if testing for view images can be done in \ptime
then, for \Q and \V such that \V determines \Q, then there exists a rewriting
of \Q using \V with polynomial time data complexity. The polynomial time
algorithm works as follows. On a view instance \S, it first tests whether there
exists a database \D such that $\S=\V(\D)$. If not it rejects. If yes, consider
the schema adding two new letters $a$ and $b$ and consider the query $Q_{a,b}$
asking for a path in the language $a\cdot L(Q) \cdot b$. Define $\V'$ as $\V
\cup \set{\Q_{a,b},V_a,V_b}$ where $V_a$ and $V_b$ return all pairs of nodes
linked by $a$ and $b$ respectively. For each pair $(x,y)$ of nodes of $\S$, let
$\S'$ be \S expanded with the empty relation for $\Q_{a,b}$, a single pair
$(u,x)$ for $V_a$ and a single pair $(y,v)$ for $V_b$ where $u$ and $v$ are two
new nodes. We then test whether $\S'$ is a view image. A simple argument shows
that the test says yes iff $(x,y) \not\in\Q(\D)$ and the algorithm works in
time polynomial in the size of \S.

Unfortunately, as already mentioned, the test for view images is \np-hard
already for \RPQ views.

\begin{lem}\label{lemma-test}
  There is an \RPQ view \V from $\sigma$ to $\tau$ such that given a
  $\tau$-structure \S it is \np-hard to test whether there exists a
  $\sigma$-structure \D such that $\V(\D)=\S$.
\end{lem}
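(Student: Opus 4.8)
The plan is to reduce from an NP-complete problem whose instances naturally encode ``choices'' that a preimage database must commit to, and to engineer an RPQ view \V whose image structures are exactly the ``consistent'' choice-encodings. A convenient source is a problem like 3-colorability or 3-SAT: the idea is that a candidate preimage \D consists of a fixed skeleton, provided directly in \S, together with a collection of gadget-paths whose labels \D is free to pick, and the constraints of the NP-problem are exactly the conditions under which such label choices can be made while realizing \S under \V. So the target \S will be built from the problem instance in polynomial time, and we will argue $\S=\V(\D)$ for some \D iff the instance is a yes-instance.

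Concretely, I would fix a small alphabet $\sigma$ (two or three letters suffice) and define \V to consist of a handful of fixed \RPQ{}s, independent of the instance; all instance-dependent information goes into \S. The first step is to design, for each variable or vertex of the problem instance, a ``selection gadget'': a pair of \S-nodes such that the only way \D can generate the required view edges between them (and avoid generating forbidden ones) is to route a path whose label, forced to lie in a regular language of the form $w_1+\dots+w_c$, picks one of $c$ truth-values/colors. The second step is to design, for each clause or edge of the instance, a ``constraint gadget'': \S-nodes arranged so that \V{}'s edges among them are present iff the selections at the incident variables jointly satisfy the clause/avoid a monochromatic edge — typically by making one of \V{}'s \RPQ{}s test a concatenation of the chosen labels and arranging \S so that exactly the satisfying combinations yield the prescribed edges, while every non-satisfying combination would create a spurious view edge not in \S (or omit a required one). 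The third step is the verification: soundness (a valid coloring/assignment yields a \D with $\V(\D)=\S$, built by instantiating each gadget with the path labels dictated by the solution) and completeness (any \D with $\V(\D)=\S$ induces, via the label on each selection gadget's routing path, a well-defined solution, because the constraint gadgets force consistency). Here Lemma~\ref{lemma-inverse-view} is not needed for hardness, but it reassures us that the problem is in \np, so the lemma statement is tight.

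The main obstacle I anticipate is \emph{leak control}: with RPQ views, paths can meander through the whole graph, so a malicious \D might generate the edges of \S by exploiting unintended routes that splice together fragments of several gadgets, thereby satisfying $\V(\D)=\S$ without encoding a genuine solution (or, dually, inevitably creating an extra view edge that kills an otherwise-valid construction). Taming this requires a careful use of ``separator'' letters or a layered/acyclic structure in \S so that \V{}'s regular languages can only match along the intended gadget paths — e.g. by bracketing each gadget with a unique delimiter symbol and choosing the regular expressions in \V so that any accepted word must be exactly one delimited block. Getting the gadget alphabets and the fixed \RPQ{}s of \V to simultaneously (i) force a choice at each selection gadget, (ii) read off and test choices at each constraint gadget, and (iii) forbid all cross-gadget splicing, is the delicate combinatorial core of the argument; once the gadgets are fixed the two directions of the reduction are routine.
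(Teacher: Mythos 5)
Your plan is exactly the paper's plan: a reduction from 3-colorability with a \emph{fixed} view over a small alphabet, all instance-dependent information placed in \S, choices encoded by the labels a preimage \D must assign, and constraints enforced by demanding that a certain view relation be empty so that any ``bad'' combination of choices would create a spurious view edge. But the proposal stops at the plan. You never fix $\sigma$, never write down the view languages, never define $\S$ from the instance, and you explicitly defer ``the delicate combinatorial core'' --- which is the entire content of the lemma. As it stands this is a strategy statement, not a proof; the gap is the construction itself.

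For the record, the paper's gadgets are much lighter than what you anticipate. Take $\sigma=\set{rg,gr,bg,gb,rb,br}$, i.e.\ one letter per \emph{ordered pair of distinct colors}, so that the requirement $\alpha\neq\beta$ is built into the alphabet and no separate ``proper edge'' check is needed. The view has just two queries: $L(V_1)=\sigma$ (all single letters) and $L(V_2)=\set{\alpha_1\beta_1\cdot\alpha_2\beta_2 \mid \beta_1\neq\alpha_2}$ (the length-two words whose middle colors disagree). Given a connected graph $G=(U,W)$, the target \S interprets $V_1$ as the symmetrized edge relation of $G$ and $V_2$ as empty. A preimage must realize every $G$-edge by some letter $\alpha\beta$, and the emptiness of $V_2$ forces all letters meeting at a node to agree on that node's color; connectivity then yields a well-defined 3-coloring, and conversely a 3-coloring yields a preimage. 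Note that your anticipated ``leak control'' obstacle simply does not arise here: both view languages are finite and of length at most two, so there is no meandering path to splice, and the constraint is negative (an emptiness requirement ranging over \emph{all} length-two paths), so no delimiters or layering are needed. Choosing finite, very short languages and a single global forbidden pattern is precisely what makes the gadget design trivial, and is the idea your proposal is missing.
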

\begin{proof}
  We reduce 3-\textsc{Colorability} to our problem.  The proof is a simple
  variation of the reduction found in \cite{calvanese2000answering} to prove
  that computing certain answers under the sound view assumption is \conp-hard
  in data complexity. 

        Let $\sigma = \{rg,gr,bg,gb,rb,br\}$ and $\tau=\{V_1, V_2\}$. By abuse of notation, we will
        refer to an element of $\sigma$ as $\alpha\beta$, with $\alpha$ and
        $\beta$ two symbols in $\{r,g,b\}$, and $\alpha \neq \beta$.  Let \V be
        the following view from $\sigma$ to $\tau$:

\begin{itemize}
	\item $\V = \set{V_1,V_2}$
        \item $L(V_1) =
                \set{rg,gr,bg,gb,rb,br}$
        \item $L(V_2) = \set{\alpha_1\beta_1\cdot\alpha_2\beta_2 \ | \ \beta_1 \neq
                \alpha_2}$.
            \end{itemize}
            
            Let $G = (U,W)$ be a connected graph. From $G$ we define a
            $\tau$-structure $\S_G$, in which the interpretation of $V_1$ is:
$$\set{(x,y) \ | \ (x,y) \in W \textrm{ or } (y,x) \in W}$$
and the interpretation of $V_2$ is the empty relation.

We show that $G$ is 3-colorable iff there exists $\D$ such that $\V(\D)=E_G$.
Intuitively, the idea is that $\sigma$ describes the colors of the edges of G,
that is the color of the two end points of each edge. For instance, if $x$ and
$y$ are linked by $rg$, then it should be understood that $x$ is red and $y$ is
green. $V_1$ checks that each pair of nodes that are connected in $G$ are
colored with (at least) two different colors, and $V_2$ checks if there is any
error, that is, if a node is required to have more than one color. Since $V_2$
is empty, any graph database \D such that $\V(\D) = \S$ cannot have any
such error, and would thus be 3-colorable.

More precisely, assume that $G$ is 3-colorable. Then there exists a coloring
function $c : U \rightarrow \{r,g,b\}$ such that $c(x)
\neq c(y)$ for all $(x,y) \in W$. We define \D as the $\sigma$-structure such that, 
for each $\alpha\beta \in \sigma$, the interpretation of $\alpha\beta$ in $\D$
is:
\begin{align*}
\set{(x,y) \ | \ & (x,y) \in W \textrm{ or } (y,x) \in W,\\
&\textrm{ and } c(x) = \alpha, c(y) = \beta }.
\end{align*}
It is then easy to check that $\V(\D) = \S_G$. Indeed, for all $x,y,z \in \D$,
if $\alpha_1\beta_1(x,y)$ and $\alpha_2\beta_2(y,z)$ hold in $\D$, then
$\beta_1 = c(y) = \alpha_2$, hence $(x,z) \notin V_2(\D)$, so $V_2(\D)$ is
empty.

Conversely, assume that there exists a graph database \D such that $\V(\D) =
\S_G$. Consider the coloring function $c : U \rightarrow \{r,g,b\}$ defined as:
$c(x) = \alpha$ if there exists $y$ such that $\alpha\beta(x,y)$ holds in \D. 
Since $V_2(\D)$ is empty, it is immediate to check that $c(x)$ is  uniquely 
defined and that $c$ is a proper $3$-coloring of $G$.
\end{proof}

If we go from regular languages to context-free ones, then the problem becomes
undecidable.

\begin{lem}
  Let $\V$ be a Context-Free Path Query view from $\sigma$ to $\tau$. Let $\S$ be a
  $\tau$-instance. Then it is undecidable whether there exists a
  $\sigma$-structure $\D$ such that $\V(\D) = \S$.
\end{lem}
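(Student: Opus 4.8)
The plan is to reduce from the universality problem for context-free languages: given a context-free grammar $G$ over a binary alphabet $\Gamma$, it is undecidable whether $L(G) = \Gamma^*$. From such a $G$ I would build a Context-Free Path Query view $\V$ and a $\tau$-structure $\S$ for which $\S$ lies in the image of $\V$ if and only if $L(G) \neq \Gamma^*$. A decision procedure for membership in the image of a Context-Free Path Query view would then decide non-universality, and hence, by complementation, universality of context-free languages, which is impossible; this establishes the lemma. As a byproduct the undecidability also yields the absence of a recursive bound on the size of a preimage announced before the statement, since a recursive bound would make image membership decidable by brute-force search.

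For the construction, let $\$$ be a fresh symbol not in $\Gamma$, and put $\sigma = \Gamma \cup \set{\$}$ and $\tau = \set{V_1, V_2}$. Let $\V = \set{V_1, V_2}$, where $V_1$ is the query defined by $L(V_1) = \$ \cdot \Gamma^* \cdot \$$ (a regular language, hence definable by a context-free grammar, so $V_1$ is a genuine Context-Free Path Query) and $V_2$ is defined by $L(V_2) = \$ \cdot L(G) \cdot \$$ (context-free, being $L(G)$ concatenated with two singletons). Let $\S$ be the $\tau$-structure with nodes $s$ and $t$, with $V_1(\S) = \set{(s,t)}$ and $V_2(\S) = \emptyset$. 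The idea is that a preimage $\D$ of $\S$ is forced by $V_1$ to contain a path from $s$ to $t$ whose label has the shape $\$ w \$$ with $w \in \Gamma^*$, while the emptiness of $V_2$ forbids any path of $\D$ labelled $\$ w' \$$ with $w' \in L(G)$. The intermediate nodes of the witnessing path need not occur in $\V(\D) = \S$, and it is precisely this freedom (a preimage may have far more nodes than $\S$) that the reduction exploits.

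For correctness, the easy direction is: if $L(G) = \Gamma^*$ and $\D$ is any preimage of $\S$, then $(s,t) \in V_1(\D)$, so $\D$ has a path from $s$ to $t$ labelled $\$ w \$$ for some $w \in \Gamma^* = L(G)$; the same path then witnesses $(s,t) \in V_2(\D)$, contradicting $V_2(\S) = \emptyset$, so $\S$ has no preimage. Conversely, if some word $w_0 = a_1 \cdots a_k \in \Gamma^* \setminus L(G)$ exists, with each $a_i \in \Gamma$, let $\D$ be the simple path $s \xrightarrow{\$} x_1 \xrightarrow{a_1} x_2 \xrightarrow{a_2} \cdots \xrightarrow{a_k} x_{k+1} \xrightarrow{\$} t$ on fresh intermediate nodes $x_1, \ldots, x_{k+1}$ (degenerating to $s \xrightarrow{\$} x_1 \xrightarrow{\$} t$ when $w_0 = \varepsilon$). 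One then checks $\V(\D) = \S$: the only path of $\D$ with label in $\$ \Gamma^* \$$ is the one from $s$ to $t$, labelled $\$ w_0 \$$, so $V_1(\D) = \set{(s,t)}$; since $w_0 \notin L(G)$, no path of $\D$ has label in $\$ L(G) \$$, so $V_2(\D) = \emptyset$; and no intermediate node $x_i$ occurs in a tuple of $V_1(\D)$, so $\V(\D)$ has exactly the nodes $s$ and $t$, like $\S$.

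The only step requiring real care, and the one I would single out as the crux, is the claim that $\D$ contains no ``spurious'' $\$ \Gamma^* \$$-labelled path besides the one from $s$ to $t$. This reduces to a short case analysis on the two $\$$-edges of $\D$, namely $(s,x_1)$ and $(x_{k+1},t)$: any path whose label lies in $\$ \Gamma^* \$$ must start and end with a $\$$-edge; a path starting at $x_{k+1}$ hits the sink $t$ after reading a single $\$$; and a path starting at $s$ must run through the whole chain $x_1 \to \cdots \to x_{k+1}$ (the only $\Gamma$-edges of $\D$) before it can meet the second $\$$-edge. Keeping the chosen preimage free of self-loops and of any extra edges is exactly what makes this go through; all remaining checks are routine and the whole construction is effective, so the reduction gives the stated undecidability.
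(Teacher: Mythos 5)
Your proposal is correct and follows essentially the same reduction as the paper: both reduce from universality of context-free languages using a view with one query for $\$\cdot\Gamma^*\cdot\$$ and one for $\$\cdot L(G)\cdot\$$, and an instance where the former holds of a single pair and the latter is empty, so that a preimage exists iff $L(G)$ is not universal (your naming of $V_1$ and $V_2$ is merely swapped relative to the paper's). The extra case analysis you give for the absence of spurious $\$\Gamma^*\$$-labelled paths is the step the paper dismisses as "easy to check," and it is a welcome but not substantively different addition.
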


\begin{proof}
  We prove this by reduction from the universality problem for context-free
  languages. Let $L$ be a context-free language over some alphabet
  $\sigma$. Let $\$$ be a fresh symbol that does not appear in $\sigma$. Let
  $\V = \set{V_1, V_2}$, where $V_1$ is defined by $L(V_1) = \$\cdot L \cdot
  \$$ and $V_2$ is defined by $L(V_2) = \$\cdot \sigma^* \cdot \$$. Finally,
  let $\S$ be the view instance that contains a single pair $(x,y)$ in $V_2$
  and no pair in $V_1$. Then there exists $\D$ such that $\V(\D) = \S$ if and
  only if $L$ is not universal over $\sigma$.

  \begin{itemize}
  \item Assume that there exists a database $\D$ such that $\V(\D) = \S$. Then
    there exists a path $\pi$ from $x$ to $y$ such that $\lambda(\pi) \in
    L(V_2)$. Hence there exists $w\in \sigma^*$ such that $\lambda(\pi) =
    \$\cdot w\cdot\$$. However, $\lambda(\pi) \notin L(V_1)$. Hence $w \notin
    L$ and $L$ is not universal.

  \item Conversely, assume that $L$ is not universal. Then there exists $w \in
    \sigma^*$ such that $w\notin L$. Then it is easy to check that the database
    $\D$ consisting of a simple path labeled by $\$\cdot w \cdot\$$ satisfies
    $\V(\D) = \S$.\qedhere
	\end{itemize}
\end{proof}	

\noindent A more intricate argument shows that undecidability already holds for a \emph{fixed} view definition~\V.

\begin{lem}\label{lemma-cf-view-test}
  There exists a fixed Context-Free Path Query view \V from $\sigma$ to $\tau$ such that, given a
  $\tau$-structure \S, it is undecidable whether there exists a $\sigma$-structure \D
  such that $\V(\D) = \S$.
\end{lem}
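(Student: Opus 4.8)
The plan is to reduce from the halting problem of a fixed Turing machine $M_0$ with undecidable halting problem (for instance a universal machine): given a word $v$, decide whether $M_0$ halts when started on $v$. Since $\V$ must be fixed we cannot put $v$ into $\V$, so we encode $v$ into the input $\tau$-structure $\S=\S_v$, and design a fixed $\CFPQ$ view whose two ``interesting'' components recognise, along a preimage, a complete halting computation of $M_0$ on the input spelled out inside $\S_v$. The alphabet $\sigma$ is fixed and small: it contains the states and tape symbols of $M_0$ (a fixed finite set), a separate copy $\Gamma$ of the input alphabet of $M_0$, and a few separators.

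The view $\V$ consists of: (i) trivial components $V_\gamma$ with $L(V_\gamma)=\{\gamma\}$ for each $\gamma\in\Gamma$, whose only role is to make the $\Gamma$-reduct of any preimage rigid; and (ii) two components $V_1^\ast,V_2^\ast$ whose context-free languages are the two fixed languages of the classical ``computation history'' construction. A word over $\sigma$ lies in $L(V_1^\ast)\cap L(V_2^\ast)$ exactly when it has the form $\#\,C_0\,\#\,C_1^R\,\#\,C_2\,\#\cdots\#\,C_n\,\#$, where $C_0,\dots,C_n$ encode configurations of $M_0$, $C_0$ is an initial configuration whose input portion is the translation into the tape alphabet of a maximal $\Gamma$-block, $C_n$ is halting, and $C_{i+1}$ follows from $C_i$ by a transition of $M_0$. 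One of the two languages verifies the even transitions and the other the odd ones, so that a single pushdown pass suffices; the alternating orientation of the $C_i$'s lets the stack compare consecutive configurations; and $M_0$ being fixed, its transition table is hard-wired into the finite control, so both languages are context-free and, crucially, do not depend on $v$.

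I would then build $\S_v$ to consist of: a simple path $p\xrightarrow{v}q$ spelling $v$ over $\Gamma$ (bracketed by separator edges), together with exactly one $V_1^\ast$-edge and one $V_2^\ast$-edge, both on a pair $(x,y)$. The $\Gamma$-edges of any preimage $\D$ are forced by the $V_\gamma$'s to be exactly this path, and since $\Gamma$ is disjoint from the tape alphabet, the configuration part of any computation history carried by $\D$ cannot reuse them; hence the input portion of $C_0$ is forced to be (the translation of) $v$ itself, while $\D$ remains free to carry an arbitrarily long path over the tape and state letters spelling out the rest of the history. Consequently there is a preimage $\D$ with $\V(\D)=\S_v$ if and only if some word of $L(V_1^\ast)\cap L(V_2^\ast)$ can be read along a path from $x$ to $y$ of this shape, which holds if and only if $M_0$ has a halting computation on $v$. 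The forward direction reads the history off the $V_1^\ast$/$V_2^\ast$-witness path; for the backward direction one assembles $\D$ from the $v$-path plus a path of fresh, non-repeating nodes spelling the history.

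The delicate point — the reason the argument is ``more intricate'' — is the \emph{exactness} of the condition $\V(\D)=\S_v$, not merely $\V(\D)\supseteq\S_v$. One must check that the assembled $\D$ produces no spurious view edges: that the only pair carrying a path in $L(V_1^\ast)$ (resp.\ $L(V_2^\ast)$) is $(x,y)$, and that the fresh history path and its separator edges add no unintended $V_\gamma$-edges. This is arranged by giving the words of $L(V_1^\ast),L(V_2^\ast)$ a rigid delimiter pattern, so that only paths starting and ending at distinguished separator-framed nodes can match, and by keeping the history-path alphabet disjoint from $\Gamma$; verifying that these choices simultaneously forbid spurious witnesses and still admit the intended preimage is the main technical work of the proof.
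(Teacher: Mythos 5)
Your overall architecture matches the paper's in spirit: a fixed view over a fixed alphabet split into a ``visible'' part (revealed letter-by-letter by trivial view components, which pins down that part of any preimage) and a ``hidden'' part carrying a witness path, with the varying data of the undecidable problem encoded in the view instance rather than in $\V$. Reducing from the halting problem of a universal machine via computation histories, instead of from PCP as the paper does, is a legitimate variation. However, there is a genuine gap in your central mechanism. You certify a valid halting history by requiring the pair $(x,y)$ to carry both a $V_1^\ast$-edge and a $V_2^\ast$-edge, intending that some single word of $L(V_1^\ast)\cap L(V_2^\ast)$ be read from $x$ to $y$. But in a preimage $\D$ these two view edges may be witnessed by two \emph{different} paths: one fresh path whose label passes only the even-transition checks and another whose label passes only the odd-transition checks, neither encoding a correct full history. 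Such a $\D$ can still satisfy $\V(\D)=\S$: the two witness paths can be internally disjoint over fresh nodes (sharing only the forced $\Gamma$-segment), so with your rigid delimiters they create no spurious view edges. Hence the forward direction of your equivalence --- ``a preimage exists only if $M_0$ halts on $v$'' --- fails.

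The paper's proof confronts exactly this issue. Besides the positively required components $V_u,V_v$, it adds \emph{error-detecting} components ($V'_u,V'_v,V_g,V_c$) whose languages contain all malformed or incorrect candidate words and which are required to be \emph{empty} in $\S$; then the single path witnessing $V_u$ cannot lie in any error language, so it is forced to also satisfy the $V_v$-condition and to encode a genuine solution. You can repair your construction the same way: keep one positive component whose language only enforces the delimiter format, and add a component for the (context-free) set of \emph{invalid} computation histories, required to be empty on $\S$. A second, smaller point: checking that the input portion of $C_0$ equals the translation of the $\Gamma$-block is a copy condition, and languages of the form $\{w\,\#\,h(w)\}$ are not context-free; you need the block to appear reversed relative to $C_0$ (the paper's encoding does this systematically with words $\tilde w$). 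This is fixable, but it must be built explicitly into the encoding.
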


\begin{proof}
  Let $\sigma = \set{(,;,),a,b,\$,1}$. Let $\boldsymbol{\sigma}$ be a copy of
  $\sigma$ with fresh symbols. For $\alpha \in \sigma$, we denote by
  $\boldsymbol{\alpha}$ the corresponding symbol in $\boldsymbol{\sigma}$.  For
  $w$ a word, $\tilde{w}$ denote the word corresponding to $w$ read from right
  to left.  $\V$ consists of views that reveal each symbol in $\sigma$, that
  is, for all $\alpha \in \sigma$, $\V$ contains a view $V_\alpha$ defined by
  $L(V_\alpha) = \set{\alpha}$. Additionally, $\V$ contains the queries
	$V_u$, $V_v$, $V'_u$, $V'_v$, $V_g$ and $V_c$ defined by the following 
	equations:
  	$$
		L(V_u) = \left \{
		\begin{array}{cc}
		 \boldsymbol{\$} \cdot w \cdot \boldsymbol{\$} \cdot \po i_1 \pv v_1 \pv u_1
      \pf \ldots \po i_n \pv v_n \pv u_n \pf \cdot \boldsymbol{\$} \ |\\ 
			 w,u_k,v_k \in
      \set{\boldsymbol a,\boldsymbol b}^*, i_k \in \boldsymbol 1^*,
      u_1\cdot\ldots\cdot u_n = \tilde w
		\end{array}
		\right \}
	$$
	$$
		L(V_v) = \left \{
		\begin{array}{cc}
		 	\boldsymbol{\$} \cdot w \cdot \boldsymbol{\$} \cdot \po i_1 \pv v_1 \pv u_1
      \pf \ldots \po i_n \pv v_n \pv u_n \pf \cdot \boldsymbol{\$} \ |\\ 
			 w,u_k,v_k \in
      \set{\boldsymbol a,\boldsymbol b}^*, i_k \in \boldsymbol 1^*,
      v_1\cdot\ldots\cdot v_n = \tilde w
		\end{array}
		\right \}
	$$
	$$
		L(V'_u) = \left \{
		\begin{array}{cc}
			\boldsymbol{\$} \cdot w \cdot \boldsymbol{\$} \cdot \po i_1 \pv v_1 \pv
      u_1 \pf \ldots \po i_n \pv v_n \pv u_n \pf \cdot \boldsymbol{\$} \ |\\ 
			w,u_k,v_k \in
      \set{\boldsymbol a,\boldsymbol b}^*, i_k \in \boldsymbol 1^*,
      u_1\cdot\ldots\cdot u_n \neq \tilde w
		\end{array}
		\right \}
	$$
	$$
		L(V'_u) = \left \{
		\begin{array}{cc}
			\boldsymbol{\$} \cdot w \cdot \boldsymbol{\$} \cdot \po i_1 \pv v_1 \pv
      u_1 \pf \ldots \po i_n \pv v_n \pv u_n \pf \cdot \boldsymbol{\$} \ |\\ 
			w,u_k,v_k \in
      \set{\boldsymbol a,\boldsymbol b}^*, i_k \in \boldsymbol 1^*,
      v_1\cdot\ldots\cdot v_n \neq \tilde w
		\end{array}
		\right \}
	$$
	$$
		L(V_g) = \left \{
		\begin{array}{cc}
			\$ \cdot (u_1;v_1;i_1) \cdot \ldots \cdot
      (u_n;v_n;i_n) \cdot \boldsymbol{\$} \cdot \boldsymbol{\sigma}^* \cdot \boldsymbol{\$} \cdot
      \boldsymbol{\sigma}^* \cdot \po i' \pv v' \pv u' \pf \ |\\ 
			u_k,v_k \in
      \set{a,b}^*, i_k\in 1^*, u',v' \in \set{\boldsymbol a, \boldsymbol b}^*,
      i' \in \boldsymbol 1^*, i' > i_n
		\end{array}
		\right \}
	$$
	$$
		L(V_c) = \left \{
		\begin{array}{ccc}
			\$ \cdot (u_1;v_1;i_1) \cdot \ldots \cdot
      (u_n;v_n;i_n) \cdot \boldsymbol{\$} \cdot \boldsymbol{\sigma}^* \cdot \boldsymbol{\$} \cdot
      \boldsymbol{\sigma}^* \cdot \po i' \pv v' \pv u' \pf \ |\\ 
			u_k,v_k \in
      \set{a,b}^*, i_k\in 1^*, u',v' \in \set{\boldsymbol a, \boldsymbol b}^*,
      i' \in \boldsymbol 1^*,\\ 
			\exists k, i_k = \varphi(i'), u_k \neq
      \varphi(\tilde u') \text{ or } v_k \neq \varphi(\tilde v')
		\end{array}
		\right \}
	$$
	where $\varphi$ is the function that maps each symbol in $\boldsymbol \sigma$ 
	to the corresponding symbol in $\sigma$.

  One can check that all these languages are actually context-free languages.

  We now prove that, given a view instance $\S$ for this specific view $\V$, it
  is undecidable whether there exists a database $\D$ such that $\V(\D) =
  \S$. We prove this by reduction from the Post Correspondence Problem
  (PCP). Let $(u_i,v_i,i)_{0 < i \leq n}$ be an instance of PCP over
  $\set{a,b}$, where the third argument explicitly gives the index of each
  pair. We build the following view instance $\S$:

	\begin{center}
	\begin{tikzpicture}[shorten >=1pt, ->]
		\tikzstyle{vertex}=[circle,fill=black,minimum size=3pt,inner sep=0pt]
			\node[vertex,label=above:{$x_0$}] (1) at (0,0) {};
			\node[vertex,label=above:{$x_1$}] (2) at (1,0) {};
			\draw (1) edge node[below] {$V_\$$} (2);
			\node[vertex] (3) at (2,0) {};
			\draw (2) edge node[below] {$V_($} (3);
			\node[vertex] (4) at (3,0) {};
			\draw (3) edge[dashed] node[below] {$``V_{u_1}"$} (4);
			\node[vertex] (5) at (4,0) {};
			\draw (4) edge node[below] {$V_;$} (5);
			\node[vertex] (6) at (5,0) {};
			\draw (5) edge[dashed] node[below] {$``V_{v_1}"$} (6);
			\node[vertex] (7) at (6,0) {};
			\draw (6) edge node[below] {$V_;$} (7);
			\node[vertex] (8) at (7,0) {};
			\draw (7) edge node[below] {$V_1$} (8);
			\node[vertex, label=above:{$x_2$}] (9) at (8,0) {};
			\draw (8) edge node[below] {$V_)$} (9);
			\node[vertex, label=above:{$x_n$}] (11) at (1,-2) {};
			\draw (9) edge[dotted, thick] (11);
			\node[vertex] (12) at (2,-2) {};
			\draw (11) edge node[below] {$V_($} (12);
			\node[vertex] (13) at (3,-2) {};
			\draw (12) edge[dashed] node[below] {$``V_{u_n}"$} (13);
			\node[vertex] (14) at (4,-2) {};
			\draw (13) edge node[below] {$V_;$} (14);
			\node[vertex] (15) at (5,-2) {};
			\draw (14) edge[dashed] node[below] {$``V_{v_n}"$} (15);
			\node[vertex] (16) at (6,-2) {};
			\draw (15) edge node[below] {$V_;$} (16);
			\node[vertex] (17) at (7,-2) {};
			\draw (16) edge[dashed] node[below] {$V_1^n$} (17);
			\node[vertex, label=above:{$x_{n+1}$}] (18) at (8,-2) {};
			\draw (17) edge node[below] {$V_)$} (18);
			\node[vertex, label=below right:{$x_{end}$}] (22) at (9,-4) {};
			\draw (18) edge node[right] {$V_u$, $V_v$} (22) ;

			\draw [decorate,decoration={brace,amplitude=10pt,mirror,raise=4pt},-]
				(10,-1.95) -- (10,0) node [black,midway,xshift=1.7cm] {\footnotesize{PCP encoding}};
			\draw [decorate,decoration={brace,amplitude=10pt,mirror,raise=4pt},-]
				(10,-4) -- (10,-2.05) node [black,midway,xshift=1.95 cm] {\footnotesize{solution encoding}};

	\end{tikzpicture}
	\end{center}

	We now show that there exists $\D$ such that $\V(\D) = \S$ if and only
        if the PCP instance is satisfiable.  Intuitively, $\S$ consists of two
        parts. The first part, from $x_0$ to $x_{n+1}$ is the encoding of the
        PCP instance. It uses letters from $\sigma$ that are all revealed by
        the view. All tuples are simply enumerated in the natural order, where
        the $i$th tuple is encoded between $x_i$ and $x_{i+1}$. The dashed
        arrows $V_{u_i}$ and $V_{v_i}$ represent the correct succession of
        $V_a$ and $V_b$ that naturally encode $u_i$ and $v_i$, whereas the
        $V_1^i$ part is the unary encoding of $i$, the index of the tuple. The
        second part of the instance states the existence of a solution for this
        instance, and uses ``hidden" letters from $\boldsymbol \sigma$. $V_u$
        and $V_v$ states that there exists a solution, and the fact that all
        other views are empty checks that this solution is correct.
	\begin{itemize}
        \item Assume that there exists a database $\D$ such that $\V(\D) =
          \S$. Then there exists a path $\pi$ from $x_{n+1}$ to $x_{end}$ such
          that $\lambda(\pi) \in L(V_u)$. Hence, this path is of the form $\$
          \cdot w \cdot \$ \cdot \po i_{1} \pv v'_{1} \pv u'_{1} \pf \ldots \po
          i_{m} \pv v'_{m} \pv u'_{m} \pf \cdot \$$, where $w$ is a word in
          $\boldsymbol \sigma^*$ and $u'_{1} \ldots u'_{m} = \tilde w$. Remark
          that is also holds that $v'_{1} \ldots v'_{m} = \tilde w$, otherwise
          $\lambda(\pi) \in V'_v$, which would imply that $(x_{n+1},x_{end})\in
          V'_v(\D)$, and lead to a contradiction.

          Hence, $u'_1\ldots u'_m = v'_1 \ldots v'_m$. It remains to show that
          each $\po i_{i} \pv v'_{i} \pv u'_{i} \pf$ is an encoding of the
          mirror of some tuple in the PCP instance, which would imply a
          solution as $\tilde u'_m\ldots \tilde u'_1 = \tilde v'_m \ldots
          \tilde v'_1$. In other words, $u_{\size{i_m}}\ldots u_{\size{i_1}} =
          v_{\size{i_m}}\ldots v_{\size{i_1}}$.

          Assume that one of the $\po i_{i} \pv v'_{i} \pv u'_{i} \pf$ is not
          the mirror of some tuple encoded in the first half of the
          instance. Remark that $\size{i_i} \leq n$. Otherwise, there exists a
          path whose label is in $L(V_g)$, which leads to a
          contradiction. Hence, either $u'_i \neq \tilde u_{\size{i_i}}$ or
          $v'_i \neq \tilde v_{\size{i_i}}$.  Both cases lead to the existence
          of a path whose label is
          in $L(V_c)$, and thus to a contradiction.

        \item Assume that there exists a solution $i_1\ldots i_m$ to the PCP
          instance. Then the database $\D$ that consists of the following
          simple path is such that $\V(\D) = \S$:
$$\$(u_1;v_1;1)\ldots(u_n;v_n;1^n)\boldsymbol{\$}\bold{u_{i_1}\ldots u_{i_m}}\boldsymbol{\$}
\po \bold 1^{i_m} \pv \bold{\tilde v_{i_m}} \pv \bold{\tilde u_{i_m}} \pf
\ldots \po \bold 1^{i_1} \pv \bold{\tilde v_{i_1}} \pv \bold{\tilde u_{i_1}}
\pf \boldsymbol{\$}$$
 where $\bold{u_i}$ and $\bold{v_i}$ simply represent the corresponding
$u_i$ and $v_i$ written using $\boldsymbol a$ and $\boldsymbol b$ instead of
$a$ and $b$.\qedhere
	\end{itemize}
\end{proof}

\noindent Note that in the proof of Lemma~\ref{lemma-cf-view-test} the view instance is a
coding of a PCP instance and the corresponding database a coding of a
solution. As there is no recursive bound on the size of a solution of a PCP
instance, for Context-Free Path Query views, there are no recursive bound on
the size of a database that yields a given view instance. This is to be
compared with the polynomial bound for \RPQ views shown in
Lemma~\ref{lemma-inverse-view}.

\section{Monotone determinacy and rewriting}\label{section-monotone}

As Example~\ref{example-determinacy} shows, there is an \RPQ view \V and an \RPQ
query \Q such that \V determines \Q but the function induced by \Q using \V is not
monotone, therefore having no \RPQ rewriting. It is natural to wonder
whether the monotonicity of the function induced by the query is the only limit
for the existence of an \RPQ rewriting. Recall from the introduction that if \V
and \Q are defined using \CQ{}s and \V determines \Q, then the function induced
by \Q using \V is monotone iff there exists a \CQ rewriting.  In the case of
\RPQ views and queries the analog does not hold. We will see that, even if we
assume monotonicity, an \RPQ rewriting need not exist; however in the next
section we will show that a rewriting definable in \Datalog always exists. We
start by formalizing the notion of monotone determinacy.

\begin{defi}[Monotone determinacy]
  We say that a view \V determines a query \Q in a monotone way if \V
  determines \Q and the function induced by \Q using \V is monotone.
\end{defi}

It is rather immediate to see that monotone determinacy is equivalent to the
following property for \V and \Q:
\begin{equation*}
\forall \D,\D', \quad \V(\D) \subseteq \V(\D') \ \Rightarrow \ \Q(\D)
\subseteq \Q(\D')
\end{equation*}
This turns out to coincide with the notion of \emph{losslessness under the
  sound view assumption} defined in~\cite{calvanese2002lossless}, that was
shown to be decidable, actually \expspace-complete, for \RPQ{}s.

\begin{cor}\label{cor:mon-det}
	The monotone determinacy problem for \RPQ{}s is \expspace-complete.
\end{cor}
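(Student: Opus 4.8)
The plan is to derive the corollary by chaining two purely definitional equivalences, leaving all the computational content to a cited theorem. First I would verify, in both directions, that \V determines \Q in a monotone way if and only if
\begin{equation*}
\forall \D,\D', \quad \V(\D) \subseteq \V(\D') \ \Rightarrow \ \Q(\D) \subseteq \Q(\D').
\end{equation*}
For the forward direction, if \V determines \Q with monotone induced function \f, then $\V(\D)\subseteq\V(\D')$ gives $\Q(\D)=\f(\V(\D))\subseteq\f(\V(\D'))=\Q(\D')$ by monotonicity of \f. Conversely, applying the displayed property with $\V(\D)=\V(\D')$ yields both inclusions, hence $\Q(\D)=\Q(\D')$, so \V determines \Q; and then for any two view images $\S_1=\V(\D_1)\subseteq\S_2=\V(\D_2)$ the property gives $\f(\S_1)=\Q(\D_1)\subseteq\Q(\D_2)=\f(\S_2)$, so \f is monotone on its domain.

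Next I would unfold the definition of \emph{losslessness under the sound view assumption} from \cite{calvanese2002lossless}: under the sound view semantics a view instance \S is read as a lower bound, so the databases consistent with \S are exactly those \D with $\S\subseteq\V(\D)$, and \V is lossless for \Q when for every \D the set of certain answers, namely $\bigcap\{\Q(\D') : \V(\D)\subseteq\V(\D')\}$, equals $\Q(\D)$. Since \D itself belongs to this intersection, the inclusion $\bigcap\{\Q(\D') : \V(\D)\subseteq\V(\D')\}\subseteq\Q(\D)$ holds automatically, so losslessness is equivalent to demanding $\Q(\D)\subseteq\Q(\D')$ whenever $\V(\D)\subseteq\V(\D')$, which is precisely the displayed property. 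Hence monotone determinacy and losslessness under the sound view assumption are the same property of the pair $(\V,\Q)$.

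Finally I would invoke the theorem of \cite{calvanese2002lossless} stating that deciding losslessness under the sound view assumption for \RPQ views and queries is \expspace-complete; combined with the equivalence established above, this immediately gives the corollary. I do not anticipate any genuine obstacle: both equivalences are routine once the definitions are spelled out, and the hardness and membership are supplied wholesale by the cited result. The only point requiring a little care is making sure the "certain answers under sound semantics" formulation used in \cite{calvanese2002lossless} is aligned exactly with the containment reformulation of monotone determinacy, so that the reduction between the two decision problems is the identity on instances.
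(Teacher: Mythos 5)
Your proposal is correct and follows essentially the same route as the paper: establish that monotone determinacy is equivalent to the containment property $\V(\D)\subseteq\V(\D')\Rightarrow\Q(\D)\subseteq\Q(\D')$, identify this with losslessness under the sound view assumption, and import the \expspace-completeness result of~\cite{calvanese2002lossless}. The paper states these equivalences without detail, so your explicit verification of both directions is a welcome elaboration rather than a divergence.
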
 

Note that in the proof of Proposition~\ref{prop-cfpq-determinacy}, the
rewriting is always monotone when the view determines the query. Therefore, for
Context-Free Path Query views and \RPQ queries, monotone determinacy is
undecidable.

Recall from Example~\ref{example-determinacy} that there exist a view and a query
such that the view determines the query but not in a monotone way.
We now assume given an \RPQ view \V and an \RPQ query \Q such that \V
determines \Q in a monotone way.  It was observed in~\cite{calvanese2002lossless}
that even in this case there might be no rewriting definable in \RPQ.

In fact, given \V and \Q defined using \RPQs, it is decidable whether an \RPQ
rewriting exists and the problem is
\twoexpspace-complete~\cite{calvanese02jcss}. As testing monotone determinacy
is \expspace-complete, a simple complexity argument shows that an \RPQ
rewriting is not guaranteed to exist under monotone determinacy.

Here is a concrete example witnessing this fact.\footnote{A similar example was 
claimed in~\cite[Example 4]{calvanese2002lossless} but it seems that in this 
example \V and \Q are such that \V does not determine \Q.}

\begin{myexample}\label{example-norpq}
Let $\sigma=\set{a,b,c}$. Let \Q and \V be defined as follows:
	\begin{itemize}
		\item $\Q = ab^*a \ | \ ac^*a$
		\item $\V=\set{V_1,V_2,V_3}$ with
			\begin{itemize}
				\item $V_1 = ab^*$
				\item $V_2 = ac^*$ 
				\item $V_3 = b^*a \ | \ c^*a$
			\end{itemize}
	\end{itemize}
One can verify that \V determines \Q as witnessed by the following
rewriting $\R(x,y)$:
\begin{equation*}
\exists z ~ V_1(x,z) \wedge V_2(x,z) \wedge V_3(z,y)
\end{equation*}
That \R is a rewriting is illustrated in Figure~\ref{figure-norpq}.  Consider
the database \D of Figure~\ref{figure-norpq} which is a typical database such
that $(x,y) \in \Q(\D)$. The choice of $z$ witnessing $(x,y) \in
R(\V(\D))$ is then immediate. Conversely, consider the database $\D'$ of
Figure~\ref{figure-norpq}. It is a typical database such that $(x,y) \in
R(\V(\D))$. The top path shows that $(x,y)~\in~\Q(\D)$.

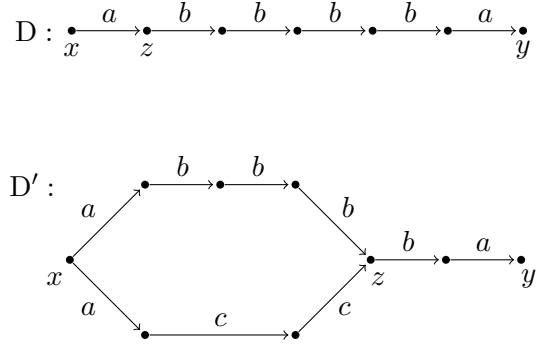
\begin{figure}[h!]
	\centering
	\begin{tikzpicture}[shorten >=1pt,->]
	 	\tikzstyle{vertex}=[circle,fill=black,minimum size=3pt,inner sep=0pt]
	 	\foreach \name/\x/\y in {1/0/0, 2/1/0, 3/2/0, 4/3/0, 5/4/0, 6/5/0, 7/6/0} \node[vertex] (\name) at (\x,\y) {};
		\foreach \from/\to in {1/2, 2/3, 3/4, 4/5, 5/6, 6/7} \draw (\from) -- (\to);
	  \node () at (0.5,0.2) {$a$} ; \node () at (5.5,0.2) {$a$} ;
		\foreach \x in {1.5, 2.5, 3.5, 4.5} \node () at (\x,0.25) {$b$} ;
	 	\foreach \name/\x in {x/0, z/1, y/6} \node (\name) at (\x,-0.25) {$\name$};
		\node (D) at (-0.5,0) {$\D :$};
	\end{tikzpicture}

	\vspace{1cm}

	\begin{tikzpicture}[shorten >=1pt,->]
		\tikzstyle{vertex}=[circle,fill=black,minimum size=3pt,inner sep=0pt]
		\foreach \name/\x/\y in {1/0/0, 2a/1/1, 2b/1/-1, 3a/2/1, 3b/3/-1, 4a/3/1, 5/4/0, 6/5/0, 7/6/0} 
			\node[vertex] (\name) at (\x,\y) {};
		\foreach \from/\to in {1/2b, 2b/3b, 3b/5, 5/6, 6/7} \draw (\from) -- (\to);
		\foreach \from/\to in {1/2a, 2a/3a, 3a/4a, 4a/5} \draw (\from) -- (\to);
		\node () at (0.25,0.65) {$a$} ; \node () at (0.25,-0.65) {$a$} ;
		\foreach \x in {1.5, 2.5} \node () at (\x,1.25) {$b$} ;
 		\node () at (2,-0.8) {$c$} ;
		\node () at (3.7,0.7) {$b$} ;
		\node () at (3.65,-0.65) {$c$} ;
		\node () at (4.5,0.25) {$b$} ; \node () at (5.5,0.20) {$a$} ;
	 	\foreach \name/\x in {x/-0.2, z/4.1, y/6.1} \node (\name) at (\x,-0.25) {$\name$};
		\node () at (-0.5,1) {$\D' :$} ;
	\end{tikzpicture}
	\caption{Databases $\D$ and $\D'$ for Example~\ref{example-norpq}.}
	\label{figure-norpq}
\end{figure}

\noindent Since \R is monotone, \V determines \Q in a monotone way.  It can also be shown
(for instance using the decision procedure provided in~\cite{calvanese02jcss})
that no \RPQ rewriting exists.
\end{myexample}

In the previous example we have exhibited a Conjunctive Regular Path Query
rewriting. However the following example suggests that Conjunctive Regular Path Query is not expressive
enough as a rewriting language.

\begin{myexample}\label{example-nocrpq}
  Let $\sigma=\set{a}$. Let \V and \Q be defined as follows:
	\begin{itemize}
        \item $\Q = a(a^6)^* \ | \ aa(a^6)^*$
\hfill (words of length 1 or 2 modulo 6)

	\item $\V=\set{V_1,V_2}$ with
			\begin{itemize}
				\item $V_1 = a \ | \ aa$\hfill (words of length 1 or 2)
				\item $V_2 = aa \ | \ aaa$\hfill (words of length 2 or 3) 
			\end{itemize}
	\end{itemize}

       \noindent It can be verified that \V determines \Q in a monotone way as witnessed by the following
         rewriting $\R(x,y)$:
\begin{equation*}
 \exists z ~ V_1(x,z) \wedge T^*(z,y)
\end{equation*}
where $T(x,y)$ is defined as:
\begin{align*}
 \exists z_1,z_2 ~ & V_1(x,z_1) \wedge V_2(x,z_1) \wedge V_1(z_1,z_2) \wedge\\
                  &  V_2(z_1,z_2) \wedge V_1(z_2,y) \wedge V_2(z_2,y) 
\end{align*}
The query $T$ is such that if $T(x,y)$ holds in $\V(\D)$, then in $\D$ the nodes $x$ and $y$ are either
linked by a path of length~$6$ or by both a path of length~$5$ and a path of
length~$7$. This fact can be checked by a simple case analysis. One such case is
illustrated in Figure~\ref{figure-nocrpq}. In this case there is no path of
length~$6$ in $\D$, but the top path has length~$5$, and the path starting with the
bottom segment and then the last two top segments has length~$7$.

From this, a simple induction shows that if $T^*(x,y)$ holds in $\V(\D)$, then in $\D$ 
the nodes $x$ and $y$
are either linked by a path of length $0$ modulo $6$, or by both a path of length
$1$ modulo $6$ and a path of length $5$ modulo $6$. 

Assume now that $\R(x,y)$ holds in $\V(\D)$.  Then in $\D$ there exists a $z$ such that $x$ is at
distance $1$ or $2$ from $z$, and such that $T^*(z,y)$ holds in $\V(\D)$. Assume first that
$z$ and $y$ are at distance $0$ modulo $6$ in $\D$. In this case, regardless of the
distance between $x$ and $z$, $\Q(x,y)$ holds in $\D$. Otherwise, in $\D$ there exist both a
path of length $1$ modulo $6$ and a path of length $5$ modulo $6$ from $z$ to
$y$. Therefore, if $x$ and $z$ are at distance $1$, the first path from $z$ to
$y$ yields a path of length $2$ modulo $6$ and, if $x$ and $z$ are at
distance $2$, the second path from $z$ to $y$ yields a path of length $1$
modulo $6$, see Figure~\ref{figure-nocrpq2}.

Conversely, it is easy to check that $\R(x,y)$ holds in $\V(\D)$ whenever $\Q(x,y)$ holds in $\D$. This 
follows from the fact that $T(x,y)$ holds in $\V(\D)$ for all $x$ and $y$ that are at distance $6$ in $\D$.

Notice that \R is monotone. A tedious combinatorial argument can show that \R
cannot be expressed as a Conjunctive Regular Path Query.
\end{myexample}

\begin{figure}[h!]

	\centering
	\begin{tikzpicture}[shorten >=1pt,->]
	 	\tikzstyle{vertex}=[circle,fill=black,minimum size=3pt,inner sep=0pt]
	 	\foreach \name/\x in {1/0, 2/2, 3/4, 4/6} \node[vertex] (\name) at (\x,0) {};
		\foreach \from/\to in {1/2, 2/3, 3/4} \draw (\from) edge[out=60,in=120] (\to);
		\foreach \from/\to in {1/2, 2/3, 3/4} \draw (\from) edge[out=-60,in=-120] (\to);
	 	\foreach \name/\x in {x/-0.25, y/6.25} \node (\name) at (\x,-0.25) {$\name$};
	 	\foreach \name/\x in {1/2, 2/4} \node (\name) at (\x,-0.5) {$z_\name$};
		\node () at (1,0.8) {$V_1:a$};
		\node () at (3,0.8) {$V_1:aa$};
		\node () at (5,0.8) {$V_1:aa$};
		\node () at (1,-0.8) {$V_2:aaa$};
		\node () at (3,-0.8) {$V_2:aa$};
		\node () at (5,-0.8) {$V_2:aa$};
	\end{tikzpicture}

	\caption{Example~\ref{example-nocrpq}: An arbitrary database $\D$ whose view satisfies $T(x,y)$. Each arrow of the
          form $V_i: w$ from a node $u$ to a node $v$ should be understood as a path 
          from $u$ to $v$  whose label is $w$ which witnesses  $(u, v) \in V_i(D)$.}
	\label{figure-nocrpq}
\end{figure}
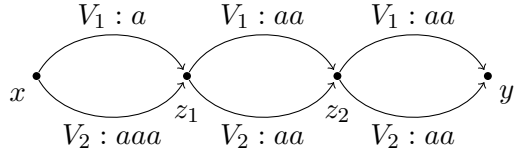

\begin{figure}[h!]

	\centering

	\begin{tikzpicture}[shorten >=1pt,->]
	 	\tikzstyle{vertex}=[circle,fill=black,minimum size=3pt,inner sep=0pt]
	 	\foreach \name/\x in {1/0, 2/2, 3/6} \node[vertex] (\name) at (\x,0) {};
		\draw (1) -- (2);
		\draw (2) -- (3);
	 	\foreach \name/\x in {x/-0.25, y/6.25} \node (\name) at (\x,-0.25) {$\name$};
	 	\node () at (2,-0.25) {$z$};
		\node () at (1,0.25) {$V_1:a \textrm{ or } a^2$};
		\node () at (4,0.25) {$T^*:(a^6)^*$};
	\end{tikzpicture}

	\vspace{1cm}

	\begin{tikzpicture}[shorten >=1pt,->]
	 	\tikzstyle{vertex}=[circle,fill=black,minimum size=3pt,inner sep=0pt]
	 	\foreach \name/\x in {1/0, 2/2, 3/6} \node[vertex] (\name) at (\x,0) {};
		\draw (1) -- (2);
		\draw (2) edge[out=60,in=120] (3);
		\draw (2) edge[out=-60,in=-120] (3);
	 	\foreach \name/\x in {x/-0.25, y/6.25} \node (\name) at (\x,-0.25) {$\name$};
	 	\node () at (2,-0.25) {$z$};
		\node () at (1,0.25) {$V_1:a$};
		\node () at (4,1.25) {$T^*:a(a^6)^*$};
		\node[color = gray] () at (4,-1.30) {$T^*:a^5(a^6)^*$};
	\end{tikzpicture}

	\vspace{1cm}

	\begin{tikzpicture}[shorten >=1pt,->]
	 	\tikzstyle{vertex}=[circle,fill=black,minimum size=3pt,inner sep=0pt]
	 	\foreach \name/\x in {1/0, 2/2, 3/6} \node[vertex] (\name) at (\x,0) {};
		\draw (1) -- (2);
		\draw (2) edge[out=60,in=120] (3);
		\draw (2) edge[out=-60,in=-120] (3);
	 	\foreach \name/\x in {x/-0.25, y/6.25} \node (\name) at (\x,-0.25) {$\name$};
	 	\node () at (2,-0.25) {$z$};
		\node () at (1,0.25) {$V_1:aa$};
		\node[color = gray] () at (4,1.25) {$T^*:a(a^6)^*$};
		\node () at (4,-1.30) {$T^*:a^5(a^6)^*$};
	\end{tikzpicture}

	\caption{The three cases of Example~\ref{example-nocrpq}. The parts that
          are not used for \Q are shaded out.}
	\label{figure-nocrpq2}
\end{figure}
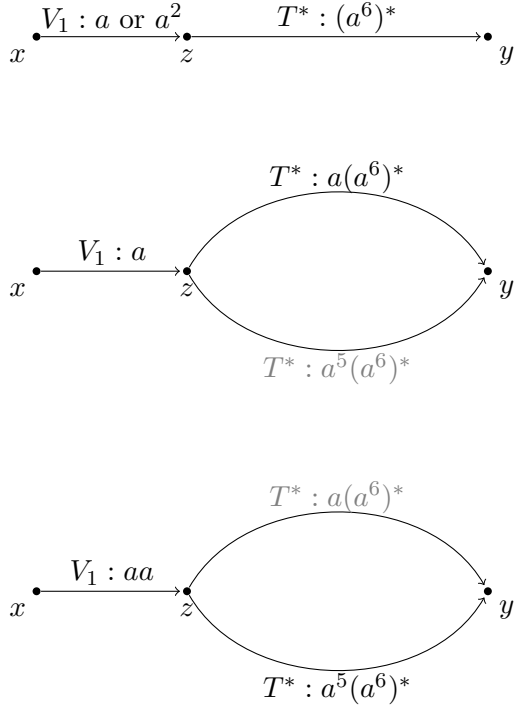

\begin{rem}\label{remark-mso}
  The careful reader has probably noticed that in both  examples above a
  rewriting can be expressed in \MSO. As we will see later, it easily follows
  from the results of~\cite{calvanese2000view} that this is always true in
  general: if \V and \Q are defined by \RPQ{}s and \V determines \Q in a
  monotone way, then there exists a rewriting of \Q using \V definable in \MSO\
   (actually universal \MSO).
\end{rem}

\section{Datalog rewriting}\label{section-datalog}
In this section we prove our main result, namely:

\begin{thm}\label{thm:datalog-rewr}
  If $\V$ and $\Q$ are \RPQs and $\V$ determines $\Q$ in a monotone way then
there exists a \Datalog rewriting of $\Q$ using $\V$.
\end{thm}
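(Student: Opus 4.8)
The plan is to leverage the connection between certain answers and constraint
satisfaction established by the ``Four Italians''. Recall from~\cite{calvanese2000view}
that one can effectively build from $\V$ and $\Q$ a finite ``template'' $\tempQV$
together with, for every $\tau$-structure $\S$ and every pair $(x,y)$ of its nodes, a
marked structure $\S_{x,y}$ (obtained from $\S$ by tagging the endpoints $x$ and $y$),
such that $(x,y)$ belongs to the certain answers $\cert(\S)$ of $\Q$ using $\V$ under the
sound view assumption if and only if there is \emph{no} homomorphism from $\S_{x,y}$ to
$\tempQV$; in other words $\cert$ is, pair by pair, the complement of $\CSP(\tempQV)$.
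Recall also that for every $l \le k$ there is a canonical \Dataloglk program $\Qlk$ for
this \CSP that is \emph{sound} for $\cert$, i.e. $\Qlk(\S) \subseteq \cert(\S)$ for every
$\S$, and maximal among \Dataloglk programs with this property; equivalently,
$(x,y) \in \Qlk(\S)$ holds iff Spoiler wins the existential $(l,k)$-pebble game from
$\S_{x,y}$ into $\tempQV$. Now, as is easy to see, monotone determinacy of $\Q$ by $\V$
gives $\cert(\V(\D)) = \Q(\D)$ for every $\D$, hence $\Qlk(\V(\D)) \subseteq \Q(\D)$ for
every $\D$ and every $l \le k$: whatever $l$ and $k$ we pick, $\Qlk$ is automatically a
\emph{sound} rewriting candidate. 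The whole difficulty is to show it is also
\emph{complete}, i.e. that $\Q(\D) \subseteq \Qlk(\V(\D))$, for one fixed, suitably
chosen pair $l \le k$.

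Completeness will follow from the following key lemma, where monotone determinacy is
crucially used: there exist $l \ge 2$ and $k$, depending only on the sizes of the
automata defining $\V$ and $\Q$ (with $k$ larger than the length of the longest
$\V$-minimal path produced in the proof of Lemma~\ref{lemma-inverse-view}), such that on
\emph{view images} the program $\Qlk$ already captures all of $\cert$; equivalently, for
every $\D$ and every pair $(x,y)$ of nodes of $\V(\D)$, if $(x,y) \in \Q(\D)$ then
Spoiler wins the existential $(l,k)$-pebble game from $(\V(\D))_{x,y}$ into $\tempQV$.
Granting this lemma one obtains $\Qlk(\V(\D)) = \cert(\V(\D)) = \Q(\D)$ for all $\D$, so
$\Qlk$ is a \Datalog rewriting of $\Q$ using $\V$, evaluable in polynomial time; and
since $\tempQV$ and $\Qlk$ are computed effectively from $\V$ and $\Q$, the rewriting is
obtained constructively.

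To prove the key lemma I would argue by contradiction: suppose $(x,y) \in \Q(\D)$ yet
Spoiler does not win, i.e. Duplicator wins the existential $(l,k)$-pebble game from
$(\V(\D))_{x,y}$ into $\tempQV$. From a winning strategy for Duplicator one manufactures
an actual \emph{finite} database $\D'$ with $\V(\D') \supseteq \V(\D)$ and
$(x,y) \notin \Q(\D')$ --- equivalently, by the defining property of $\tempQV$, a
homomorphism $(\V(\D))_{x,y} \to \tempQV$. But then monotone determinacy
($\V(\D) \subseteq \V(\D') \Rightarrow \Q(\D) \subseteq \Q(\D')$) forces
$(x,y) \in \Q(\D')$, a contradiction. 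To build $\D'$ one first replaces $\V(\D)$ by the
``thin'' preimage of Lemma~\ref{lemma-inverse-view}, consisting of a bounded number of
$\V$-minimal paths of bounded length between each pair of view nodes; since $k$ exceeds
the length of such a path, Duplicator's strategy is forced to be globally consistent
\emph{along each path}, so the elements of $\tempQV$ it assigns to the internal nodes of
the paths can be read off coherently. One then glues these local pieces together --- here
$l \ge 2$ lets one keep both endpoints of a path pebbled simultaneously --- and applies a
Ramsey/pumping argument in the style of the proof of Lemma~\ref{lemma-inverse-view} to
keep the result finite.

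The step I expect to be the main obstacle is exactly this finitization: turning the
purely ``local'' information of a won $(l,k)$-pebble game into a genuine finite preimage
$\D'$ --- that is, showing that $\CSP(\tempQV)$ has bounded width \emph{when restricted
to view images}, and pinning down how large $l$ and $k$ must be taken. This is where the
special path structure of \RPQ views (Lemma~\ref{lemma-inverse-view}), the finiteness of
$\tempQV$, and monotone determinacy all have to be combined; by contrast the remaining
ingredients --- the reduction to certain answers, the basic properties of $\Qlk$, and
the final appeal to monotone determinacy --- sit essentially on top of the machinery
of~\cite{calvanese2000view}.
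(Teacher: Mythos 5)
Your overall skeleton matches the paper's: reduce to the certain-answers/$\neg\CSP(\tempQV)$ characterization of~\cite{calvanese2000view}, observe that every canonical \Dataloglk program $\Qlk$ is automatically sound on view images under monotone determinacy, and reduce the theorem to a completeness ("exactness on view images") lemma for one fixed $(l,k)$. That lemma is also the paper's key technical statement. But your proposed proof of it has a genuine gap, and it is precisely at the step you flag as the main obstacle. You want to show directly that on an \emph{arbitrary} view image $\V(\D)$, a Duplicator win in the $(l,k)$-game yields a global homomorphism into $\tempQV$, by passing to the thin preimage of Lemma~\ref{lemma-inverse-view} and "gluing" local homomorphisms along its $\V$-minimal paths. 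Two problems: first, the $\V$-minimal paths and their internal nodes live in the $\sigma$-structure preimage, not in the $\tau$-structure $\V(\D)$ on which the pebble game is played, so "the elements of $\tempQV$ Duplicator assigns to the internal nodes of the paths" are simply not defined --- the game only ever pebbles view nodes. Second, even after correcting for this, the view nodes of $\V(\D)$ form an arbitrary graph (the \np-hardness reduction via 3-colorability in Lemma~\ref{lemma-test} shows view images can encode arbitrary graphs), so there is no order in which to sweep and glue the local pieces; bounded length of the individual $\V$-minimal paths does not give bounded width of $\CSP(\tempQV)$ restricted to view images, and the paper never proves such a local-to-global statement for general view images.

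What the paper does instead is to prove the local-to-global implication \emph{only} for view images of simple-path databases, where the linear order of the path and a syntactic equivalence relation $\sim_k$ of index bounded by the view automaton (Claim~\ref{maxeq}) let Player~1 sweep a window of size $l+1=|\tempQV|\cdot N(\V)+1$ along the path and read off a global homomorphism from Player~2's responses (Lemma~\ref{lemma-game}). The lift to arbitrary databases is then not a homomorphism-gluing argument at all: given $(u,v)\in\Q(\D)$ witnessed by a path $\pi_0$, one takes the \emph{simple} path $\pi$ with the same label, applies completeness on $\V(\pi)$, and pushes the answer forward along the homomorphism $\V(\pi)\to\V(\D)$ using the fact that \Datalog queries are preserved under homomorphisms (Proposition~\ref{prop-general-to-path}). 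This sidesteps exactly the finitization/bounded-width step your sketch leaves open. To repair your proof you would need to replace the contradiction argument for general $\D$ by this positive, homomorphism-closure argument (or find an independent proof that $\CSP(\tempQV)$ has bounded width on view images, which is not established anywhere in the paper).
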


Theorem~\ref{thm:datalog-rewr} also implies that the monotone determinacy
problem for \RPQs coincides with the problem of the existence of a \Datalog
rewriting. The latter is therefore decidable by Corollary~\ref{cor:mon-det}:

\begin{cor}
  Let $\V$ and $\Q$ be \RPQs. It is decidable,
  \expspace-complete, whether there exists a \Datalog rewriting of $\Q$ using
  $\V$.
\end{cor}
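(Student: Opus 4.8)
The plan is to exhibit the rewriting explicitly as the Datalog program $\Qlk$ of~\cite{calvanese2000view}, for suitably chosen $l\le k$ depending only on $\V$ and $\Q$. We may use three ingredients. First, by~\cite{calvanese2000view} there is a finite structure $\tempQV$, effectively computable from $\V$ and $\Q$, such that for every view instance $\S$ and every pair $(x,y)$ of nodes of $\S$, the pair $(x,y)$ is \emph{not} a certain answer to $\Q$ given $\V$ under the sound view assumption if and only if a structure canonically associated with $(\S,x,y)$ admits a homomorphism into $\tempQV$; so $\cert_{\Q,\V}$ is the complement of $\CSP(\tempQV)$. Second, building on~\cite{feder1998computational}, for each $l\le k$ there is a $\Dataloglk$ program $\Qlk$ that is sound for the certain answers and maximally contained among sound $\Dataloglk$ programs; equivalently $\Qlk$ derives $(x,y)$ on $\S$ exactly when Spoiler wins the existential pebble game (with the parameters matching those of $\Dataloglk$) from the canonical structure of $(\S,x,y)$ to $\tempQV$. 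Third, Lemma~\ref{lemma-inverse-view}, and more precisely the normal form produced in its proof: every view image $\S$ has a preimage consisting of a constant number (depending only on the view product automaton) of pairwise disjoint $\V$-minimal paths, each of constant length, between each pair of nodes of $\S$.

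The first, easy step is that monotone determinacy makes $\cert_{\Q,\V}$ a rewriting. If $\S=\V(\D)$ then $\D$ is itself a completion of $\S$, so $\cert_{\Q,\V}(\S)\subseteq\Q(\D)$; and for every completion $\D'$ of $\S$ we have $\V(\D)=\S\subseteq\V(\D')$, hence $\Q(\D)\subseteq\Q(\D')$ by the monotone determinacy inequality, so $\Q(\D)\subseteq\cert_{\Q,\V}(\S)$. Thus $\cert_{\Q,\V}(\V(\D))=\Q(\D)$ for all $\D$. Since $\cert_{\Q,\V}$ is only known to have $\conp$ data complexity, it suffices to replace it, on view images, by $\Qlk$. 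As $\Qlk$ is sound we always have $\Qlk(\S)\subseteq\cert_{\Q,\V}(\S)$, hence $\Qlk(\V(\D))\subseteq\Q(\D)$; so the whole task is to choose $l,k$ so that the reverse inclusion holds on view images, i.e. so that for every $\D$ and every $(x,y)\in\Q(\D)$ we have $(x,y)\in\Qlk(\V(\D))$.

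By the pebble-game characterization of $\Qlk$, this reduces to showing that for suitable $l,k$ the family of ``no''-instances of $\CSP(\tempQV)$ arising from view images has \emph{bounded width}: whenever $(x,y)\in\Q(\D)$, Spoiler wins the existential pebble game from the canonical structure of $(\V(\D),x,y)$ to $\tempQV$ with a constant number of pebbles. The point is that $(x,y)\in\Q(\D)$ is witnessed by a single path $\pi$ from $x$ to $y$ in $\D$ whose label is accepted by the automaton of $\Q$, and it is this accepting run, against the fact that every $\V(\D)$-edge is realized by a path of $\D$, that produces the obstruction to embedding into $\tempQV$. The witness is path-shaped, and, after normalizing $\D$ as in Lemma~\ref{lemma-inverse-view}, the ``state information'' that must be propagated along $\pi$ — the states of the view product automaton on the constantly many short $\V$-minimal segments incident to the current node, together with the run of the $\Q$-automaton — has size bounded by a constant depending only on $\V$ and $\Q$. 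Spoiler can then slide a window of constantly many pebbles along $\pi$, while the $(l{+}2)$-ary internal predicates of $\Qlk$ record the partial-homomorphism/partial-run data at the current interface; taking $l$ large enough to hold this bounded interface and $k$ large enough to perform one chase step forces $\Qlk$ to derive $(x,y)$. These $l,k$ depend only on the automata of $\V$ and $\Q$ and on the constant of Lemma~\ref{lemma-inverse-view}, hence only on $\V$ and $\Q$; since $\tempQV$ and $\Qlk$ are computable from $\V$ and $\Q$, the rewriting is computable, giving also the constructivity asserted in the paper.

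The main obstacle is precisely this bounded-width step. It is not that $\CSP(\tempQV)$ has bounded width in general; one must exploit the special shape of the ``no''-instances coming from view images, which in turn requires an explicit understanding of the construction of $\tempQV$ in~\cite{calvanese2000view} combined with the normal form of Lemma~\ref{lemma-inverse-view}. Concretely one has to (i) describe, for a normalized preimage $\D$ of $\S=\V(\D)$ and an accepting path $\pi$ for $\Q$, a Spoiler strategy using only constantly many pebbles that maintains as an invariant a consistent partial homomorphism away from $\pi$ whose restriction to the bounded interface around the current node is falsified by the $\Q$-run, and (ii) transfer this concrete strategy to $\Qlk$ via the Kolaitis--Vardi / Feder--Vardi correspondence identifying $\Qlk$ with the canonical width-$(l,k)$ program of $\CSP(\tempQV)$. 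Pinning down the quantitative bound on the interface, so that one pair of constants $l,k$ works uniformly over all databases $\D$, is the delicate part of the argument.
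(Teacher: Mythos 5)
Your proposal proves (a version of) the paper's main theorem --- that under monotone determinacy the program $\Qlk$ is a \Datalog rewriting --- but it does not prove the statement at hand, which is the \emph{decidability} (and \expspace-completeness) of the question ``does a \Datalog rewriting of \Q using \V exist?''. Constructing a rewriting under the hypothesis of monotone determinacy does not by itself yield a decision procedure. You also need (i) the converse implication, namely that the mere \emph{existence} of a \Datalog rewriting already forces \V to determine \Q in a monotone way --- this is the easy observation that \Datalog queries are preserved under homomorphisms, hence monotone, and a monotone rewriting \R satisfies $\Q(\D)=\R(\V(\D))\subseteq \R(\V(\D'))=\Q(\D')$ whenever $\V(\D)\subseteq\V(\D')$ --- and (ii) an algorithm deciding monotone determinacy. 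Point (ii) is exactly Corollary~\ref{cor:mon-det}: monotone determinacy coincides with losslessness under the sound view assumption of~\cite{calvanese2002lossless} and is \expspace-complete. Combining (i), (ii) and the main theorem, the existence of a \Datalog rewriting is \emph{equivalent} to monotone determinacy, so one decides the former by deciding the latter, and the \expspace lower bound transfers for the same reason. None of this appears in your write-up, and without it no decidability claim is established.

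As a secondary remark, your sketch of the construction itself departs from the paper at the one point you yourself flag as delicate. The paper does not bound the width by analysing the normalized preimage of Lemma~\ref{lemma-inverse-view}. Instead it proves exactness of $\datll$ on view images of \emph{simple path} databases only (via the bound $N(\V)$ on the number of $\sim_k$-classes, Claim~\ref{maxeq}, which yields the explicit value $l=|\tempQV|\cdot N(\V)$), and then lifts to arbitrary databases using nothing more than the facts that $\datll$ is closed under homomorphisms and always sound (Proposition~\ref{prop-general-to-path}). That simple-path reduction is precisely the device that resolves the ``bounded interface'' difficulty your proposal leaves open.
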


Our proof being constructive, the \Datalog rewriting can be computed from \V
and \Q.

\paragraph{Main idea and sketch of the proof}
The starting point is the relationship between rewriting and certain answers
under monotone determinacy. One can easily show that if the view determines the
query in a monotone way then the certain answers query is a rewriting. However
certain answers for \RPQ views and queries are \conp-hard to
compute~\cite{calvanese2000answering}. Here we show that there exists
another rewriting (which of course coincides with certain answers on view
images) that is expressible in \Datalog. This other rewriting is suggested by
the relationship between certain answers and Constraint Satisfaction
  Problems (\CSP). Following~\cite{calvanese2000view} we adopt here the
homomorphism point of view for \CSP{}s: Each \CSP is defined by a
structure, called \emph{the template}, and its solutions are all the structures
mapping homomorphically into the template.

Indeed~\cite{calvanese2000view} showed that, for \RPQs $\V$ and $\Q$, certain
answers can be expressed as a \CSP whose template depends only on $\V$ and
$\Q$. It is known from~\cite{feder1998computational} that for every $l$ and $k$
with $l \leq k$, and every template, there exists a \Dataloglk query
approximating the \CSP defined by this template.  Even if its \Dataloglk
``approximation'' does not compute precisely the \CSP associated to \V and \Q,
if it is exact \emph{on view images}, then it is a rewriting.
We show that if the view determines the
query in a monotone way then there is an $l$ and a $k$, depending only on \V
and \Q, such that the \Dataloglk approximation is exact \emph{on
  view images}. This proves the existence of a \Datalog rewriting.

This is done in two steps. We first show that there exists a \Datalog
approximation which is exact on view images of simple path databases.  Then
we show how to lift this result on all view images.  The first step is proved by
a careful analysis of the properties of view images of simple path
databases. The second steps exploits monotonicity.

We now provide more details.

\subsection{Monotone rewritings, certain answers and \CSP}\label{section-certain}

Let \V be a view from $\sigma$ to $\tau$ and \Q be a query on
$\sigma$-structures.  The certain answers of $\Q$ on a $\tau$-structure $\S$
w.r.t. \V are defined as
\begin{equation*}
\cert_{\Q, \V}(\S)=\bigcap_{\D \ | \
  \S \subseteq \V(\D)} \Q(\D)
\end{equation*}
This notion is usually referred to as certain answers under the \emph{sound
  view assumption} or \emph{open world assumption} in the
literature~\cite{AbiteboulD98, calvanese07tcs}.  It is straightforward to check
that if \V determines \Q in a monotone way, the query $\cert_{\Q, \V}$ is a
rewriting of \Q using \V, i.e. $\cert_{\Q, \V}(\V(\D))=\Q(\D)$ for each
$\sigma$-structure \D.

Therefore any language known to express certain answers is a suitable rewriting language under monotone determinacy. 

The following proposition, proved in~\cite{calvanese2000view}, shows that, for
\RPQ views and queries, certain answers (and therefore rewritings) can be
expressed as (the negation of) a \CSP.

\begin{prop}[\cite{calvanese2000view}]\label{prop:cert-csp}
  Let \V be an \RPQ view from $\sigma$ to $\tau$ and \Q be an \RPQ query over
  $\sigma$. There exists a $\tau$-structure $\tempQV$ having a
  set of distinguished \emph{source nodes} and a set of distinguished
  \emph{target nodes} such that, if $\V$ determines $\Q$ in a monotone way, the
  following are equivalent, for each $\sigma$-structure \D and each
  pair of nodes $u, v$ of $\D$:

\begin{enumerate}
\item 
$(u,v) \in \Q(\D)$
\item\label{cert}
$(u, v) \in \cert_{\Q,\V}(\V(\D))$
\item\label{csp} 
$\V(\D)$ has no homomorphism to $\tempQV$ sending $u$ to a source node and $v$ to a target node.\footnote{More precisely~\cite{calvanese2000view} further proved that \ref{cert}. and \ref{csp}. are 
equivalent not only for $\V(\D)$ but for all $\tau$-structures, and even without the assumption that \V determines \Q in a monotone way.}
\end{enumerate}
\end{prop}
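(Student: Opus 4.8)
The plan is to decouple the statement into two equivalences: the CSP characterisation $\ref{cert}\Leftrightarrow\ref{csp}$, which I would establish for \emph{every} $\tau$-structure \S and every pair of its nodes $u,v$ without any determinacy hypothesis (this is the strengthening announced in the footnote), and the identity $\ref{cert}\Leftrightarrow(1)$, which is precisely the observation that under monotone determinacy $\cert_{\Q,\V}$ is a rewriting. For the latter, $(2)\Rightarrow(1)$ is immediate because \D itself occurs in the intersection defining $\cert_{\Q,\V}(\V(\D))$, so $\cert_{\Q,\V}(\V(\D))\subseteq\Q(\D)$; and $(1)\Rightarrow(2)$ follows since for any $\D'$ with $\V(\D)\subseteq\V(\D')$ monotonicity gives $\Q(\D)\subseteq\Q(\D')$, so $(u,v)\in\Q(\D)$ lies in every member of the intersection. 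The real content is thus the CSP equivalence, so I concentrate on building \tempQV.

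Construction: fix a DFA $A=(P,\sigma,\delta,p_0,F)$ for $L(\Q)$ and write $\delta(S,w)=\{\delta(p,w)\mid p\in S\}$ for $S\subseteq P$. I take \tempQV to have the powerset $2^{P}$ as its nodes, with source nodes $\{S\mid p_0\in S\}$ and target nodes $\{S\mid S\cap F=\emptyset\}$, and for each $V_i\in\V$ I put a $V_i$-edge from $S$ to $S'$ exactly when $\delta(S,w)\subseteq S'$ for some $w\in L(V_i)$. Each edge relation is the projection of a regular condition, so \tempQV is finite and computable from \V and \Q. The node $S$ is meant to record the set of states $A$ can reach from $p_0$ along the label of some path from $u$; then $p_0\in S$ marks the admissible images of $u$, while $S\cap F=\emptyset$ marks the nodes to which no accepting \Q-path arrives, the admissible images of $v$.

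I would prove $\ref{csp}\Rightarrow\ref{cert}$ by contraposition. Given a witness \D to $(u,v)\notin\cert_{\Q,\V}(\S)$, i.e. $\S\subseteq\V(\D)$ and $(u,v)\notin\Q(\D)$, define $h$ on the nodes of \S by $h(z)=\{\delta(p_0,\lambda(\rho))\mid \rho \text{ a path } u\to z \text{ in }\D\}$. Then $p_0\in h(u)$, and since no $u\to v$ path has a label in $L(\Q)$ we have $h(v)\cap F=\emptyset$; moreover, for each $V_i$-edge $(x,y)$ of \S a path $x\to y$ in \D labelled $w\in L(V_i)$ witnesses $\delta(h(x),w)\subseteq h(y)$. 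Hence $h$ is a homomorphism sending $u$ to a source and $v$ to a target, contradicting \ref{csp}.

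The main obstacle is the converse $\ref{cert}\Rightarrow\ref{csp}$, again by contraposition: from a homomorphism $h\colon\S\to\tempQV$ with $h(u)$ a source and $h(v)$ a target I must produce a counterexample database. I would let \D consist of the nodes of \S together with, for each $V_i$-edge $(x,y)$ of \S, a \emph{fresh} path from $x$ to $y$ labelled by a word $w_{x,y}\in L(V_i)$ chosen so that $\delta(h(x),w_{x,y})\subseteq h(y)$ (such a word exists because $h$ preserves this edge). By construction $\S\subseteq\V(\D)$. The delicate point is $(u,v)\notin\Q(\D)$: since the internal nodes of the added paths are private to a single edge, every path between two original nodes factors as a concatenation of complete edge-paths, so any path $u\to v$ realises a sequence $u=x_0,x_1,\dots,x_k=v$ of \S-edges with label $w_{x_0,x_1}\cdots w_{x_{k-1},x_k}$. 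Starting from $p_0\in h(x_0)$ and applying $\delta(h(x_j),w_{x_j,x_{j+1}})\subseteq h(x_{j+1})$ at each step, a short induction shows the reached state lies in $h(v)$; as $h(v)\cap F=\emptyset$ it is non-final, so the label is not in $L(\Q)$. Thus no $u\to v$ path is accepting, $(u,v)\notin\Q(\D)$, and \ref{csp} fails. I expect this factorisation-plus-induction argument, together with the verification that the chosen witness words keep the reachable state sets inside the $h$-images, to be the crux; the remainder is bookkeeping.
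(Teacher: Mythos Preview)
The paper does not prove this proposition: it is attributed to~\cite{calvanese2000view} and invoked as a black box, so there is no argument in the present paper to compare your proposal against.

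For completeness: your proposal reproduces essentially the construction of~\cite{calvanese2000view}. The decomposition into $(1)\Leftrightarrow(2)$ (immediate from monotone determinacy, exactly as the paper remarks just before the proposition) and $(2)\Leftrightarrow(3)$ (the CSP characterisation, valid for arbitrary $\tau$-structures) is the right one; the powerset-of-states template with the indicated source/target sets and $V_i$-edges witnessed by words of $L(V_i)$ is the standard object; and both contrapositive arguments follow the expected lines. One technical point to watch in the direction $(2)\Rightarrow(3)$: if for some $V_i$-edge $(x,y)$ of \S with $x\neq y$ the only word $w\in L(V_i)$ satisfying $\delta(h(x),w)\subseteq h(y)$ is $\varepsilon$, then your ``fresh path from $x$ to $y$ labelled $w$'' is vacuous and does not place $(x,y)$ in $V_i(\D)$. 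This corner case is handled by a minor adjustment to the template (restrict edge witnesses to nonempty words, and add a $V_i$-self-loop at every node whenever $\varepsilon\in L(V_i)$); with that tweak the factorisation-and-induction argument goes through unchanged.
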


In the sequel, by $\neg\CSP(\tempQV)$ (resp. $\CSP(\tempQV)$) we refer to the
set of all triplets $(\S,u,v)$ such that \S is a $\tau$-structure, $u,v$ are
nodes of \S and, there is no homomorphism (resp. there is a homomorphism) from
\S to $\tempQV$ sending $u$ to a source node and $v$ to a target
node\footnote{\CSP are usually defined as boolean problems, i.e. without the
  nodes $u,v$. As \RPQ queries are binary, these parameters are necessary for
  our presentation. 
  The problem $\CSP(\tempQV)$, as defined here, can be viewed as a classical $\CSP$
  problem by extending the signature with two unary predicates, interpreted as the source and 
  the target nodes, as done in~\cite{calvanese2000view}.}.
In view of Proposition~\ref{prop:cert-csp}, if
\V determines \Q in a monotone way, $(\V(\D),u,v) \in \neg\CSP(\tempQV)$ iff
$(u,v) \in \Q(\D)$.

Observe that $\neg\CSP(\tempQV)$ naturally defines a binary query associating with each 
$\tau$-structure $\S$ the set of all pairs $(u, v)$ of nodes of $\S$ such that 
$(\S,u,v) \in \neg\CSP(\tempQV)$.
By abuse of notation, when clear from the context, we will let $\neg\CSP(\tempQV)$  also denote this binary query.

\begin{rem}
  The structure $\tempQV$ of Proposition~\ref{prop:cert-csp} can be effectively computed 
  from $\Q$  and $\V$. Moreover observe that $\CSP(\tempQV)$ can be expressed in existential 
  \MSO. This shows, as mentioned in Remark~\ref{remark-mso}, that if $\V$ and $\Q$ are \RPQs and 
  $\V$ determines $\Q$ in monotone way, then there always exists a rewriting of \Q
  using \V definable in (universal) \MSO;  moreover this rewriting can be effectively computed 
  from $\Q$ and $\V$.
\end{rem}

It is well known that the certain answers query is a rewriting that can be
computed in \conp (this follows for instance from
Proposition~\ref{prop:cert-csp}). Assuming \conp is not \ptime, $\cert_{\Q,
  \V}$ cannot always be computed in polynomial time, not even under the
assumption that $\V$ determines $\Q$ in a monotone way.  Indeed it has been
shown~\cite{calvanese2000answering} that there exists $\Q$ and $\V$ defined by
\RPQ{}s such that $\cert_{\Q, \V}$ has \conp-hard data complexity. An easy
reduction from this problem shows that the lower bound remains valid if we
further assume that \V determines \Q in a monotone way:

\begin{prop}\label{prop:cert-hard}
  There exist an \RPQ view \V and an \RPQ query \Q such that \V determines \Q
  in a monotone way and it is \conp-hard to decide -- given a $\tau$-structure
  \S and nodes $(u, v)$ of \S  -- whether $(u,v) \in
  \cert_{\Q,\V}(\S)$.
\end{prop}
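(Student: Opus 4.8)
The plan is to reduce from the problem witnessing \conp-hardness of certain answers that is already known from~\cite{calvanese2000answering}: there are \RPQ{}s $\Q_0$ and $\V_0$ such that deciding, given $(\S, u, v)$, whether $(u,v) \in \cert_{\Q_0,\V_0}(\S)$ is \conp-hard. The only thing we need to add is the monotone determinacy property, which $(\Q_0, \V_0)$ need not have. First I would modify the view and query by adding a fresh pair of letters, say $a$ and $b$, to the alphabet $\sigma$, and enrich the view with two new \RPQ views $V_a$ and $V_b$ returning respectively all pairs of nodes linked by a single $a$-edge and a single $b$-edge. Let $\V = \V_0 \cup \set{V_a, V_b}$. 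Then I would take the new query $\Q$ to be the \RPQ $a \cdot L(\Q_0) \cdot b$, i.e.\ asking for a path labeled by a word in $a \cdot L(\Q_0) \cdot b$. Intuitively, the new letters $a$ and $b$ act as ``guards'': because $V_a$ and $V_b$ completely reveal the $a$- and $b$-edges of any database, and every word accepted by $\Q$ begins with $a$ and ends with $b$, the behaviour of $\Q$ on a database is forced once one knows the $a$-edges, the $b$-edges, and the $\V_0$-view.

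The key claim is then twofold. First, $\V$ determines $\Q$ in a monotone way: if $\V(\D) \subseteq \V(\D')$ and $(x,y) \in \Q(\D)$, then there is a path in $\D$ from $x$ to some $x'$ labeled $a$, then a path from $x'$ to some $y'$ with label in $L(\Q_0)$, then a $b$-labeled edge from $y'$ to $y$. Since $V_a(\D) \subseteq V_a(\D')$ and $V_b(\D) \subseteq V_b(\D')$ we get the $a$-edge $(x,x')$ and the $b$-edge $(y',y)$ in $\D'$; and since $V_i(\D) \subseteq V_i(\D')$ for each $V_i \in \V_0$, the node pair $(x', y')$, which lies in $\cert_{\Q_0,\V_0}(\V_0(\D))$ and hence in $\Q_0(\D)$ by monotone determinacy of $(\Q_0,\V_0)$... — here lies the main obstacle: $(\Q_0, \V_0)$ from~\cite{calvanese2000answering} gives only \conp-hardness of certain answers, not that $\V_0$ \emph{determines} $\Q_0$, so I cannot directly invoke its monotone determinacy. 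The cleaner route is to not try to inherit determinacy but to design $\V$ so that it \emph{directly} determines $\Q$ by revealing enough structure: the right construction is to have $\V$ reveal every edge of $\D$ that is not labeled by letters in the ``hidden'' sub-alphabet used by $\Q_0$, and additionally include the views of $\V_0$. One then checks that $\Q(\D)$ depends only on $\V(\D)$ because any $\Q$-witnessing path decomposes into revealed prefix/suffix edges plus a middle segment whose contribution is captured by certain answers of $\Q_0$ w.r.t.\ $\V_0$ — and the sound-view inclusion gives monotonicity for free. The precise bookkeeping of which edges to reveal so that determinacy holds on the nose is the delicate point and must be matched to the specific $\Q_0, \V_0$ of~\cite{calvanese2000answering}.

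Second, the reduction is correct: for the specific shape of instances $\S$ produced in the \conp-hardness proof of certain answers for $(\Q_0, \V_0)$, I augment $\S$ to a $\tau$-structure $\S'$ by adding two fresh nodes $u', v'$ and the single edges $(u', u) \in V_a$ and $(v, v') \in V_b$, leaving all other new relations empty. A direct argument on homomorphisms into $\tempQV$ (or directly on the definition of $\cert$) shows that $(u', v') \in \cert_{\Q, \V}(\S')$ if and only if $(u, v) \in \cert_{\Q_0, \V_0}(\S)$: a database $\D$ with $\S' \subseteq \V(\D)$ must contain the guarding $a$- and $b$-edges, its restriction to the $\sigma_0$-part is a database over-approximating $\S$ in the $\V_0$-view, and conversely any such database can be extended by the two guard edges. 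Hence the \conp-hardness transfers, and since $\S'$ is computable from $\S$ in polynomial time the reduction is valid. The bulk of the write-up is this last equivalence, which is routine once the view $\V$ is fixed; the creative part is the choice of $\V$ and $\Q$ in the previous paragraph.
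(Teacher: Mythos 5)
Your proposal has a genuine gap, and you have in fact flagged it yourself: nothing in your construction establishes that the new view $\V$ determines the new query $\Q$ in a monotone way, which is half of what the proposition asserts. The guard letters $a,b$ and the views $V_a,V_b$ pin down the endpoints of a witnessing path, but the middle segment labeled by a word of $L(\Q_0)$ is still only visible through $\V_0$; since the pair $(\Q_0,\V_0)$ from~\cite{calvanese2000answering} is only known to have \conp-hard certain answers --- not to satisfy determinacy --- you cannot conclude that $\Q(\D)$ is a function of $\V(\D)$. Your fallback (``reveal every edge outside the hidden sub-alphabet'') is left as unresolved bookkeeping, and it pulls in opposite directions: revealing more of $\D$ helps determinacy but risks making the certain-answers problem tractable and destroying the hardness. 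As written, the central hypothesis of the proposition is never verified. (For what it is worth, the paper itself gives no proof either; it only asserts that an easy reduction exists.)

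There is a much simpler reduction that closes the gap. Keep $\Q=\Q_0$ and take $\V=\V_0\cup\set{V_Q}$, where $V_Q$ is the \RPQ with $L(V_Q)=L(\Q_0)$. Monotone determinacy is then immediate: $\Q(\D)=V_Q(\D)$ is itself a component of $\V(\D)$, so $\V(\D)\subseteq\V(\D')$ implies $\Q(\D)\subseteq\Q(\D')$. For hardness, given a hard instance $\S_0$ over the schema of $\V_0$, extend it to $\S$ by interpreting $V_Q$ as the empty relation. Under the sound view assumption the condition $\S\subseteq\V(\D)$ imposes no constraint coming from the empty $V_Q$-relation, so $\set{\D \mid \S\subseteq\V(\D)}=\set{\D \mid \S_0\subseteq\V_0(\D)}$ and hence $\cert_{\Q,\V}(\S)=\cert_{\Q_0,\V_0}(\S_0)$. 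This single observation replaces all of the guard-edge machinery and supplies the missing determinacy argument.
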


We show in the next section that when \V determines \Q in a monotone way there
is another rewriting expressible in \Datalog, hence computable in polynomial
time.  Before we do this we remark that the \conp complexity of $\cert_{\Q,\V}$
can be extended to Context-Free Path Query views and \RPQ queries. 

\begin{prop}
	\label{prop-certainanswersanyv}
	Let $\V$ be a Context-Free Path Query view and $\Q$ be a \RPQ. Then
	$\cert_{\Q,\V}$ can be evaluated with \conp data complexity.
\end{prop}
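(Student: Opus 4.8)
The plan is to place the complement of the problem in \np: to decide, given a $\tau$-structure \S and nodes $u,v$, whether $(u,v)\notin\cert_{\Q,\V}(\S)$. By definition this holds iff there is a \emph{counterexample}, that is, a $\sigma$-structure \D with $\S\subseteq\V(\D)$ and $(u,v)\notin\Q(\D)$. (It may happen that no counterexample can exist simply because \S has a $V_i$-edge $(a,b)$ with $a\neq b$ and $L(V_i)\subseteq\set{\varepsilon}$, or with $a=b$ and $L(V_i)=\emptyset$; then there is no \D at all with $\S\subseteq\V(\D)$ and every pair of nodes of \S is a certain answer. This is a fixed, trivially decidable condition on \S, so from now on I assume \S passes it.) The heart of the argument is a normalization lemma: \emph{if a counterexample exists, then one exists whose size is polynomial in $\size{\S}$, where the polynomial and all hidden constants depend only on \V and \Q.}

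To prove the lemma, fix a finite automaton $\mathcal B$ with state set $Z$, initial state $z_0$ and final states $F$ recognizing $L(\Q)$, and for $w\in\sigma^*$ call the \emph{behaviour} of $w$ the relation $\set{(p,q)\mid\mathcal B\textrm{ admits a run on }w\textrm{ from }p\textrm{ to }q}\subseteq Z\times Z$; there are at most $2^{|Z|^2}$ behaviours, a constant. Starting from an arbitrary counterexample $\D_0$, for each $V_i$-edge $(a,b)$ of \S fix in $\D_0$ a witnessing path $\rho$ from $a$ to $b$ with $\lambda(\rho)\in L(V_i)$, and record its behaviour $\beta$. Now build $\D_1$ from the nodes of \S by adding, for each such edge, a fresh path from $a$ to $b$ on new, pairwise distinct internal nodes, labelled by a word $w\in L(V_i)$ whose behaviour is again $\beta$. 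Such a $w$ exists, since $\lambda(\rho)$ is one, and it can be chosen of bounded length: the words of $L(V_i)$ with a prescribed behaviour form the intersection of a fixed context-free language with a fixed regular one, hence a fixed context-free language, and a nonempty context-free language contains a word of bounded length. As \S has polynomially many edges, $\size{\D_1}$ is polynomial in $\size{\S}$.

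It then remains to check that $\D_1$ is again a counterexample. That $\S\subseteq\V(\D_1)$ is immediate from the construction. For $(u,v)\notin\Q(\D_1)$, the point is that each added internal node lies on exactly one added path and has no other incident edge, so --- since $u$ and $v$ are nodes of \S and hence not internal --- every path from $u$ to $v$ in $\D_1$ is a concatenation of \emph{complete} traversals of added paths; its label is thus a concatenation $w_1\cdots w_m$ of chosen words, and $\mathcal B$ accepts it iff the relational composition of the behaviours of $w_1,\dots,w_m$ relates $z_0$ to a state in $F$. As each $w_j$ has the same behaviour as the corresponding witnessing path $\rho_j$ in $\D_0$, concatenating $\rho_1,\dots,\rho_m$ produces a path from $u$ to $v$ in $\D_0$ whose label has the same behaviour and is therefore also accepted, i.e.\ $(u,v)\in\Q(\D_0)$ --- contradicting the choice of $\D_0$. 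The lemma then yields the \np procedure: guess $\D_1$ of the bounded polynomial size, and verify in polynomial time both that $\S\subseteq\V(\D_1)$ (using that Context-Free Path Queries, like all query languages considered here, have polynomial-time data complexity) and that $(u,v)\notin\Q(\D_1)$ (an \RPQ evaluation). Hence $\cert_{\Q,\V}$ has \conp data complexity. I expect the normalization lemma to be the main obstacle --- in particular the observation that only the finitely many automaton behaviours of the view-witnessing paths matter --- since this is precisely what lets the view preimage, which by Lemma~\ref{lemma-cf-view-test} need not admit even a recursive size bound under the \emph{exact} view assumption, collapse to a polynomial-size gadget once one works under the \emph{sound} view assumption.
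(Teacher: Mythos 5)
Your proof is correct, and it takes a genuinely different route from the paper's. The paper proceeds by \emph{reduction to the \RPQ case}: it replaces each context-free view language $L(V)$ by the regular language $L_V$ of all words having the same behaviour as some word of $L(V)$ with respect to the query automaton, proves $\cert_{\Q,\V}(\S)=\cert_{\Q,\tilde\V}(\tilde\S)$ for the resulting \RPQ view $\tilde\V$, and then inherits the \conp bound from the already-established \RPQ machinery (the homomorphism test against $\tempQV$). You instead prove \np membership of the complement directly, via a polynomial-size-counterexample lemma under the sound view assumption. The core insight is the same in both arguments --- only the finitely many query-automaton behaviours of the view-witnessing paths matter, and the paper's proof of the harder inclusion of its reduction likewise swaps witnessing paths for behaviour-equivalent ones --- but the packaging differs. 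The paper's version is more modular (it transfers the whole \RPQ toolbox, including the \CSP characterization, to the context-free setting at once), whereas yours is self-contained, avoids $\tempQV$ entirely, and yields as a by-product an explicit polynomial bound on the size of a database $\D$ with $\S\subseteq\V(\D)$ refuting a candidate certain answer --- a clean counterpoint, which you rightly highlight, to the absence of any recursive bound under the exact view assumption (Lemma~\ref{lemma-cf-view-test}). Your handling of the degenerate case where no $\D$ with $\S\subseteq\V(\D)$ exists, and your observation that a nonempty intersection of a fixed context-free language with a fixed regular one contains a word of constant length, are both correct and are exactly the points one would need to check.
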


\begin{proof}
  Let $\V$ be a Context-Free Path Query view, and $\Q$ be a \RPQ over some
  schema $\sigma$. We prove that $\cert_{\Q,\V}$ can be evaluated with
  \conp data complexity by reducing it to the case of regular path
  views. Let $A = \langle S,\delta,q_0,F \rangle$ be a deterministic minimal
  automaton for $L(\Q)$. In what follows, $\delta(.,w)$ denotes the function
  from $S$ to $S$ associating to a state $p$ the state reached by $A$ when
  reading $w$ starting from $p$. For all $V \in \V$, we define the language $L_V$
 as: 
$$L_V = \set{w \in \sigma^* \ 
	| \ \exists w' \in L(V) ~~\delta(\cdot,w) = \delta(\cdot,w')}.$$

      We claim that $L_V$ is a regular language. To see this recall that for
      each function $f$ from $S$ to $S$ the language $L_f$ defined as
      $L_f=\set{w \in \sigma^* \ | \ \delta(\cdot,w)=f}$ is regular and notice
      that $L_V$ is a union of such languages. We remark here for later that $L_V$ is
      constructible as soon that it is decidable whether $L(V) \cap L_f$ is non
      empty. This is in particular the case when $L(V)$ is context-free.

We now define a new view $\tilde \V$ defined as the \RPQ view:
$$\tilde \V = \set{\tilde V \ | \ V\in \V \text{ and } L(\tilde V)=L_V}$$
Let $\S$ be a view instance
for $\V$. We define $\tilde \S$ as a copy of $\S$ where each $V$ relation is
replaced by $\tilde V$. Hence, $\tilde \S$ is a view instance for $\tilde
\V$. We now show that:
	$$\cert_{\Q,\V}(\S) = \cert_{\Q,\tilde \V}(\tilde \S)$$
        and thus $\cert_{\Q,\V}(\S)$ can be evaluated in \conp in the size of
        $\tilde \S$, which is also the size of $\S$.

	\begin{itemize}
    \item Assume that $(u,v) \in \cert_{\Q,\tilde \V}(\tilde \S)$. Hence,
      for all $\D$ such that $\tilde \V(\D) \supseteq \tilde \S$, there
      exists a path $\pi$ from $u$ to $v$ such that $\lambda(\pi) \in L(\Q)$. 
			Let $\D$ be a database such that $\V(\D) \supseteq \S$. Remark that, for
			all $V \in \V$, $L(V) \subseteq L(\tilde V)$. Hence, $\tilde \V(\D) 
			\supseteq \tilde \S$.  Hence, there exists
      a path $\pi$ in $\D$ from $u$ to $v$ such that $\lambda(\pi) \in
      L(\Q)$, which means that $(u,v) \in \cert_{\Q,\V}(\S)$.

    \item Conversely, assume that $(u,v) \notin 
			\cert_{\Q,\tilde \V}(\tilde \S)$. Hence, there exists a database 
			$\D$ such that $\tilde \V(\D) \supseteq \tilde \S$, but no path from $u$ 
			to $v$ in $\D$ satisfies $\Q$. From $\D$, we build a database $\D'$ as 
			follows:
      \begin{itemize}
        \item Start with $\D'$ as a copy of $\D$.
        \item For all $V \in \V$, for all $(x,y) \in \S$, if $(x,y) \in V$,
          then $(x,y) \in \tilde V$ in $\tilde \S$. We pick a path $\pi$ in
          $\D'$ from $x$ to $y$ of label $w'$ such that $w' \in
          L(\tilde V)$. Hence, there exists $w\in L(V)$ such that
          $\delta(\cdot,w') = \delta(\cdot,w)$. Then, we add in $\D'$ a
          simple path from $x$ to $y$ using only fresh nodes of label
          $w$. Hence $(x,y) \in V(\D')$.
      \end{itemize}
      Remark then that $\V(\D') \supseteq \S$. Let $\pi'$ be a path from
      $u$ to $v$ in $\D'$. Then $\pi'$ is of the form $\pi' =
      \pi_1\mu_1\pi_2\ldots\pi_{n-1}\mu_{n-1}\pi_n$, where each $\pi_i$ is
      a path that was originally in $\D$ and each $\mu_i$ is a new path
      using only fresh nodes. Then, for each $\mu_i$, there exists a path
      $\rho_i$ in $\D$ with the same starting and ending nodes and such
      that $\delta(\cdot,\lambda(\mu_i)) =
      \delta(\cdot,\lambda(\rho_i))$. Hence, we can define a path $\pi$ of
      $\D$ as $\pi =
      \pi_1\rho_1\pi_2\ldots\pi_{n-1}\rho_{n-1}\pi_n$. Hence,
      $\delta(\cdot,\lambda(\pi')) = \delta(\cdot,\lambda(\pi))$.
	
      Since $(u,v) \notin \cert_{\Q,\tilde \V}(\tilde \S)$, then
      $\delta(q_0,\lambda(\pi))\notin F$. Hence,
      $\delta(q_0,\lambda(\pi'))\notin F$, which proves that $(u,v) \notin
      \cert_{\Q,\V}(\S)$.\qedhere
	\end{itemize} 
\end{proof}

The proposition has the following consequence:

\begin{cor}\label{coro-certainanswersanyv}
	Let $\Q$ be a \RPQ and $\V$ be a Context-Free Path Query view such that $\V$
	determines $\Q$ in a monotone way. Then there exists a rewriting of $\Q$
	using $\V$ that can be evaluated with \conp data complexity.
\end{cor}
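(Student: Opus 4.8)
The plan is to notice that the rewriting we are after is simply the certain answers query $\cert_{\Q,\V}$, and that its data complexity has already been pinned down earlier in the paper; so I would proceed in two short steps and then combine them.

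First I would recall the observation made at the beginning of Section~\ref{section-certain}: as soon as $\V$ determines $\Q$ in a monotone way, one has $\cert_{\Q,\V}(\V(\D)) = \Q(\D)$ for every $\sigma$-structure $\D$, i.e.\ $\cert_{\Q,\V}$ is a rewriting of $\Q$ using $\V$. The argument is the familiar one: monotone determinacy gives $\Q(\D) \subseteq \Q(\D')$ whenever $\V(\D) \subseteq \V(\D')$, so $\Q(\D)$ is contained in the intersection defining $\cert_{\Q,\V}(\V(\D))$; and the reverse inclusion is witnessed by $\D$ itself, which trivially satisfies $\V(\D) \subseteq \V(\D)$. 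Crucially this argument uses nothing about the language in which $\V$ is defined, so it applies unchanged when $\V$ is a Context-Free Path Query view.

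Second I would invoke Proposition~\ref{prop-certainanswersanyv}, which already states that for a Context-Free Path Query view $\V$ and an \RPQ query $\Q$ the query $\cert_{\Q,\V}$ can be evaluated with \conp data complexity. Putting the two steps together, $\cert_{\Q,\V}$ is a rewriting of $\Q$ using $\V$ that is computable in \conp, which is exactly the statement of the corollary.

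I do not expect any real obstacle here, since both ingredients are established in the preceding material; the only point that deserves a line of care is that the reduction behind Proposition~\ref{prop-certainanswersanyv} passes from the input view instance $\S$ to a structure $\tilde\S$ of the same size, so the \conp bound obtained is genuinely in the size of the input $\S$ rather than in the size of some blown-up auxiliary structure.
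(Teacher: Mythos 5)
Your proposal is correct and is exactly the route the paper takes: the corollary is derived by combining the observation from Section~\ref{section-certain} that $\cert_{\Q,\V}$ is a rewriting under monotone determinacy (an argument independent of the view language) with the \conp bound of Proposition~\ref{prop-certainanswersanyv}. Your closing remark about the size of $\tilde\S$ matching that of $\S$ is also the point the paper makes explicitly in the proof of that proposition.
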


Notice that the proof of Proposition~\ref{prop-certainanswersanyv} and
therefore also Corollary~\ref{coro-certainanswersanyv} do not assume that the
language defining the views are context-free and work with any
language. However, in order to effectively construct the rewriting, it is
necessary that the formalism used to define the views has a decidable emptiness
test for the intersection with a regular language.

\subsection{Existence of a \Datalog rewriting}
We now show that for each \RPQ query \Q and each \RPQ view \V such that \V
determines \Q in a monotone way, there exists a \Datalog rewriting.

The existence of such a rewriting stems from links between \CSP{}s and
\Datalog. Recall from Proposition~\ref{prop:cert-csp} that if \V determines \Q
in a monotone way, $\neg\CSP(\tempQV)$, viewed as a binary query, is a rewriting of
 \Q using \V.  It
is known that to each \CSP problem (i.e. arbitrary template), one can associate
a canonical \Dataloglk program, for each $l, k$, with $l \leq k$. This program
can equivalently be described in terms of a two-player game, and can be thought
of as a maximal ``approximation'' of the complement of a \CSP problem, in a
precise sense (the interested reader is referred
to~\cite{feder1998computational} for more details).  Our main contribution
consists in proving that, for some explicit values of $l$ and $k$ (depending on
\Q and \V), this \Dataloglk approximation is ``exact'' when restricted to view
images (i.e. computes precisely $\neg\CSP(\tempQV)$), and is therefore a
rewriting over such instances.
 
We now present the $(l,k)$-two-player game of~\cite{feder1998computational},
and its correspondence with \Datalog. 

\begin{defi}[$(l,k)$-two-player game]
  Let $l,k$ be two integers, with $l\leq k$, let \S be a $\tau$-structure and
  $u,v$ be two nodes of \S. The $(l,k)$-game on $(\S,\tempQV,u,v)$ is played by two
  players as follows:
	\begin{itemize}
        \item The game begins with $A_0 = \emptyset$ and $h_0$ being the empty
          function over $A_0$.

\item[] For $i \geq 0$, round $i+1$ is defined as follows:
		\item Player~1 selects a set $A_{i+1}$ of nodes
                  of \S, with
			$|A_{i+1}| \leq k$ and $|A_i \cap A_{i+1}| \leq l$.
                      \item Player~2 responds by giving a homomorphism
                        $h_{i+1} : \S[A_{i+1}] \rightarrow \tempQV$ that coincides
                        with $h_i$ on $A_i \cap A_{i+1}$ and such that
                        $h_{i+1}(u)$ is a source node and $h_{i+1}(v)$ is a
                        target node whenever $u$ or $v$ are in $A_{i+1}$.
	\end{itemize}
	Player~1 wins if at any point Player~2 has no possible move. Player~2
        wins if she can play forever.
      \end{defi}

\newcommand\datuv[2]{Q_{#1,#2}}
\newcommand\datlk{\datuv{l}{k}}
\newcommand\datll{\datuv{l}{l+1}}

The existence of a winning strategy for Player~1 is expressible in \Datalog:

\begin{lem}[\cite{feder1998computational, calvanese2000view}]
	\label{dataloggame}
	Let $l,k$ be two integers, with $l \leq k$, and \Q and \V be an \RPQ
        query and an \RPQ view.  Then there exists a program $\datlk(x,y)$ in
        \Dataloglk such that for every graph database \S, $\datlk(\S)$ is the set
        of pairs $(u,v)$ such that Player~1 has a winning strategy for the
        $(l,k)$-two-player game on $(\S,\tempQV,u,v)$.
\end{lem}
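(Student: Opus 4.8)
The plan is to construct the \Datalog program $\datlk(x,y)$ directly from the $(l,k)$-game and verify that the standard fixpoint semantics of \Datalog exactly captures the existence of a winning strategy for Player~1. The key observation is that Player~1 wins precisely when she can drive the game, in finitely many rounds, into a configuration $(A_i, h_i)$ from which Player~2 has no legal extending move; and ``finitely many rounds'' together with the bound $|A_{i+1}| \le k$ means that every relevant game position is described by a pair $(A, h)$ with $|A| \le k$ and $h : \S[A] \to \tempQV$. Since $\tempQV$ is a fixed structure depending only on \Q and \V, the set of possible ``types'' of such positions — i.e. the isomorphism type of $\S[A]$ with a chosen $l$-element subset marked, together with the homomorphism $h$ into $\tempQV$, plus the information of whether $u$ or $v$ is among the marked nodes — is a finite set whose size depends only on $l$, $k$, $\tau$ and $\tempQV$. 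These finitely many types are exactly the internal predicates of the \Datalog program; each has arity at most $l+r$ with $r=2$ the arity of the goal, matching the \Dataloglk convention fixed in the Preliminaries.

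Concretely, I would introduce for each such type an internal relation symbol whose arguments are the (at most $l$) shared nodes together with $x,y$, and write one \Datalog rule per legal transition of the game, read \emph{backwards}: a position $(A_{i+1}, h_{i+1})$ is ``winning for Player~1'' (i.e. derivable) if \emph{either} it is immediately losing for Player~2 — meaning there is some $A' \supseteq A_{i+1}\cap A_{i+1}$ with $|A'|\le k$, $|A_{i+1}\cap A'|\le l$ such that \emph{no} homomorphism $\S[A'] \to \tempQV$ agrees with $h_{i+1}$ on the overlap and respects the source/target constraints, encoded as a rule with no internal atoms in the body (only $\tau$-atoms describing $\S[A']$ and the non-existence constraint compiled into the choice of which $\tau$-atoms appear) — \emph{or} there is a choice of $A'$ (again with the size constraints) such that \emph{every} legal response $h'$ of Player~2 on $\S[A']$ leads to an already-derivable position. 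The universal quantifier over Player~2's responses is finite (bounded by the number of homomorphisms $\S[A']\to\tempQV$, hence by $|\tempQV|^{k}$), so it unfolds into a finite conjunction; this is what forces one to quantify over the $\le k$ nodes of $A'$ in the rule body while keeping only $\le l+2$ of them in the head. The goal predicate is the type of the initial position $A_0=\emptyset$; fired with $x\mapsto u$, $y\mapsto v$ it derives $(u,v)$ iff Player~1 wins the game on $(\S,\tempQV,u,v)$.

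The correctness argument is then a two-direction induction. For soundness, a straightforward induction on the stage at which a position is first derived in the \Datalog fixpoint produces a winning strategy for Player~1 from that position: at a ``no response'' rule she simply plays the witnessing $A'$; at a ``conjunction'' rule she plays the witnessing $A'$ and, whatever Player~2 answers, the answer matches one conjunct, whose position was derived at an earlier stage, so she continues by induction. For completeness, one shows the contrapositive: if a position is \emph{not} in the fixpoint, then Player~2 has a response that keeps the game in non-derivable positions forever — this is exactly the statement that the complement of the fixpoint is closed under ``for all Player~1 moves there exists a Player~2 response staying outside'', which is immediate from the failure of every rule to fire. Since positions range over a finite set and Player~2 can loop, she plays forever.

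The main obstacle I expect is purely bookkeeping: making the encoding of game positions as bounded-arity \Datalog predicates precise while respecting the $\le k$-variables-per-rule and $\le l+2$-arity-of-internal-predicates constraints, and in particular checking that the "Player~1 has a move from which Player~2 is stuck" condition can be compiled into the \Datalog rule bodies using only $\tau$-atoms of the input structure \S (no negation is available in \Datalog, so the ``no homomorphism exists'' part must be baked into which finite family of rules we generate, rather than tested dynamically). Fortunately this is exactly the construction already carried out in~\cite{feder1998computational} for the Boolean case, and the only new ingredient here is carrying the two distinguished parameters $x,y$ (and the source/target marking on $\tempQV$) through the construction, which is routine given the convention on \Dataloglk fixed in Section~\ref{section-prelim}; so I would state the lemma, give the construction of $\datlk$, and refer to~\cite{feder1998computational, calvanese2000view} for the verification, spelling out only the handling of the parameters $u,v$.
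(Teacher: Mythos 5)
The paper does not actually prove this lemma: it is imported verbatim from \cite{feder1998computational, calvanese2000view}, so there is no in-paper argument to compare yours against. That said, your reconstruction is essentially the standard Feder--Vardi correspondence between the canonical \Dataloglk program and the existential $(l,k)$-pebble game, correctly adapted to carry the two distinguished parameters and the source/target marking of $\tempQV$, and the two-direction induction (derivation stage $\to$ strategy, non-derivability $\to$ Player~2 survives forever on a finite position space) is the right correctness argument. The one point where your sketch is not yet a proof is the arity bookkeeping you yourself flag: a position $(A',h')$ reached after Player~2's response lives on up to $k$ nodes, so it cannot itself be an IDB fact of arity $\le l+2$. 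The resolution, which you should make explicit, is that each \Datalog rule simulates one \emph{entire} round: the body mentions the $\le k+2$ nodes of $A'$ (via $\tau$-atoms describing $\S[A']$) together with IDB atoms over $\le l$-subtuples recording what Player~1 has already secured, and the head retains only the $\le l$ nodes (plus $x,y$) that persist into the next round; the finite conjunction over Player~2's responses is absorbed into the offline choice of which rules are included (a rule is admitted iff it is sound for $\tempQV$, a condition decidable by inspecting the fixed finite template). With that adjustment the IDBs genuinely have arity $\le l+2$ and the rules $\le k+2$ variables, matching the paper's convention for \Dataloglk with $r=2$, and your argument goes through; deferring the remaining verification to \cite{feder1998computational} is exactly what the paper itself does.
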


Moreover the program in the above lemma can be effectively constructed from $\tempQV$, and therefore from \Q and \V. It will be simply denoted by $\datlk$ when \Q and \V are clear from the
context.

We are now ready to state the main technical result of our paper.

\begin{prop}\label{prop:datll-rewr}
Let \V and \Q be an \RPQ view and an \RPQ query such that \V determines \Q in a
  monotone way.  There exists $l$ such that $\datll$ is a rewriting of \Q using \V.
\end{prop}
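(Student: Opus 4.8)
The plan is to establish Proposition~\ref{prop:datll-rewr} via the two-step strategy announced in the proof sketch: first handle the case of simple path databases, then lift to arbitrary databases using monotonicity. Throughout, fix the template $\tempQV$ of Proposition~\ref{prop:cert-csp}; since $\V$ determines $\Q$ in a monotone way, we know that $(u,v)\in\Q(\D)$ iff $(\V(\D),u,v)\in\neg\CSP(\tempQV)$, and moreover that $\datlk$ always under-approximates $\neg\CSP(\tempQV)$ (Player~1 winning the $(l,k)$-game certifies the absence of a homomorphism). So the whole problem is one of \emph{completeness}: we must choose $l$ (with $k=l+1$) large enough that, on every view image $\S=\V(\D)$, whenever $(\S,u,v)\in\neg\CSP(\tempQV)$, Player~1 actually has a winning strategy in the $(l,l+1)$-game on $(\S,\tempQV,u,v)$. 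Equivalently: whenever Player~2 wins the $(l,l+1)$-game on $(\V(\D),\tempQV,u,v)$, there is a genuine homomorphism witnessing $(u,v)\notin\Q(\D)$.

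The first step is to prove this for $\D$ a \emph{simple path}. Here one uses the very concrete combinatorial description of $\V(\D)$ when $\D$ is a single simple path of length $m$: the domain is linearly ordered $p_0,\dots,p_m$, and $V_i(p_s,p_t)$ holds iff the infix label $a_s\cdots a_{t-1}$ lies in $L(V_i)$, a condition controlled by the finite product automaton $A$ of the view expressions (as in Lemma~\ref{lemma-inverse-view}). A winning strategy for Player~2 in the game on $(\V(\D),\tempQV,u,v)$ produces, for every $k$-element window of $\{p_0,\dots,p_m\}$, a locally consistent partial homomorphism into $\tempQV$; the windows-move-by-at-most-$l$ constraint forces these local pieces to agree on overlaps. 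I would show that if $l$ exceeds a bound depending only on the number of states of $A$ and the size of $\tempQV$ (using a pigeonhole/periodicity argument on the path, exactly in the spirit of the Ramsey cut of Lemma~\ref{lemma-inverse-view}), then these overlapping windows can be stitched into a single global homomorphism $\V(\D)\to\tempQV$. This gives $(\V(\D),u,v)\in\CSP(\tempQV)$, hence $(u,v)\notin\Q(\D)$; since the converse inclusion is automatic, $\datll$ computes $\neg\CSP(\tempQV)=\Q$ correctly on all view images of simple paths.

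The second step lifts this to arbitrary $\D$, and this is where I expect the main obstacle. The idea is: given an arbitrary $\D$ and a pair $(u,v)$ with $(u,v)\notin\Q(\D)$ (equivalently $\datll(\V(\D))$ should \emph{not} contain $(u,v)$, i.e.\ Player~2 must win), one decomposes the relevant structure into simple paths — roughly, each fact $V_i(x,y)\in\V(\D)$ is witnessed by some path in $\D$, and one wants to reduce to the simple-path case handled above. Monotonicity enters because $\V$ of a disjoint union of simple paths is contained in $\V(\D)$ (after identifying endpoints appropriately), and monotone determinacy gives $\Q$ of the union contained in $\Q(\D)$; this lets one transfer a ``$(u,v)\notin\Q$'' witness on a simpler database back to $\D$, and dually lets Player~2 assemble her strategy on $\V(\D)$ out of her strategies on the simple-path pieces, since the $(l,l+1)$-game is, by design, \emph{local}: Player~1's window sees only $l+1$ nodes at a time, so a strategy that is correct on each path-piece and agrees on shared nodes suffices. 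The delicate point is that $\V(\D)$ need not literally decompose into the view images of simple paths — paths in $\D$ can share internal nodes, creating view edges that no single simple path produces — so one must argue that the $(l,l+1)$-game cannot exploit such "cross" edges: either they are already accounted for by Player~2's local homomorphisms, or their presence (combined with $(u,v)\notin\Q(\D)$) would force an inconsistency that one rules out using Proposition~\ref{prop:cert-csp}. Tightening $l$ to be simultaneously large enough for the path-level stitching \emph{and} robust enough for this localization argument is, I expect, the crux of the proof; the value of $l$ will ultimately be a function of $|S_\V|$ and $\size{\tempQV}$, which are themselves computable from $\V$ and $\Q$, so the resulting $\datll$ is effectively constructible, establishing Theorem~\ref{thm:datalog-rewr}.
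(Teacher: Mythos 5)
Your overall architecture (reduce to simple paths, then lift) and your treatment of the simple-path case are essentially the paper's: there, too, the bound is $l=|\tempQV|\cdot N(\V)$, obtained by a pigeonhole on the at most $N(\V)$ equivalence classes of nodes of the path that ``look the same'' to all later nodes with respect to every $V\in\V$, and Player~2's responses to a sliding window of size $l+1$ are stitched into a global homomorphism by always discarding a node that has an equivalent companion with the same image. So the first step is fine as a sketch.

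The gap is in your second step: you are lifting the wrong direction. You set out to show that if $(u,v)\notin\Q(\D)$ then Player~2 wins on $\V(\D)$, by decomposing $\D$ into simple-path pieces and reassembling her strategies. But that direction costs nothing: by Proposition~\ref{prop:cert-csp}, $(u,v)\notin\Q(\D)$ already yields a genuine homomorphism from $\V(\D)$ to $\tempQV$ sending $u$ to a source and $v$ to a target, and Player~2 wins by always answering with its restriction --- this is exactly the soundness you yourself granted in your opening paragraph, and it holds for arbitrary $\D$, not just paths. What actually needs lifting is completeness: if $(u,v)\in\Q(\D)$ for an arbitrary $\D$, Player~1 must win on $\V(\D)$. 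Your proposal never argues this, and the decomposition-and-reassembly you sketch (with its unresolved ``cross edge'' difficulty) does not address it. The paper's lift is much lighter and goes in the opposite direction: take the single path $\pi_0$ in $\D$ witnessing $(u,v)\in\Q(\D)$, unfold it into a \emph{fresh} simple path $\pi$ with the same label, apply the simple-path case to conclude that the endpoints of $\pi$ are in $\datll(\V(\pi))$, and then push this forward along the canonical homomorphism from $\V(\pi)$ to $\V(\D)$, using the fact that every \Datalog query is preserved under homomorphisms. (Monotone determinacy is needed only to guarantee that the endpoints of $\pi$ lie in the domain of $\V(\pi)$ and that the simple-path completeness statement speaks about $\Q$ rather than merely about $\neg\CSP(\tempQV)$.) Without this homomorphism-transfer step, or some substitute for it, your argument for general databases does not go through.
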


Theorem~\ref{thm:datalog-rewr} is an immediate consequence of this proposition. The
rest of this section is devoted to proving Proposition~\ref{prop:datll-rewr}.
This is done in two steps. We first prove that there exists $l$ such that
$\datll$ is a rewriting of \Q using \V, when restricted to view images of
simple path graph databases. We then show that this suffices for $\datll$ to be
a rewriting of \Q using \V.

Observe that if there is a homomorphism from a $\tau$-structure $\S$ to
$\tempQV$ sending $u$ to a source node and $v$ to a target node, then Player~2
has a winning strategy for the $(l,k)$-two-player game on $(\S,\tempQV,u,v)$. This 
strategy consists in 
always playing the restriction of the homomorphism on the set selected by
Player~1.  In this sense the program $\datlk$ is a \Dataloglk under-approximation of the $\neg\CSP(\tempQV)$ problem: if $(u,v)\in\datlk(\S)$ then
$(\S,u,v) \in \neg\CSP(\tempQV)$. If moreover $\S=\V(\D)$ for some
$\sigma$-structure \D then, by Proposition~\ref{prop:cert-csp},
$(u,v)\in\datlk(\V(\D))$ implies $(u,v)\in\Q(\D)$. We will refer to this
property by saying that $\datlk$ is always \emph{sound}.

The converse  inclusion does not necessarily hold. If $(u, v) \notin
\datlk(\S)$ then Player~2 has a winning strategy, but this only means that
she can always exhibit partial homomorphisms from \S to $\tempQV$ (sometimes called
\emph{local consistency checking}); this is in general not sufficient to
guarantee the existence of a suitable global homomorphism.

However here we are not interested in arbitrary $\tau$-structures, but only
structures of the form $\V(\D)$ for some simple path graph database \D.  We now
show that, thanks to the particular properties of these structures, local
consistency checking is sufficient to obtain a global homomorphism, for some
suitable $l$ and $k=l+1$. In other words, the program $\datll$ computes
precisely $\neg\CSP(\tempQV)$ on views of simple path graph databases.

\paragraph{The case of simple path graph databases} 

\begin{prop}\label{prop-path}
  Let \V and \Q be an \RPQ view and an \RPQ query. There exists $l$ such that for every
  simple path database \D from $u$ to $v$,
\begin{equation*}
  (u,v) \in \datll(\V(\D)) \text{ iff } (\V(\D),u,v) \in \neg\CSP(\tempQV).
\end{equation*} 
In particular if \V determines \Q in a monotone way,
\begin{equation*}
(u,v) \in \datll(\V(\D)) \text{ iff } (u,v) \in \Q(\D).
\end{equation*}
\end{prop}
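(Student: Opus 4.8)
The plan is to prove the two implications of the stated equivalence separately. The direction ``$(u,v)\in\datll(\V(\D)) \Rightarrow (\V(\D),u,v)\in\neg\CSP(\tempQV)$'' is exactly the soundness of $\datll$ already recorded above: if there were a homomorphism from $\V(\D)$ to $\tempQV$ sending $u$ to a source node and $v$ to a target node, Player~2 would win the $(l,l+1)$-game by always answering with the restriction of that homomorphism to the set selected by Player~1, so by Lemma~\ref{dataloggame} Player~1 could not win. Hence the real work is the converse: assuming there is no such homomorphism, I must exhibit, for a value of $l$ depending only on \V and \Q, a winning strategy for Player~1 in the $(l,l+1)$-game on $(\V(\D),\tempQV,u,v)$. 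The ``in particular'' statement is then immediate from Proposition~\ref{prop:cert-csp}, which identifies $\neg\CSP(\tempQV)$ with \Q on view images under monotone determinacy.

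Fix the product automaton of the minimal deterministic view automata, with $N(\V)$ states, as in the proof of Lemma~\ref{lemma-inverse-view}, and let $n$ be the number of nodes of $\tempQV$. I will take $l = N(\V)\cdot n$ and $k = l+1$. Write the simple path database as $u = p_0, p_1, \dots, p_m = v$; since the path is simple and acyclic, $V(p_a,p_b)$ holds in $\V(\D)$ exactly when $a \le b$ and the label of the segment $p_a\cdots p_b$ is accepted by the automaton of $V$, so every edge of $\V(\D)$ is ``forward''. For a node $p_a$ and a frontier index $i \ge a$, define the \emph{profile} of $p_a$ at $i$ as the pair consisting of the value currently assigned to $p_a$ by Player~2 together with the tuple of states reached by the view automata after reading $p_a\cdots p_i$ \emph{from their initial states} (the empty tuple of states when $i=a$). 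There are at most $N(\V)\cdot n$ profiles; the profile of $p_a$ at $i{+}1$ is a deterministic function of its profile at $i$ and of the label of the edge $p_ip_{i+1}$; and, crucially, whether $V(p_a,p_b)$ holds for a later node $p_b$ depends only on the profile of $p_a$ at $i$ and on the segment $p_i\cdots p_b$, so two past nodes with equal profiles impose identical demands on all later nodes.

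Player~1's strategy (call it the probe) is to sweep the frontier from $p_0$ to $p_m$, keeping in the window the current frontier node together with exactly one representative for each profile currently realised by a past node. Concretely, once Player~2 has revealed the value of the frontier node $p_i$, Player~1 moves the frontier to $p_{i+1}$, recomputes all profiles of $p_0,\dots,p_i$ at this new frontier, and coalesces representatives so that only one node per profile survives; this adds $p_{i+1}$ and removes at least $p_i$, so the move is legal for $(l,l+1)$ because $|A_{j+1}|\le 1 + N(\V)\cdot n = l+1$ and $|A_j\cap A_{j+1}|\le |A_j|-1\le l$. Suppose now, towards a contradiction, that Player~2 can answer forever against this probe; then in particular she survives until the frontier reaches $p_m$. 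Her answers at any given node agree across the contiguous block of rounds in which that node lies in the window, so they define a single map $h$ on $\{p_0,\dots,p_m\}$. I claim $h$ is a homomorphism $\V(\D)\to\tempQV$ with $h(u)$ a source node and $h(v)$ a target node. The latter two hold because $u\in A_1$ and $v$ is the frontier at some round. For a forward edge $V(p_a,p_b)$ of $\V(\D)$, look at the round whose frontier is $p_b$: if $a=b$ the induced-substructure homomorphism chosen by Player~2 already forces $V(h(p_b),h(p_b))$; if $a<b$ the window at that round contains a representative $p_{a'}$ having the same profile at $p_b$ as $p_a$, whence $h(p_{a'})=h(p_a)$ and $V(p_{a'},p_b)$ holds in $\V(\D)$, so Player~2's answer at that round yields $V(h(p_a),h(p_b))$ in $\tempQV$. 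This contradicts the assumed absence of such a homomorphism; therefore Player~2 must get stuck while the frontier is still on the path, i.e. the probe wins, so $(u,v)\in\datll(\V(\D))$ by Lemma~\ref{dataloggame}. The delicate point in this argument — the place where the specific structure of view images of simple paths is used — is pinning down the notion of profile so that equal-profile past nodes are genuinely interchangeable as witnesses for later forward edges, and then checking that the coalescing step never corrupts the running homomorphism maintained by Player~2; fitting the window into the $(l,l+1)$ budget is then just bookkeeping.
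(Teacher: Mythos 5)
Your proposal is correct and follows essentially the same route as the paper's proof: the same choice $l=N(\V)\cdot|\tempQV|$, the same pigeonhole on (automaton-state, image) profiles of past nodes relative to a sweeping frontier (the paper's $\sim_k$ classes and Claims~\ref{maxeq}--\ref{transeq}), the same sliding-window play for Player~1, and the same assembly of Player~2's local responses into a global homomorphism using that view images of simple paths have only forward edges. The only cosmetic slip is the legality justification ``removes at least $p_i$'' --- $p_i$ may well survive as the representative of its own profile --- but legality follows anyway since the representatives number at most $l$, so $|A_j\cap A_{j+1}|\le l$ and $|A_{j+1}|\le l+1$.
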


\begin{proof}
Let \V and \Q be an \RPQ view and an \RPQ query, and let $\D$ be a graph database 
consisting of a simple path from node $u$ to node $v$. Assume $u, v \in \V(\D)$.

We will show, in Lemma~\ref{lemma-game} below, that for large enough $l$, if
Player~2 has a winning strategy on the game on $(\V(\D),\tempQV, u,v)$ then we
can exhibit a homomorphism witnessing the fact that $(\V(\D), u, v) \in
\CSP(\tempQV)$. Before that we prove crucial properties of  
$\V(\D)$ which will be exploited in the sequel. For that 
we need the following simple definitions and claims.

Let $\D$ consist of the simple path $\pi=x_0a_1x_1\ldots x_{m-1}a_mx_{m}$,
with $x_0=u$ and $x_{m}=v$. Moreover let
 $\S=\V(\D)$ and let
 $A=\langle S_\V,\delta_\V,q^0_\V,F_\V \rangle$ be the product automaton of
all the deterministic minimal automata of all the regular expressions of the
\RPQ{}s in \V. Let $N(\V)$ be the number of states of $A$, i.e. $|S_\V|$.

In what follows, for $q\in S_\V$ and $w \in \sigma^*$, $\delta_\V(q,w)$
denotes the state $p \in S_\V$ such that there is a run of $A$ on $w$
starting in state $q$ and arriving in state $p$.

For every $k \leq m+1$, and every $i,j \leq k$,  we say that $x_i \sim_k x_j$ in $\V(\D)$ 
if, for all $V \in \V$, for all $r \geq k$,
\begin{equation*}
(x_i,x_r) \in V(\D) \quad \Leftrightarrow \quad (x_j,x_r) \in V(\D)
\end{equation*}

For all $k$, the relation $\sim_k$ is an equivalence relation over $\{x_i \ | \ i \leq k
\}$. We now prove the main property of $\V(\D)$, namely that the index of all
$\sim_k$ is bounded by the size of $\V$.

\begin{claim}\label{maxeq}
For all $k \leq m+1$:
	$$\Big |\{x_i \ | \ i \leq k \} / \sim_k \Big | 
	\leq N(\V)$$
\end{claim}
\begin{proof}
  To each node $x_i$ in $\pi$ with $i \leq k$, we associate a state
  $\varphi(x_i) \in S_\V$ defined as : $$\varphi(x_i) =
  \delta_\V(q_\V^0,\lambda(\pi_{i \rightarrow k}))$$ where
  $\pi_{s \rightarrow t}$ is defined as the subpath of $\pi$ that starts at
  position $s$ and ends at position $t$, that is $\pi_{s \rightarrow t} = x_{s}
  a_{s} x_{s+1} a_{s+1}\ldots a_{t-1} x_t$.

  Assume that there exist two nodes $x_i$ and $x_j$, with $i,j \leq k$, that
  have the same image in $\varphi$. It follows that:
	$$\delta_\V(q_\V^0,\lambda(\pi_{i \rightarrow k})) = 
	\delta_\V(q_\V^0,\lambda(\pi_{j \rightarrow k})) $$ Let us prove that
        $x_i \sim_k x_j$. Assume that there exist $r \geq k$ and $V\in
        \V$ such that $(x_i,x_r)\in V(\D)$.  Then
        $\delta_\V(q_\V^0,\lambda(\pi_{i \rightarrow r}))$ is final for $V$.
        Remark that $\lambda(\pi_{i \rightarrow r}) = \lambda(\pi_{i
          \rightarrow k})\lambda(\pi_{k \rightarrow r}$), from which we can
        deduce that :
	$$\delta_\V(q_\V^0,\lambda(\pi_{i \rightarrow r})) =	
	\delta_\V(\varphi(x_i),\lambda(\pi_{k \rightarrow r})) $$
	Hence,
	$$\delta_\V(q_\V^0,\lambda(\pi_{i \rightarrow r})) =	
	\delta_\V(\varphi(x_j),\lambda(\pi_{k \rightarrow r})) $$ We can now
        conclude that $\delta_\V(q_\V^0,\lambda(\pi_{j \rightarrow r}))$ is
        final for $V$, which means that $(x_j,x_r) \in V(\D)$. A symmetric argument  
        easily proves the other direction of the equivalence. Hence,
        $x_i \sim_k x_j$, and we can finally conclude that there cannot be more
        that $N(\V)$ distinct equivalence classes of $\sim_k$ over the
        nodes $\{x_i \ | \ i \leq k \}$ of $\pi$.
\end{proof}

The following easily verified property of the equivalence relations $\sim_k$ will 
also be useful:

\begin{claim}\label{transeq}
  Let $k_1,k_2 \leq m+1$, with $k_1 \leq k_2$. Let $x$ and $y$ be two elements
  of $\pi$ that occur before $x_{k_1}$.  Then $x \sim_{k_1} y$ implies $x
  \sim_{k_2} y$.
\end{claim}

We are now ready to prove the statement of the Proposition. 
Let $l = |\tempQV| \cdot N(\V)$.
We prove that $(u,v) \in \datll(\S)$ iff 
$(\S, u, v) \in \neg\CSP(\tempQV)$.
In view of the fact that $\datll$ encodes the $(l, l+1)$-two-player game in the sense of  Lemma~\ref{dataloggame}, it is enough to prove the following:

\begin{lem}\label{lemma-game}
  {\sloppy Player~2 has a winning strategy for the $\,(l,l+1)\,$-\,two-player game on
  $\,(\S,\tempQV,u,v)$ iff there is an homomorphism from \S to
  $\tempQV$ sending $u$ to a source node and $v$ to a target node.}
\end{lem}
\begin{proof}
The right-left direction is obvious. If there is a suitable homomorphism $h : \S
\rightarrow \tempQV$, then Player~2 has a winning strategy which consists in
playing according to $h$.

Conversely, assume that Player~2 has a winning strategy for the
$(l,l+1)$-two-player game on $(\S,\tempQV, u, v)$. Let $\{s_1,s_2,\ldots,s_r\}$ be an
ordering of the elements of \S, according to the order on $\pi$, that is, in
such a way that $\forall j\leq k$, $s_j$ occurs before $s_k$ in $\pi$. 
Clearly $s_1 =u$ and $s_r=v$. If $r\leq l +1$, Player~1 can select all elements of \S in a single round, and then
Player~2 has to provide a full homomorphism from \S to $\tempQV$,
which concludes the proof.

Assume $r > l+1$. For ease of notations, we will number rounds starting from
$l+1$. This can be seen just as a technicality, or equivalently as Player~1
selecting the empty set for the first $l$ rounds.  Since Player~2 has a winning
strategy, she has, in particular, a winning response against the following play
of Player~1 :
	\begin{itemize}
        \item On round $l+1$, Player~1 plays $A_{l+1} =
          \{s_1,\ldots,s_{l+1}\}$. Player~2 has to respond with a partial
          homomorphism $h_{l+1}$, which she can do, since she has a winning
          strategy.
        \item Assume that, on round $i$, $A_i$ is of size $l+1$ and its element
          of biggest index is $s_{i}$  (as it is the case on round $l+1$). 
          Given the choice of $l$, the set $A_i$ is sufficiently ``big", that is
          by Claim~\ref{maxeq}, there exist two elements $s_j,s_k \in
          A_i$ such that $s_j \sim_{i} s_k$, and $h_i(s_j) = h_i(s_k)$. On
          round $i+1$, Player~1 picks $A_{i+1} = (A_{i} - \{s_j\}) \cup
          \{s_{i+1}\}$. This choice maintains that $A_{i+1}$ is of size
          $l+1$ and that its element of
          biggest index is $s_{i+1}$. Once again, Player~2 has to respond with
          a partial homomorphism $h_{i+1}$, which she can do.
		\item Following this play, on round $r$, $A_{r}$ contains $s_r$, the element of biggest index in \S.
			From now on, we no longer care about Player~1's move, that is, we arbitrarily set $A_i = \emptyset$ for all 
			$i > r$.
	\end{itemize}
	We can now define $h$ as follows :
	$$h(s_i) =
		\left\{
		\begin{array}{ll}
			h_{l+1}(s_i)  & \mbox{if } i \leq l+1 \\
			h_{i}(s_i) & \mbox{if } l+1 < i \leq r
		\end{array}
		\right.
	$$
	Observe that, by definition, the mapping $h$ sends $u$ to a source node and $v$ to a target node
	(since so do all the $h_i$'s used in the game). It remains to prove that 
	$h$ is an homomorphism from \S to $\tempQV$. We prove by
	induction on $i \geq l+1$ that : 
	\begin{enumerate}[label=$(H_{\arabic*})$]
        \item[$(H_1)$] $h$ is a homomorphism from $\S[\{s_1,\ldots,s_i\}]$
          to $\tempQV$.
        \item[$(H_2)$] $h$ coincides with $h_i$ on $A_i$.
        \item[$(H_3)$] for all $j \leq i$, there exists $s \in A_i$ such that
          $s_j \sim_i s$ and $h(s_j) = h(s)$.
	\end{enumerate}

	\textbf{Base case :} For $i = l+1$, the mapping $h$ coincides by definition with
        $h_{l+1}$ on $\{s_1,\ldots,s_{l+1}\}$. Hence, $(H_1)$ and $(H_3)$
        follow easily.

	\textbf{Inductive case :} Assume that there exists $i$ with $l+1 \leq i < r$
        such that $(H_1)$,$(H_2)$ and $(H_3)$ holds for $i$; we prove them for 
        $i+1$.

	\begin{enumerate}[label=$(H_{\arabic*})$]
        \item[$(H_2)$]Let $s \in A_{i+1}$. If $s = s_{i+1}$, then, by
          definition, $h(s_{i+1}) = h_{i+1}(s_{i+1})$. Otherwise, $s\in A_i
          \cap A_{i+1}$. $(H_2)$ for $i$ implies that $h(s) = h_i(s)$, and the
          definition of $h_{i+1}$ thus yields $h_{i+1}(s) = h_i(s) =
          h(s)$. Hence, $(H_2)$ holds for $i+1$.

        \item[$(H_3)$]Let $j \leq i+1$. If $j = i+1$, then $s_j \in A_{i+1}$,
          and the result is obvious.  Otherwise, $(H_3)$ for $i$ implies that
          there exists $s \in A_i$ such that $s_j \sim_i s$ and $h(s_j) =
          h(s)$.  From Claim~\ref{transeq}, we deduce that $s_j \sim_{i+1}
          s$. If $s\in A_{i+1}$, there is nothing more to prove.  Otherwise, it
          means that $s$ is exactly the element that was removed from $A_i$ on
          round $i+1$, which means that there exists another element $s' \in
          A_{i}\cap A_{i+1}$ such that $s \sim_{i} s'$ and $h_i(s) = h_i(s')$.
          Then Claim~\ref{transeq} and $(H_2)$ imply that $s_j \sim_{i+1} s'$
          and $h(s_j) = h(s')$. Hence $(H_3)$ holds for $i+1$.

        \item[$(H_1)$]By definition, $h$ already preserves any self-loop. 
          Moreover, $(H_1)$ for $i$ implies that $h$
          is a homomorphism from $\S[\{s_1,\ldots,s_i\}]$ to
          $\tempQV$. Hence, any edge between two elements of
          $\{s_1,\ldots,s_i\}$ in $\S$ is already preserved by $h$. Let $s_j \in
          \{s_1,\ldots,s_i\}$.  Remark that, since $\pi$ is a simple path,
          there are no edges from 
          $s_{i+1}$ to $s_j$ in $\S$. Thus, we just have to prove that all edges from
          $s_j$ to $s_{i+1}$ are preserved by $h$.

          $(H_3)$ for $i+1$ implies that there exists an element $s\in A_{i+1}$
          such that $s_j \sim_{i+1} s$ and $h(s_j) = h(s)$. Since $h_{i+1}$ is
          a homomorphism on $\S[A_{i+1}]$, it preserves all edges
          from $s$ to $s_{i+1}$. Moreover, $(H_2)$ for $i+1$ implies that $h$
          and $h_{i+1}$ coincide on $A_{i+1}$, which means that $h$ preserves
          all edges from $s$ to $s_{i+1}$. Finally, the definition of
          $\sim_{i+1}$ implies that $s_j$ and $s$ have the same edges to
          $s_{i+1}$. Hence, $h$ preserves all edges from $s_j$ to $s_{i+1}$.
        \end{enumerate}
	Finally, $(H_1)$ applied for $r$ proves that $h$ is indeed a
        homomorphism from \S to $\tempQV$.
        This completes the proof of Lemma~\ref{lemma-game}.
      \end{proof} 
      Now assume  \V determines \Q in a monotone way, then from Proposition~\ref{prop:cert-csp} 
      it immediately follows that $(u,v) \in \datll(\V(\D))$ iff $(u,v) \in \Q(\D)$.
      This completes the proof of Proposition~\ref{prop-path}.
\end{proof}
    
    \paragraph{From simple paths  to arbitrary graph databases}

    Proposition~\ref{prop-path} shows that if \Q determines \V in a monotone
    way then $\datll$ is a rewriting of \Q using \V, when restricted to simple path
    databases. It remains to lift this result to arbitrary graph databases.  In
    a sense, the following result shows that the general case can always be
    reduced to the simple path case.

\begin{prop}\label{prop-general-to-path}
  Let \V and \Q be an \RPQ view and an \RPQ query such that \V determines \Q in a
  monotone way.  Assume $\mathcal{P}$ is a query of schema $\tau$ such that:

	\begin{enumerate}
        \item\label{item-monotone} $\mathcal{P}$ is closed under homomorphisms: for all
          databases $\S,\S'$, and all pair of elements $(u,v)$ of
          \S, if $(u,v) \in \mathcal{P}(\S)$ and there exists a homomorphism
          $h:\S \rightarrow \S'$ then $(h(u),h(v)) \in \mathcal{P}(\S')$.
			
        \item\label{item-complete}  $\mathcal{P}$ is sound and complete for all simple path
          databases: 
          for all simple path databases $\D$ from $u$ to $v$  such that $u$ and $v$ are in the    
          domain of $\V(\D)$, we have $(u,v) \in \mathcal{P}(\V(\D))$ iff
          $(u,v)\in \Q(\D)$.
			
        \item\label{item-sound} $\mathcal{P}$ is always sound: for all graph databases \D and
          elements $u$ and $v$ of $\V(\D)$, if $(u,v) \in \mathcal{P}(\V(\D))$
          then $(u,v) \in \Q(\D)$.
	\end{enumerate}
	Then $\mathcal{P}$ is a rewriting of \Q using \V.
      \end{prop}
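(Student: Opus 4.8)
The plan is to prove the two inclusions $\mathcal{P}(\V(\D))\subseteq\Q(\D)$ and $\Q(\D)\subseteq\mathcal{P}(\V(\D))$ for every $\sigma$-structure $\D$, which together say exactly that $\mathcal{P}$ is a rewriting. The first inclusion is immediate from hypothesis~\ref{item-sound}: every pair in $\mathcal{P}(\V(\D))$ consists of elements of $\V(\D)$ and is therefore already known to lie in $\Q(\D)$. So the real content is completeness on \emph{arbitrary} view images, given that we only know $\mathcal{P}$ is complete on view images of \emph{simple path} databases (hypothesis~\ref{item-complete}). The idea is that a witness for $(u,v)\in\Q(\D)$ is a single path, and a single path is the homomorphic image of a simple path database, so we can transport the simple-path case along that homomorphism using the closure property~\ref{item-monotone}.

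Concretely, assume $(u,v)\in\Q(\D)$. Since \Q is an \RPQ there is a path $\pi=x_0a_1x_1\cdots a_mx_m$ in \D with $x_0=u$, $x_m=v$ and $\lambda(\pi)\in L(\Q)$. I would unfold $\pi$ into a genuine simple path database $\D_\pi$ on fresh nodes $y_0,\dots,y_m$ with an $a_i$-edge from $y_{i-1}$ to $y_i$, and consider the map $h$ defined by $h(y_i)=x_i$. By construction $h$ is a homomorphism of $\sigma$-structures, the unique path of $\D_\pi$ still has label $\lambda(\pi)$ so $(y_0,y_m)\in\Q(\D_\pi)$, and (a small point to be checked) $y_0$ and $y_m$ lie in the domain of $\V(\D_\pi)$. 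The key observation is then that $h$ lifts to a homomorphism $\widehat h:\V(\D_\pi)\to\V(\D)$ acting the same way on nodes: any path of $\D_\pi$ witnessing $(y_i,y_j)\in V(\D_\pi)$ is sent by $h$ to a path of \D with the same label witnessing $(x_i,x_j)\in V(\D)$ — this is just the standard fact that \RPQ{}s, being monotone, are preserved under homomorphisms.

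With these ingredients the hypotheses chain cleanly. Hypothesis~\ref{item-complete} applied to the simple path database $\D_\pi$ turns $(y_0,y_m)\in\Q(\D_\pi)$ into $(y_0,y_m)\in\mathcal{P}(\V(\D_\pi))$; hypothesis~\ref{item-monotone} applied to $\widehat h$ then gives $(h(y_0),h(y_m))=(u,v)\in\mathcal{P}(\V(\D))$. Combining with the soundness inclusion yields $\mathcal{P}(\V(\D))=\Q(\D)$ for all \D, i.e. $\mathcal{P}$ is a rewriting of \Q using \V.

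I expect the only delicate point to be bookkeeping rather than depth: one must set things up so that the endpoints $y_0,y_m$ of the unfolded path are in the domain of $\V(\D_\pi)$, as otherwise hypothesis~\ref{item-complete} does not literally apply. This is where the assumption that \V determines \Q is used: if, say, $y_0$ were not in the domain of $\V(\D_\pi)$, then deleting $y_0$ from $\D_\pi$ would leave $\V(\D_\pi)$ unchanged while removing $(y_0,y_m)$ from the answer to \Q, contradicting determinacy; hence $y_0$ (and symmetrically $y_m$) must be in the domain of $\V(\D_\pi)$. Everything else — the lifting of $h$ to the view level and the chaining of the three hypotheses — is routine.
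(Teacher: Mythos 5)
Your proposal is correct and follows essentially the same route as the paper: unfold the witnessing path into a fresh simple path database, apply completeness there, and transport the answer back along the induced homomorphism of view images using closure under homomorphisms, with soundness giving the converse inclusion. The one point you flag as delicate — that the endpoints of the unfolded path lie in the domain of its view image — is asserted without proof in the paper (attributed to monotone determinacy), and your deletion argument via determinacy is a valid way to justify it.
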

      
\begin{proof}
  Let \D be a database, and $(u,v)$ be a pair of elements of $\V(\D)$,
  such that $(u,v) \in \Q(\D)$.  Then there exists in \D a path $\pi_0$ from $u$
  to $v$, such that $\lambda(\pi_0) \in L(\Q)$.
	
  Consider the simple path $\pi = x_0 a_0 x_1 \ldots x_m a_m x_{m+1}$ defined
  such that $\lambda(\pi) = \lambda(\pi_0)$.  Since $\V$ determines $\Q$ in a
  monotone way and $\lambda(\pi) \in L(\Q)$, then $x_0$ and $x_{m+1}$ are in
  the domain of $\V(\pi)$, and $(x_0,x_{m+1})\in \Q(\pi)$.
  Hence,~\eqref{item-complete} implies that $(x_0,x_{m+1})\in
  \mathcal{P}(\V(\pi))$.
	
  Additionally, it is clear that there exists a homomorphism $h$ from $\pi$ to
  \D with $h(x_0) = u$ and $h(x_{m+1}) = v$. Observe that $h$ extends to the views of
  $\pi$ and $\D$, that is $h$ is an homomorphism from $\V(\pi)$ to $\V(\D)$,
  and~\eqref{item-monotone} thus implies that $(u,v)\in \mathcal{P}(\V(\D))$.
	
	The other direction is immediately given by~\eqref{item-sound}.
\end{proof}

\noindent We now have all the elements to prove Proposition~\ref{prop:datll-rewr}.
Let \V and \Q be an \RPQ view and an \RPQ query such that \V determines \Q in a
monotone way.  
By Proposition~\ref{prop-path} there exists $l$ such that $\datll$ is sound and complete  
over simple path databases. 
Moreover each \Datalog  query is preserved under homomorphisms, and we have already observed that all $\datlk$ are always sound.
It then follows from Proposition~\ref{prop-general-to-path} that there exists $l$ such that $\datll$ is a rewriting of \Q using \V.
This proves Proposition~\ref{prop:datll-rewr} and therefore Theorem~\ref{thm:datalog-rewr}.
 
\section{Conclusions}

We have seen that if an \RPQ view \V determines an \RPQ query \Q in a monotone way
then a \Datalog rewriting can be computed from \V and \Q.
As a corollary it is decidable whether there exists a \Datalog rewriting to
an \RPQ query using \RPQ views.

These results extends to 2-way-\RPQ. A 2-way-\RPQ is defined using a regular
expression over the alphabet $\sigma\cup\bar\sigma$. It asks for pairs of nodes
linked by a 2-way-path using the symbol $a$ for traversing an edge of label $a$
in the direction of the arrow, and the symbol $\bar a$ for backward traversing an edge of
label $a$. This query language has been
studied in~\cite{calvanese07tcs}. In particular~\cite{calvanese07tcs} gives an extension of
Corollary~\ref{cor:mon-det} and of Proposition~\ref{prop:cert-csp} for
2-way-\RPQ. Building from these two results it is possible to extend the
results of Section~\ref{section-datalog} to 2-way-\RPQ{}s. The details are more
complicated and omitted here, but the general idea is the same.

We may wonder whether a simpler query language than \Datalog could suffice to express monotone rewritings of \RPQ queries using \RPQ views.
For instance all examples we are aware of use only the transitive closure of binary
Conjunctive Regular Path Queries. It is then natural to ask whether linear \Datalog (where at most
one internal predicate may occur in the body of each rule), using internal
predicates of arity at most 2, can express all monotone rewritings. We leave
this interesting question for future work.

Finally we conclude by mentioning that we don't know yet whether the monotone
determinacy problem for Conjunctive Regular Path Query is decidable. Likewise,
deciding whether an \RPQ view determines an \RPQ query, without the
monotonicity assumption, is still an open problem.

\bibliographystyle{plain}
\bibliography{biblio}

\end{document}